\documentclass[sigplan,nonacm,screen]{acmart}
\AtBeginDocument{%
  }


\settopmatter{printfolios=true} 




\usepackage[ruled, vlined, linesnumbered]{algorithm2e}
\usepackage{graphicx}
\usepackage{textcomp}
\usepackage[table]{xcolor}
\usepackage{booktabs}
\usepackage{hyperref}
\usepackage{makecell}
\usepackage{adjustbox}
\usepackage{cleveref}
\usepackage{multirow}
\usepackage{soul}
\usepackage{lipsum}
\usepackage{pifont}
\usepackage{balance}
\usepackage{cuted}
\usepackage{amsmath,amsfonts}
\usepackage[most]{tcolorbox} 
\usepackage{colortbl}      
\usepackage{subcaption}
\usepackage{tikz}
\usetikzlibrary{quantikz2,shapes.geometric, arrows, positioning, fit, calc}
\usepackage{stfloats}

\definecolor{lightgreen}{RGB}{230, 255, 230}



\usetikzlibrary{positioning}       

\tikzset{
    block/.style={draw, rectangle, minimum height=2em, minimum width=4em, text width=3cm, align=center},
    line/.style={draw, -latex}    
}

\newtheorem{thm}{Theorem}\crefname{thm}{Theorem}{Theorems}
\newtheorem{lem}[thm]{Lemma}\crefname{lem}{Lemma}{Lemmas}
\crefname{prp}{Proposition}{Propositions}
\newtheorem{cor}[thm]{Corollary}\crefname{cor}{Corollary}{Corollaries}
\crefname{prb}{Problem}{Problems}
\crefname{dfn}{Definition}{Definitions}
\crefname{conj}{Conjecture}{Conjectures}

\newcommand{\tarc}{\mbox{\large$\frown$}}
\newcommand{\arc}[2][-3ex]{{#2}{\kern #1{\raisebox{1.5ex}{\tarc}}}}

\definecolor{CustomOrange}{RGB}{200, 120, 50} 
\definecolor{DarkTeal}{RGB}{0, 105, 120}     
\definecolor{DarkBlue}{RGB}{0, 75, 135}      
\definecolor{DarkOlive}{RGB}{85, 107, 47}    
\definecolor{DarkMaroon}{RGB}{94, 38, 45}    
\definecolor{SoftOrange}{RGB}{255, 90, 70}  

\definecolor{WarmOrange}{RGB}{230, 130, 60}  
\definecolor{SkyBlue}{RGB}{100, 180, 220}

\definecolor{TextHighlightBlue}{RGB}{72,120,208}

\newcommand\note[1]{#1}

\newcommand{\dquote}[1]{``#1''}

\newcommand{\SQiSW}{\mathrm{SQ\lowercase{i}SW}}
\newcommand{\iSWAP}{\mathrm{\lowercase{i}SWAP}}
\newcommand{\CNOT}{\mathrm{CNOT}}
\newcommand{\CZ}{\mathrm{CZ}}

\newcommand{\ZZ}{\mathrm{ZZ}}

\newcommand{\SWAP}{\mathrm{SWAP}}

\newcommand{\B}{\mathrm{B}}
\newcommand{\SU}{\mathrm{SU}}
\newcommand{\Can}{\mathrm{Can}}
\newcommand{\fSim}{\mathrm{fSim}}
\newcommand{\CanGate}[3]{\mathrm{Can}(#1,#2,#3)}
\newcommand{\UGate}[3]{\mathrm{U_3}(#1,#2,#3)}
\newcommand{\EAplus}{{EA$_+$}}
\newcommand{\EAminus}{{EA$_-$}}

\newcommand{\CompilerEff}{ReQISC-Eff}
\newcommand{\CompilerFull}{ReQISC-Full}

\makeatletter
\newcommand{\removelatexerror}{\let\@latex@error\@gobble}
\makeatother

\newcommand{\code}{\texttt}
\newcommand{\numTwoQubit}{\#2Q}
\newcommand{\depthTwoQubit}{Depth2Q}



\SetAlFnt{\small}

\definecolor{ElegantGrayBack}{RGB}{242, 242, 242} 
\definecolor{ElegantGrayFrame}{RGB}{105, 105, 105} 

\newtcolorbox{takeaways}[1][]{
    colback=ElegantGrayBack, 
    colframe=ElegantGrayFrame, 
    fonttitle={\bfseries},
    title={Takeaways},       
    arc=1.5mm,                 
    boxrule=0.8pt,           
    left=5pt, right=5pt,     
    top=3pt, bottom=3pt,     
    enhanced,                
    breakable,               
    #1                       
}

\begin{document}


\title{Reconfigurable Quantum Instruction Set Computers for High Performance Attainable on Hardware}


\author{Zhaohui Yang}
\orcid{0000-0003-4698-4378}
\affiliation{%
  \institution{The Hong Kong University of\\ Science and Technology}
  \city{Hong Kong}
  \country{}
}
\email{zhaohui@ucsb.edu}

\author{Dawei Ding}
\orcid{0000-0001-7728-5380}
\affiliation{
  \institution{Fudan University}
  \institution{Shanghai Institute for Mathematics and Interdisciplinary Sciences}
  \city{Shanghai}
  \country{China}
  \country{}
}
\email{daweiding@fudan.edu.cn}

\author{Qi Ye}
\orcid{0009-0002-5606-8824}
\affiliation{
  \institution{Tsinghua University}
  \city{Beijing}
  \country{China}
}
\email{yeq22@mails.tsinghua.edu.cn}

\author{Cupjin Huang}
\orcid{0000-0002-7466-8033}
\affiliation{
  \institution{DAMO Academy, Alibaba Group}
  \city{Bellevue, WA}
  \country{USA}
}
\email{pertox4726@gmail.com}

\author{Jianxin Chen}
\orcid{0000-0002-9365-776X}
\authornote{Corresponding author. Email: \href{mailto:chenjianxin@tsinghua.edu.cn}{chenjianxin@tsinghua.edu.cn}.}
\affiliation{
  \institution{Tsinghua University}
  \city{Beijing}
  \country{China}
}
\email{chenjianxin@tsinghua.edu.cn}

\author{Yuan Xie}
\orcid{0000-0003-2093-1788}
\affiliation{
  \institution{The Hong Kong University of\\ Science and Technology}
  \city{Hong Kong}
  \country{}
}
\email{yuanxie@ust.hk}

\renewcommand{\shortauthors}{Zhaohui Yang et al.}

\begin{abstract}
  Despite remarkable milestones in quantum computing, the performance of current quantum hardware remains limited. One critical path to higher performance is to expand the quantum ISA with basis gates that have higher fidelity and greater synthesis capabilities than the standard $\mathrm{CNOT}$. However, this substantially increases gate calibration overhead and introduces challenges in compiler optimization. Consequently, although more expressive ISAs (even complex, continuous gate sets) have been proposed, they still remain primarily proofs-of-concept and have not been widely adopted.

  To move beyond these hurdles and unlock the performance gains offered by expressive continuous ISAs, we introduce the concept of ``reconfigurable quantum instruction set computers'' (ReQISC). It incorporates (1) a unified microarchitecture capable of directly implementing arbitrary 2Q gates equivalently, i.e., $\mathrm{SU}(4)$ modulo 1Q rotations, with theoretically optimal gate durations given any 2Q coupling Hamiltonian and (2) a compilation framework tailored to ReQISC primitives for end-to-end synthesis and optimization, comprising a program-aware pass that refines high-level representations, a program-agnostic pass for aggressive circuit-level optimization, and an $\mathrm{SU}(4)$-aware routing pass that minimizes hardware mapping overhead. 

  We detail the hardware implementation to demonstrate the feasibility of this superior gate scheme in terms of both pulse control and calibration.
  By leveraging the expressivity of $\mathrm{SU}(4)$ and the time minimality realized by the underlying microarchitecture, the $\mathrm{SU}(4)$-based ISA achieves remarkable performance, with a $4.97$-fold reduction in average pulse duration to implement arbitrary 2Q gates, compared to the usual $\mathrm{CNOT}/\mathrm{CZ}$ scheme on mainstream flux-tunable transmons. Supported by the end-to-end compiler, ReQISC outperforms the conventional $\mathrm{CNOT}$-based ISA, state-of-the-art compiler, and pulse implementation counterparts by significantly reducing 2Q gate count, circuit depth, pulse duration, qubit mapping overhead, and program fidelity losses. For the first time, ReQISC makes the theoretical benefits of continuous ISAs practically feasible.

\end{abstract}

\maketitle

\section{Introduction}\label{sec:Introduction}



\begin{figure}[tbp]
    \centering    
    \includegraphics[width=\columnwidth]{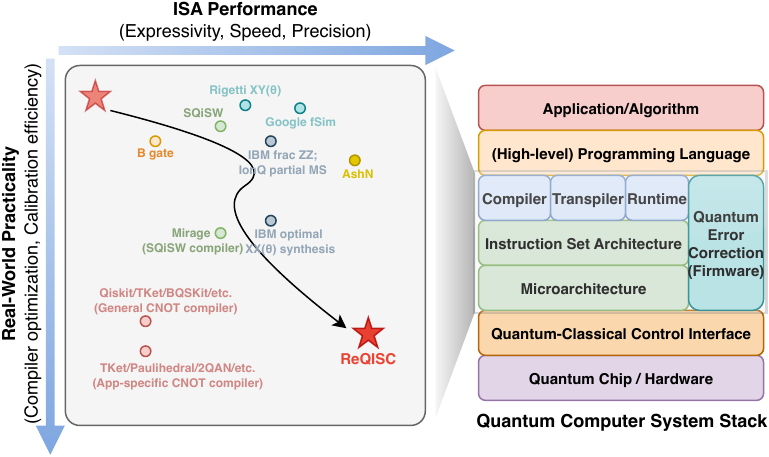}
    \caption{Advancing architectural support for practical quantum advantage.}

    \label{fig:motivation}
\end{figure}

Quantum hardware has advanced rapidly over the past two decades, achieving remarkable milestones including quantum supremacy~\cite{arute2019quantum}, quantum utility~\cite{kim2023evidence}, and achieving error rates below the fault-tolerant threshold~\cite{acharya2024quantum}. Today's quantum processors come with detailed device specifications such as native gate sets and pulse-level controls, accessible through a cloud interface or a complete system delivery. However, hardware noise presents critical challenges for both noisy intermediate-scale quantum (NISQ)~\cite{preskill2018quantum} and fault-tolerant application-scale quantum (FASQ)~\cite{preskill2025beyond} devices, significantly limiting executable program scales and impacting computational reliability~\cite{arute2019quantum,kim2023evidence,acharya2024quantum}.



To pursue a practical quantum advantage, many sophisticated architectural supports have been proposed by exploring advanced instruction set architectures (ISA) beyond the conventional $\CNOT$-based paradigm (\Cref{fig:motivation}).
For specific applications, novel 2Q quantum gates like Google's Sycamore from the $\fSim$ family offer higher fidelity and are harder to simulate classically~\cite{arute2019quantum}, while gates like $\SQiSW$ aim to improve fidelity and synthesis efficiency~\cite{huang2023quantum}. Physical platforms developed by Google, Rigetti, IonQ, Quantinuum, and IBM have also demonstrated continuous gate sets~\cite{foxen2020demonstrating,abrams2020implementation,ionqPartialGates,ibmFractionalGates,quantinuumArbitraryAngleGates},
which appear to yield improvements in fidelity~\cite{wright2019benchmarking,nam2020ground}. At the frontier, recent works have proposed using the entire $\mathbf{SU}(4)$ group as the most expressive ISA, achieving optimal-duration gates on flux-tunable transmons with $99.37\%$ average fidelity in experiments~\cite{chen2024one,chen2025efficient}.




Adopting more expressive quantum ISAs offers significant theoretical benefits, most notably a reduction in the 2Q gate count required for circuit synthesis~\cite{abrams2020implementation,foxen2020demonstrating,chen2024one,ionqPartialGates,ibmFractionalGates}. This is demonstrated by comparing the Haar-random synthesis cost of representative 2Q basis gates against the standard $ \CNOT $-equivalent gates. For instance, the cost is $ 3 $ for $ \CNOT/\CZ/\iSWAP $~\cite{shende2004minimal}, around $ 2.21 $ for $ \SQiSW $~\cite{huang2023quantum}, $ 2 $ for $ \B $ gate~\cite{zhang2004minimum}, and even lower for combinatorial or continuous gate sets~\cite{peterson2020fixed}. However, despite the variety of quantum ISAs available, none have successfully challenged the dominance of the $\CNOT$-based ISA due to several inherent challenges:

\textbf{Hardware implementation and calibration.} Implementing diverse 2Q gates with high precision and simple control via unified gate schemes across diverse quantum platforms remains a significant challenge, especially for complex, continuous quantum ISAs. For example, although $ \B $ gate was proposed decades ago~\cite{zhang2004minimum}, it has only recently been implemented natively~\cite{chen2025efficient,nguyen2024programmable}; IBM's fractional $\ZZ$ gates are deployed on their Heron QPUs~\cite{ibmFractionalGates}, but they have the same gate time as $\CZ$ and on average $1.5$ times the error rate of $\CZ$, according to public data from IBM Quantum Platform~\cite{ibmFractionalGates}. Furthermore, each 2Q gate must be carefully and periodically calibrated to ensure high precision. While proposals exist to mitigate calibration overhead~\cite{kelly2014optimal}, their effectiveness is not guaranteed. Thus, a program with numerous distinct 2Q gates presents significant execution challenges in practice.

\textbf{Compilation strategy.} Although the capability to implement more complex ISAs has been demonstrated, effectively utilizing them to compile programs into fewer gates with reduced circuit depth remains largely unaddressed. Neither gate set transpilation~\cite{peterson2022optimal,mckinney2024mirage,yale2024noise} nor numerically optimal synthesis~\cite{lao2021designing,kalloor2024quantum} effectively exploits their synthesis potential for real-world program compilation. We observe that ISA-customized strategies are crucial at every compilation stage, from high-level synthesis to hardware mapping. Otherwise, a theoretically superior ISA may be inferior in practice~\cite{kalloor2024quantum}.

To overcome these challenges and establish the practical superiority of expressive continuous ISAs, we introduce ReQISC (\underline{Re}configurable \underline{Q}uantum \underline{I}nstruction \underline{S}et \underline{C}omputers). ReQISC is a full-stack framework that integrates compilation, ISA design, and pulse control to boost performance across various quantum hardware platforms. At its core, ReQISC utilizes the most expressive $\mathbf{SU}(4)$ as its ISA (i.e., all 2Q gates) and makes it the highest-performing with minimal gate time through our proposed microarchitecture (gate scheme). The ReQISC microarchitecture can \emph{straightforwardly implement arbitrary 2Q gates up to local equivalence in optimal time, via simple pulse controls, given any 2Q coupling Hamiltonian}. Recent experimental progress has affirmed the viability of specific mechanisms in this direction~\cite{chen2025efficient}; our scheme, however, represents a more generalized approach with extensive architectural exploration, extending the advantages to a broader hardware spectrum. Our key contributions are:

\ding{182} We introduce a unified microarchitecture for the native, time-optimal realization of the full $\SU(4)$ group under arbitrary coupling Hamiltonians. We validate the pulse-control simplicity and calibration feasibility on representative hardware platforms. \note{To ensure experimental feasibility across all 2Q gates, we incorporate a gate mirroring mechanism that resolves the control singularity of near-identity gates at compile time without incurring 2Q gate count overhead.}

\ding{183} We present the first end-to-end compiler tailored to $ \SU(4) $ characteristics. Based on the lower-bound analysis of required $ \SU(4) $ resources for synthesizing multi-qubit circuits, our compiler employs both program-aware and program-agnostic passes to optimize the overall $ \SU(4) $ gate count. We address the program-specific calibration overhead and also effectively manage the trade-off between calibration overhead and aggressive 2Q gate count reduction.

\ding{184} We propose the program-aware template-based synthesis method that first pre-synthesizes optimal $ \SU(4) $-based circuit templates for refined intermediate representations (IR) and then selectively assembles the whole circuit.

\ding{185} We also propose a hierarchical, program-agnostic synthesis pass to aggressively lower the 2Q gate count, building on circuit partitioning and approximate synthesis. We introduce the partitioning compactness metric to guide co-optimization and develop a directed acyclic graph (DAG) compacting pass based on approximate commutation rules. 

\ding{186} Furthermore, we introduce a novel $\SU(4)$-aware routing pass, dubbed mirroring-SABRE, which significantly reduces the routing overhead induced by topology constraints.

\section{Background}\label{sec:Background}

\subsection{Quantum instruction set}


Similar to its classical counterpart, a quantum ISA serves as an interface between software and hardware, by mapping the high-level semantics of quantum programs to low-level native quantum operations or pulse sequences on hardware. Quantum instructions involve physical operations on qubits that alter their states. Typically, these instructions include qubit initialization, a universal gate set, and measurement. The universal gate set is the key component of a quantum ISA that dominates its hardware-implementation accuracy and cost, as well as software expressivity. It is well known that a gate set consisting of any 2Q entangling gate (e.g., $\CNOT$, $\iSWAP$) together with all possible 1Q gates can achieve universal computation~\cite{bremner2002practical}. Instructions from a universal gate set correspond to dynamics governed by the system Hamiltonian $H$, potentially influenced by external drives. The evolution of a quantum state over time $t$ is described by the unitary operator $U(t) = e^{-iHt}$ which realizes the desired quantum gate up to a global phase.



\subsection{Canonical decomposition and the Weyl chamber}

The group $\mathbf{SU}(2^n)$ is a real manifold with dimension $4^n - 1$; each point in this manifold corresponds to an $ n $-qubit unitary.
A generic 2Q gate within $ \mathbf{SU}(4) $, despite having $ 15 $ real parameters, can have its nonlocal behavior fully characterized by only $3$ real parameters. This method, known as Canonical decomposition or KAK decomposition from Lie algebra theory, is widely adopted in quantum computing~\cite{zhang2003geometric,tucci2005introduction,bullock2003arbitrary,zulehner2019compiling}.


Any $U \in \mathbf{SU}(4)$ can be defined by a unique vector $\Vec{\eta} = (x, y, z) \in W \subseteq \mathbb{R}^3$, along with $V_1, V_2, V_3, V_4 \in \mathbf{SU}(2)$, s.t.,
\begin{align}
U = g \cdot (V_1 \otimes V_2) e^{-i\Vec{\eta} \cdot \Vec{\Sigma}} (V_3 \otimes V_4),\, g \in \{1, i\}\label{eq:kak_decomposition}
\end{align}
where $\Vec{\Sigma} \equiv (XX, YY, ZZ)$~\cite{tucci2005introduction}. The set
\begin{align*}
W \coloneqq \{(x, y, z) \in \mathbb{R}^3 \,|\, \pi/4 \geq x \geq y \geq |z|,\, z \geq 0 \text{ if } x = \pi/4\}
\end{align*}
is known as the \emph{Weyl chamber}~\cite{zhang2003geometric}, and  $\Vec{\eta} \in W$ is known as the \emph{Weyl coordinate}
of $U$. We also refer to a gate of the form $\CanGate{x}{y}{z} \coloneqq e^{-i\Vec{\eta}\cdot\Vec{\Sigma}}$ as a \emph{canonical} gate. Two 2Q gates $ U $ and $ V $ are considered \emph{locally equivalent} if they differ only by 1Q gates, \note{and we write $ U \sim V $ to denote the equivalence relation.} For example, $ \CNOT\sim\CZ $, as $ \CZ=(I\otimes H)\,\CNOT\,(I\otimes H)$ and their canonical coordinates are both $ (\pi/4,0,0) $.

\begin{figure*}[tbp]
    \centering    
    \includegraphics[width=\linewidth]{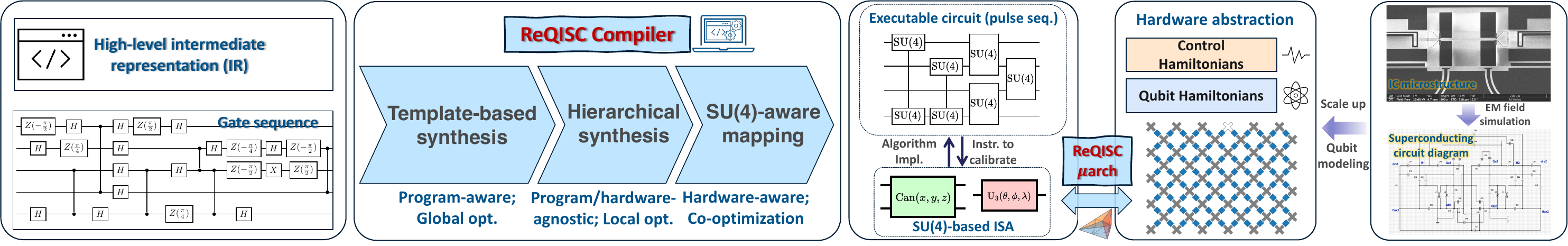}
    \caption{ReQISC workflow. It incorporates (1) the most performant (time-optimal realization) microarchitecture to engineer any arbitrary $\SU(4)$ given the hardware coupling Hamiltonian and (2) an end-to-end compilation framework with three-stage optimization passes to generate executable $\SU(4)$-based circuits.  The $ \SU(4) $ ISA is expressed as the $ \{\Can,\, \mathrm{U}3 \} $ gate set.}
    \label{fig:workflow}
\end{figure*}

\section{Our Proposal: ReQISC}\label{sec:ReQISC}

\Cref{fig:workflow} illustrates the overall workflow of ReQISC. The end-to-end ReQISC compiler first converts quantum programs in high-level IR or gate sequences in conventional (e.g., $\CNOT$-based) ISAs into $\SU(4)$ operations, as depicted by the circuit composed of $\CanGate{x}{y}{z} $ and $\UGate{\theta}{\phi}{\lambda}$ gates.
Subsequently, a set of simple pulse control parameters for desired $\SU(4)$ instructions is computed via the ReQISC microarchitecture and calibrated before executing on hardware. Note that we do not alter the hardware at any step. The electromagnetic simulation and qubit modeling follow standard procedures to specify a 2Q coupling Hamiltonian and tunable 1Q drive Hamiltonians by adjusting control parameters.

\begin{figure}[tbp]
    \centering
    \begin{subfigure}[t]{\columnwidth}
        \centering
        \includegraphics[width=\linewidth]{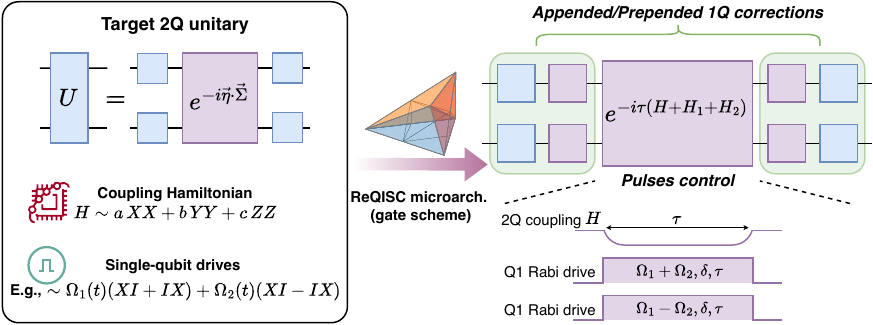}
        \caption{\note{Landscape of ReQISC microarchitecture (gate scheme)\\\quad}}
        \label{fig:gate_scheme:a}
    \end{subfigure}
    \begin{subfigure}[t]{\columnwidth}
        \centering
        \includegraphics[width=\linewidth]{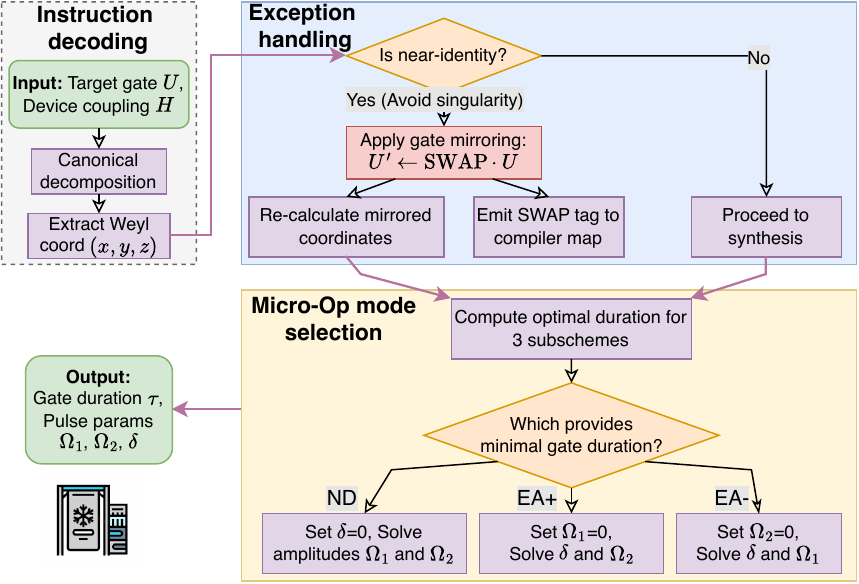}
        \caption{\note{ReQISC's micro-control logic from the system perspective}}
        \label{fig:gate_scheme:b}
    \end{subfigure}
    \caption{\note{ReQISC microarchitecture (gate scheme) directly implements arbitrary 2Q gates with a unified control. Common 2Q coupling Hamiltonians and standard 1Q Rabi drives with control parameters $\Omega_1$, $\Omega_2$, $\delta$, $\tau$, are sufficient to generate an evolution locally equivalent to the target unitary $U$.}} 
    \label{fig:gate_scheme}
\end{figure}

\section{The ReQISC microarchitecture}\label{sec:microarch}

While any 2Q gate can theoretically be realized through Hamiltonian steering~\cite{zhang2005generation,lin2022let,mckinney2023parallel}, pulse concatenation or optimization~\cite{song2019generation,omran2019generation,gokhale2020optimized}, they often require complex control sequences or fail to achieve time-optimality. In contrast, we propose a unified and simple control scheme that directly implements (up to local equivalence) arbitrary 2Q gates in optimal time, as depicted by \Cref{fig:gate_scheme} and \Cref{algo:isa}. Inspired by the AshN scheme~\cite{chen2024one}, which was limited to $\mathrm{XY}$ coupling, our scheme accommodates arbitrary coupling Hamiltonians for any quantum processor, whether it be superconducting, trapped-ion, or other platforms, significantly broadening applicability. 
Time optimality~\cite{hammerer2002characterization} refers to the theoretical lower bound of evolution time required to implement a locally equivalent 2Q unitary under a specific coupling Hamiltonian and tunable local drives. This is essential for mitigating decoherence and reducing overall runtime.

\note{


From a systems perspective, the hardware provides a fixed \dquote{datapath} determined by its physical coupling Hamiltonian (e.g., constant $\mathrm{XX}+\mathrm{YY}$ or $\mathrm{XX}$ interaction). ReQISC acts as the control logic that configures the local drives to steer this datapath. The generation of these control signals follows a rigorous logic flow, as formalized in \Cref{fig:gate_scheme}(b). It consists of three stages: \ding{172} \textbf{Instruction decoding:} The target gate $U$ is decomposed into its canonical coordinates $(x,y,z)$, mapping it to a point in the geometric instruction space (Weyl chamber). \ding{173} \textbf{Exception handling:} We identify \dquote{near-identity} $\SU(4)$ gates that physically require arbitrarily large energies to execute optimally due to control singularities. Instead of executing them directly, the compiler transforms these gates into their \dquote{mirrored} version by appending a logical SWAP, thereby avoiding the singularity. The effects of extra $\SWAP$ gates can be resolved by just tracking qubit mapping changes. \ding{174} \textbf{Micro-op synthesis:} Based on the instruction's coordinates, the logic selects one of three optimal execution modes---no-detuning (ND), equal-amplitude$_+$ (\EAplus), and equal-amplitude$_-$ (\EAminus). This selection is analogous to selecting the optimal datapath configuration in a CPU to minimize latency (gate duration $\tau$).


}




   \begin{algorithm}
              \SetAlgoLined
              \caption{ReQISC Microarchitecture Design}
              \label{algo:isa}
              \SetKwInOut{Input}{Input}
              \SetKwInOut{Output}{Output}
              \SetKwComment{Comment}{/*\scriptsize\,}{\,\scriptsize*/}
              \Input{Coupling Hamiltonian $H\in\mathcal{H}(\mathbb{C}^{4\times 4})$; Target $U\in \mathbf{SU}(4)$} 
              \Output{2Q gate time $\tau$; Local drive Hamiltonians $H_1, H_2\in\mathcal{H}({C^{2\times 2}})$; 1Q gate corrections $A_1,A_2,B_1,B_2 \in \mathbf{SU}(2)$}
      
              \BlankLine
              \SetKwBlock{Assumption}{Assumption}{}
          $(x,y,z,\, A'_1,A'_2,\, B'_1,B'_2)\gets \textsc{CanonicalDecompose}(U)$\;\label{line:cannonical-decompose}
          $(a,b,c,\, H'_1,H'_2,\, U_1,U_2)\gets \textsc{NormalForm}(H)$\;\label{line:normal-hamiltonian}


          $\tau_0,\, \tau_+,\, \tau_- \gets \frac{x}{a},\, \frac{x+y-z}{a+b-c},\, \frac{x + y + z}{a + b + c}$\;
          $\tau_1 \gets \max \{ \tau_0, \tau_+, \tau_-\}$\;
       
          $\tau_0',\,\tau_+',\,\tau_-'\gets\frac{\frac{\pi}{2} - x}{a},\,\frac{\frac{\pi}{2}- x + y + z}{a + b - c},\,\frac{\frac{\pi}{2}-x+y - z}{a+b+c}$\;
          $\tau_2 \gets \max\{ \tau_0', \tau_+', \tau_-'\}$\;
          $\tau \gets \min \{\tau_1, \tau_2\}$\; 
   
           \If{$\tau_2 < \tau_1$}{
               $x\gets \frac{\pi}{2}-x$;\quad $z\gets -z$\;
               $\tau_0,\, \tau_+,\, \tau_- \gets \tau_0',\, \tau_+',\, \tau_-'$\;
           }
           
          \eIf{$\tau = \tau_0$}{
              $S_1 \gets \mathrm{sinc}^{-1}\bigl(\frac{\sin(y - z)}{(b-c)\tau}\bigr) / \tau$ where $S_1\geq b-c$\;\label{line:sinc-S1}
              $S_2 \gets \mathrm{sinc}^{-1}\bigl(\frac{\sin(y + z)}{(b+c)\tau}\bigr) / \tau$ where $S_2\geq b+c$\;\label{line:sinc-S2}
              $\Omega_1\gets \frac{1}{2}\sqrt{S_1^2 - (b-c)^2};\, \Omega_2\gets \frac{1}{2}\sqrt{S_2^2 - (b+c)^2}\;
              \delta \gets 0$\;
          }{
               $\mathrm{lhs}(\alpha,\beta,\eta, t) \coloneqq \frac{(1-\alpha)(1+\alpha+\beta)-(1-\alpha+\beta)\eta}{(1-\alpha+\beta)(1-\eta+\alpha+2\beta)}e^{-i(2+2\beta-\eta) t}+\frac{\beta(1+\alpha+\beta)-(1-\alpha+\beta)\eta}{(1-\alpha+\beta)(2\alpha+\beta-\eta)}e^{i(\eta-2\alpha) t} + \frac{(1+\alpha-\eta)\eta - \beta(1-\alpha-\eta)}{(2\alpha+\beta-\eta)(1+\alpha+2\beta-\eta)}e^{i(2\alpha+2\beta-\eta) t}$\;\label{line:lhs}
               $\mathrm{rhs}(x,y,z) \coloneqq e^{i(x-y-z)}-e^{i(y-x-z)}+e^{i(z-x-y)}$;\label{line:rhs}

          \eIf{$\tau=\tau_+$}{
           $x',\,y',\,z' \gets x + c\tau,\, y + c\tau,\, c\tau - z$\;
           $\eta \gets (a-b) / (a + c);\quad t \gets (a+c)\tau$\;
            
            $(\alpha,\beta)\gets \textsc{NSolve}( \mathrm{lhs}(\alpha, \beta, \eta, t) = \mathrm{rhs}(x',y',z'))$ where $(\alpha, \beta)\in[0,1]\times [0,\infty)$ and $\alpha + \beta \geq \eta$\;\label{line:nsolve1}
            $\Omega_1\gets0 $; $\Omega_2\gets (a+c)\sqrt{(1-\alpha)\beta(1-\eta + \alpha +\beta)}$\;
            $\delta\gets -(a+c)\sqrt{\alpha(1+\beta)(\alpha + \beta - \eta)}$\;
          }{
            $x',\,y',\,z' \gets x - c\tau,\, y - c\tau,\, z - c\tau$\;
           $\eta \gets (a-b) / (a - c);\quad t \gets (a-c)\tau$\;
            $(\alpha,\beta)\gets \textsc{NSolve}( \mathrm{lhs}(\alpha, \beta, \eta, t) = \mathrm{rhs}(x',y',z'))$ where $(\alpha, \beta)\in[0,1]\times [0,\infty)$ and $\alpha + \beta \geq \eta$\;\label{line:nsolve2}
            $\Omega_1 \gets (a-c) \sqrt{(1-\alpha)\beta(1-\eta + \alpha +\beta)}$; $\Omega_2 \gets 0$\;
            $\delta = (a-c) \sqrt{\alpha(1+\beta)(\alpha + \beta - \eta)}$\;
            
            
      }
          }
          $H''_1,\, H''_2\gets (\Omega_1+\Omega_2)X + \delta Z,\, (\Omega_1-\Omega_2)X + \delta Z$\;
          $(x',y',z',\,A_1'',A_2'',\,B_1'',B_2'')\gets \textsc{CanonicalDecompose}(e^{-i(aXX+bYY+cZZ + H''_1 + H''_2)\tau})$\Comment*[r]{$(x',y',z')=(x,y,z)$}
          $H_1,\,H_2\gets U_1 H''_1 U_1^\dagger-H'_1,\, U_2 H''_2 U_2^\dagger-H'_2$\;
          $A_1,\,A_2\gets A'_1 (A''_1)^\dagger U_1^\dagger, A'_2 (A''_2)^\dagger U_2^\dagger$\;\label{line:local_corrections_after}
          $B_1,\, B_2\gets U_1 (B''_1)^\dagger B'_1,\, U_2 (B''_2)^\dagger B'_2$\Comment*[r]{$(A_1\otimes A_2)e^{-i\tau(H+H_1\otimes I + I\otimes H_2)}(B_1\otimes B_2) = U$}\label{line:local_corrections_before}
   
   \end{algorithm}


\subsection{Protocol design (Hamiltonian description)}\label{sec:protocol-design}
{
To illustrate the general applicability of our scheme, we consider the most general 2Q coupling Hamiltonian $H$ and local Rabi drives $H_{d_i}$~\cite{zhang2005generation} for a pair of resonant qubits ($\omega_{q_1}=\omega_{q_2}=\omega$), with experimentally tunable pulse parameters $A_i(t), \delta$:
\begin{align}
    H = a\,XX + b\,YY + c\,ZZ ;\, H_{d_i}=-\frac{1}{2}A_i(t)X_i + \delta\, Z \label{eq:normalized_hamiltonian}
\end{align}
where $(a, b, c)$ are the canonical coupling coefficients~\cite{bennett2002optimal} with $ a \geq b \geq \lvert c \rvert $. 
We define the coupling strength \note{
\begin{align}
    g :=a+b+\lvert c\rvert
\end{align}
to facilitate the comparison of different physical platforms.}
For flux-tunable superconducting qubits, in the drive frame after the rotating wave approximation, $a=b=g/2$, $c=0$, $A_i(t)$ is the amplitude of a square wave envelope for a sinusoidal microwave drive, which we will assume is constant, and $\delta_i:=\frac{1}{2}(\omega_{d_i} - \omega)$ is the drive detuning (see~\cite{krantz2019quantum,chen2024one} for detailed derivations). We show that it is sufficient to set the drive frequencies to be identical, so $\delta_1 = \delta_2 = \delta$. For other physical platforms, $H_{d_i}$ may be realized via other means (e.g., lasers in Rydberg atoms or trapped ions). In this case, $A_i$ and $\delta_i$ may correspond to other physical parameters. Our framework can also be easily adapted to other possible $H_{d_i}$ (for example a $Y$ term).
For mathematical convenience, we define $ \Omega_{1,\,2} \coloneqq -\frac{1}{4}{(A_1 \pm A_2)} $, and we can rewrite the local Hamiltonians, which now act on the 2Q Hilbert space, as
\begin{align}
    H_1 := (\Omega_1+\Omega_2)\, XI+ \delta\, ZI,\, H_2 := (\Omega_1-\Omega_2)\, IX +\delta \, IZ.
\end{align}
The local drive amplitudes $\Omega_1$ and $\Omega_2$, detuning $\delta$, and interaction duration $\tau$ constitute the set of simple control parameters to be determined via our microarchitecture such that simultaneously applying $H_1,H_2$ with the device 2Q coupling $H$ for duration $\tau$ implements a gate locally equivalent to a target gate $U$, as illustrated in \Cref{fig:gate_scheme}.

}


\Cref{algo:isa} describes the details of our gate scheme. A target gate $U$ and the device 2Q Hamiltonian $H$ are provided as inputs. The goal is to determine the local Hamiltonians $H_1$ and $H_2$, with the interaction duration $\tau$, such that $e^{-i\tau (H + H_1 + H_2)}\sim U$. Specifically, the Weyl coordinates $(x,y,z)$ of $U$ are obtained via the KAK decomposition (line \ref{line:cannonical-decompose}) according to \Cref{eq:kak_decomposition}. 
\note{The interaction Hamiltonian $ H $, if not already in normal form, is transformed into its normal form~\cite{dur2001entanglement} to obtain the coefficients $ (a,b,c) $ (line \ref{line:normal-hamiltonian}):
\begin{align*}
    H = (U_1\otimes U_2) (a\,XX+b\,YY+c\,ZZ)(U_1^\dagger\otimes U_2^\dagger) + H'_1  + H'_2,
\end{align*}
where $U_1, U_2$ are 1Q unitaries, $H_1', H_2'$ are 1Q Hermitian operators, and $a\geq b\geq |c|$.}
The parameters $(x,y,z)$ and $(a,b,c)$ are then fed to the subsequent solving procedure that computes $H_1, H_2, \tau$. With additional 1Q gate corrections $(A_1, A_2)$ and $(B_1, B_2)$ (line \ref{line:local_corrections_after} and \ref{line:local_corrections_before}), finally
\begin{align}
    (A_1\otimes A_2)\, e^{-i\tau(H+H_1 + H_2)}(B_1\otimes B_2)
\end{align}
exactly implements the desired $U$.


\note{Herein we give a high-level explanation of \Cref{algo:isa} while providing the detailed proof in Appendix \ref{appendix:proof}.}
First, the interaction time $\tau$ is determined as the maximum value among $\{ \tau_0, \tau_+, \tau_- \}$. Which of the three is largest dictates the subscheme used:
\note{\begin{enumerate}
    \item $\tau_0$ for no-detuning (ND, $\delta = 0$).
    \item $\tau_+$ for equal-amplitude with opposite signs (\EAplus, $\Omega_1 = 0$), i.e., the sinusoidal drive amplitudes applied on qubit 1 and qubit 2 have equal magnitude but opposite signs.
    \item $\tau_-$ for equal-amplitude with same sign (\EAminus, $\Omega_2 = 0$). 
\end{enumerate}}
\noindent The three subschemes correspond to three distinct regions that partition the Weyl chamber~\cite{chen2024one}. 
This partitioning of the Weyl chamber is convenient in terms of physical implementation but not unique. 


\note{
The proof of the correctness of~\Cref{algo:isa} starts with transforming into the Bell (or \dquote{magic}) basis, where the canonical coupling Hamiltonian $aXX+bYY+cZZ$ becomes diagonal. In this basis, applying local drives effectively mixes different eigenstates. As explained in Appendix \ref{appendix:proof}, using the Bell basis motivates analyzing the unitary 
\begin{align}
    V \coloneqq e^{-i\tau(H+H_1+H_2)} \cdot YY    
\end{align}
whose eigenvalues are related to the Weyl coordinates of the gate that our gate scheme realizes. We want these eigenvalues to match the desired Weyl coordinates $(x,y,z)$ of $U$. Specifically, the \dquote{ND} subscheme leads to a special class of Hamiltonian for which $V$ has a simple analytic form. In contrast, the \dquote{\EAplus/\EAminus} subschemes involve equations whose solutions do not have a simple analytic form, as written in \Cref{algo:isa}.  Furthermore, since the chosen duration $\tau$ matches the lower bound in~\cite{hammerer2002characterization}, our gate scheme is time-optimal.
}

\note{
\subsection{Numerical solving}\label{sec:numerical_solving}
}

\begin{figure}
    \centering
    \includegraphics[width=\columnwidth]{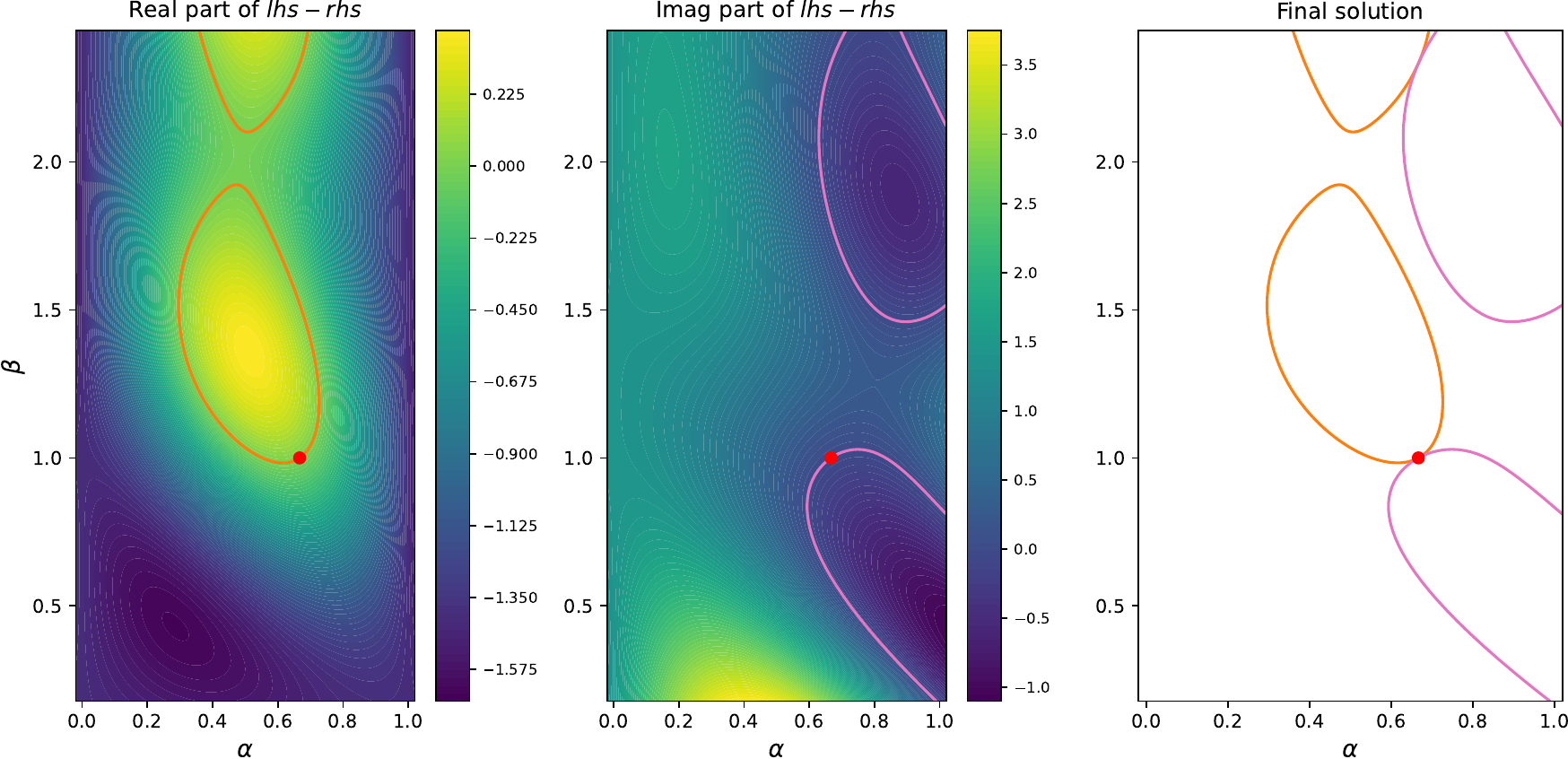}
    \caption{\note{$ (\alpha,\, \beta) $ solution profiling for $ \SWAP $ gate implementation under $ \mathrm{XX} $ coupling ($a\neq 0,\, b = c = 0 $). All intersection points of orange and purple lines are valid solutions to the transcendental equations (\dquote{\EAplus} subscheme).
    Our numerical solver finds the optimal solution for hardware implementation (red point in the rightmost subfigure).}}
    \label{fig:profiling_solving}
\end{figure}

\begin{figure}[tbp]
    \centering

    \includegraphics[width=\columnwidth]{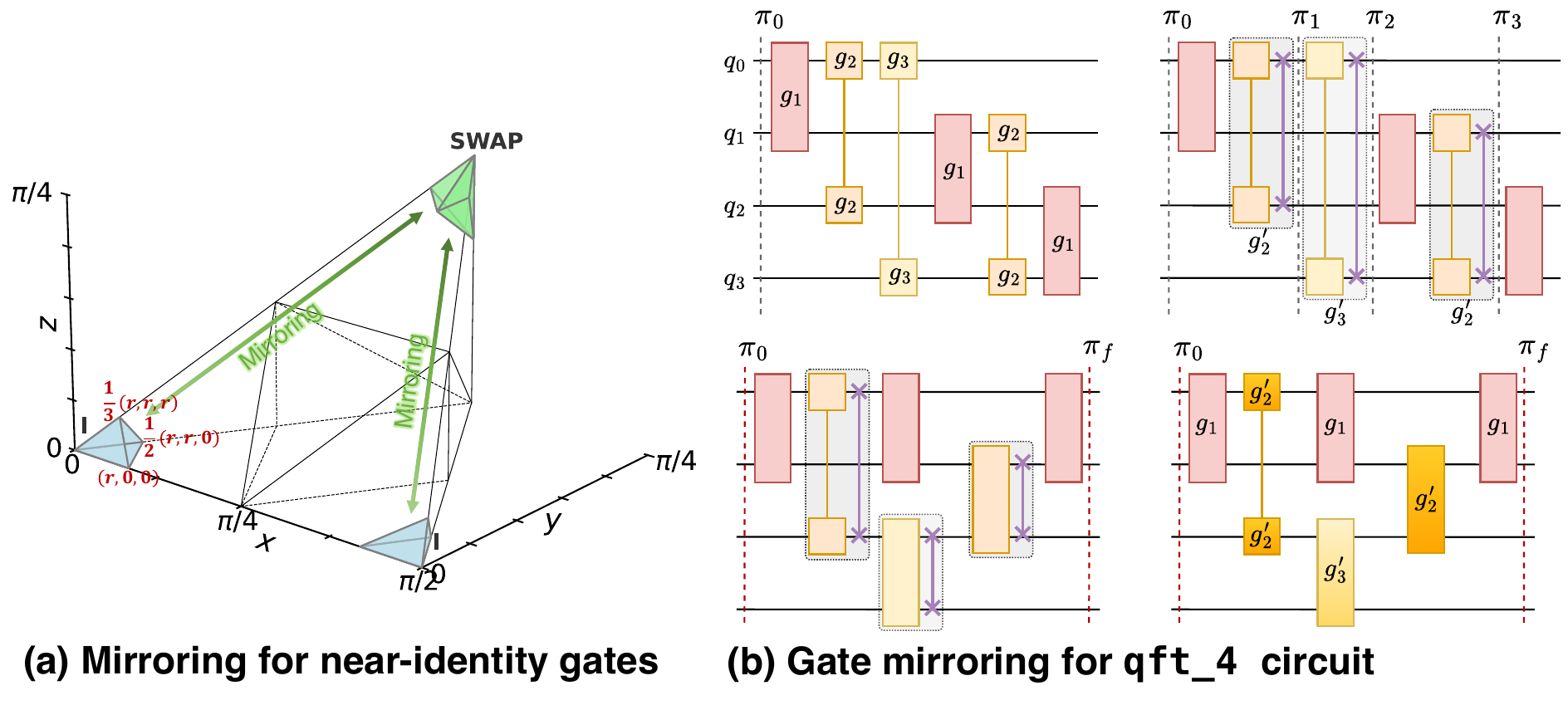}

    \caption{\note{Compile-time singularity resolution via gate mirroring. (a) Gates in the near-identity region ($ \mathcal{L}^1 \leq r$) are mirrored to the $\SWAP$ corner. (b) Application to \code{qft\_4}: In this example, $ g_i\sim \CanGate{\pi/{2^{2+i}}}{0}{0} $, such that $ g_2 $ and $ g_3 $ are assumed to be close enough to identity and are replaced by their mirrors while intermediate mappings are updated. Ultimately this resolution requires only one final mapping update $ \pi_f $ and introduces no extra 2Q gate.}}
    \label{fig:mirroring_weyl}

\end{figure}

Each subscheme involves transcendental equations that lack analytical solutions and must be solved numerically. 
Furthermore, since $\mathrm{sinc}(x)$ function (lines \ref{line:sinc-S1} and \ref{line:sinc-S2}) is not bijective, $ S_1 $ and $ S_2 $ have multiple solutions. To maximize the fidelity of a hardware implementation, we select by default the smallest solutions so that $\max\{\vert A_1\vert, \vert A_2\vert \}$ is minimized. 
\note{While determining the primary roots for $S_1$ and $S_2$ is straightforward, the numerical solver for the \EAplus/\EAminus subschemes requires a more sophisticated approach. We employ a hybrid numerical optimization strategy that integrates a coarse grid search with a two-stage local refinement to solve for $ (\alpha,\,\beta) $: \ding{172} \textbf{Grid search initialization}: We iterate over a predefined grid of initial guesses for $\alpha$ and $\beta$ both linearly spaced to adequately sample the parameter space. \ding{173} \textbf{Two-stage refinement}: For each $(\alpha,\,\beta)$ candidate in the grid search, we utilize the \code{SLSQP} method via \code{scipy.optimize.minimize} to minimize the residual between the expressions \dquote{$ \mathrm{lhs} $} (line \ref{line:lhs}) and \dquote{$ \mathrm{rhs} $} (line \ref{line:rhs}) but is of limited precision. The result is then refined to pinpoint the exact root using \code{scipy.optimize.fsolve}.
\ding{174} \textbf{Optimal selection}: Among the valid solutions, we select $ (\alpha,\,\beta) $ that minimizes a \dquote{physical implementation penalty}, defined as the sum of the absolute values of the resulting pulse amplitudes $ \Omega_1 $ or $ \Omega_2 $ and detuning $ \delta $.

This solving strategy achieves exceptional numerical precision
and robustness. The Weyl coordinates computed from the returned solution has a mean absolute error on the order of $10^{-16}$ for ND and $10^{-13}$ for \EAplus/\EAminus. The resulting unitary produced has an infidelity on the order of $10^{-15}$, effectively reaching the limit of machine precision.
In our field test with typical coupling Hamiltonians and common 2Q gates, as well as millions of random coupling Hamiltonians and target unitaries, the solving procedure consistently converges to high-precision solutions.
\Cref{fig:profiling_solving} illustrates the robustness of this procedure in identifying the primary $(\alpha, \beta)$ solution that corresponds to minimal pulse amplitudes and detuning for the SWAP gate under XX coupling.}

\subsection{Addressing the \dquote{near-identity} issue}\label{sec:gate_mirroring}


Following ~\Cref{algo:isa}, realizing gates with Weyl coordinates close to the origin (common in quantum Fourier transform (QFT) and Hamiltonian simulation kernels) in optimal time requires unbounded pulse amplitudes, which are experimentally infeasible. While trading gate time for bounded amplitudes is possible~\cite{chen2024one}, the longer gate time is still undesirable. To address this, we propose transforming the near-identity gate into its mirror gate~\cite{cross2019validating} which differs by a $\SWAP$ gate:
\begin{align*}
    \mathrm{SWAP}\cdot\mathrm{Can}(x,y,z)
    \sim \begin{cases}
    \mathrm{Can}\left(\frac{\pi}{4}-z, \frac{\pi}{4}-y, x - \frac{\pi}{4}\right), & \textrm{if } z\geq 0 \\
    \mathrm{Can}\left(\frac{\pi}{4} + z, \frac{\pi}{4}-y, \frac{\pi}{4} - x\right), & \textrm{if } z < 0
    \end{cases},
\end{align*}
and subsequently altering qubit mapping at compile time.
It is clear that the Weyl coordinates of the mirrors of near-identity gates are far from the origin, \note{as depicted in \Cref{fig:mirroring_weyl}(a)}.
During compilation, we append a $\SWAP$ to each near-identity 2Q gate, leaving the circuit rewiring effects of these $\SWAP$s to be resolved at compile time. Specifically, we define a coordinate norm threshold to determine if a 2Q gate is near-identity and requires mirroring. The threshold $r$ in general depends on hardware parameters~\cite{chen2024one}. The impacts of inserted $\SWAP$s are tracked by recording the altered qubit mappings. Thus we resolve the issue at compile time without necessarily introducing additional 2Q gate count overhead. For example, \note{\Cref{fig:mirroring_weyl}(c) illustrates an example for the four-qubit QFT circuit, which eventually leads to a circuit with no extra 2Q gates and only an updated final mapping to track.}


\subsection{Hardware implementation}\label{sec:hardware-implementation}

\begin{figure*}[tbp]
    \centering    
    \includegraphics[width=\textwidth]{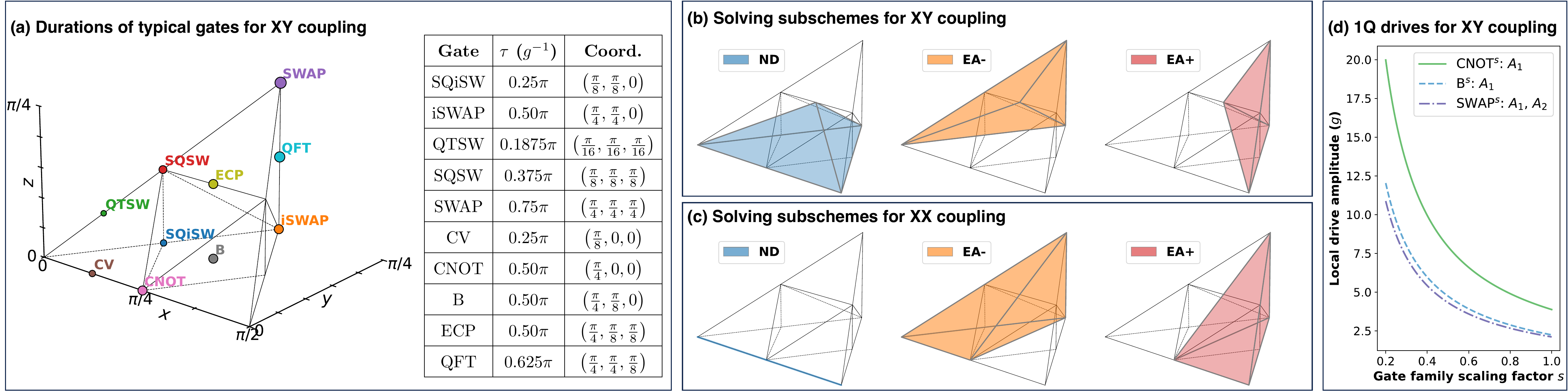}
    \caption{{Hardware implementation of the ReQISC microarchitecture. (a) Gate time landscape under $\mathrm{XY}$ coupling ($H=\frac{g}{2}(XX+YY)$); circle size indicates gate duration. (b-c): Pulse control subschemes, under $\mathrm{XY}$ and $ \mathrm{XX} $ couplings, respectively. (d) Local drive amplitudes required for representative gate families. The scaling factor $s$ is used to take fractions of gates. For example, $\mathrm{B}^s \sim \CanGate{\frac{\pi}{4}s}{\frac{\pi}{8}s}{0}$. The $\iSWAP$ family requires no local drives so it is not shown in the figure; the $\CNOT$ family and $\B$ family require only one-side drive ($A_1\neq 0$); $\SWAP$ family requires both-side drives ($A_1= A_2 \neq 0$). Frequency-related quantities ($A_1$, $A_2$, $\delta$) are normalized by the coupling strength $g$.}}
    \label{fig:solving-schemes}
\end{figure*}

To implement our presented microarchitecture on a hardware platform, the only requirement is that $H,H_1,H_2$ can be applied simultaneously. This is true for the case of flux-tunable superconducting qubits: standard microwave drives, arbitrary wave generators, and mixers are sufficient, as recently shown in an experiment~\cite{chen2025efficient}. Given these minimal requirements, many platforms, e.g., flux-tunable transmons~\cite{arute2019quantum,acharya2024quantum}, fluxonium qubits~\cite{nguyen2019high,bao2022fluxonium}, Floquet qubits~\cite{nguyen2024programmable}, semiconductor spin qubits~\cite{dijkema2025cavity}, trapped ions~\cite{yale2025realization}, and some neutral-atom devices~\cite{barredo2015coherent} could immediately benefit from this scheme. 

We outline how to realize common 2Q gates ($\CNOT$, $\iSWAP$, $\B$, etc.) with two typical coupling Hamiltonians: $\mathrm{XY}$ coupling ($H = \frac{g}{2}\,(XX + YY)$) and $ \mathrm{XX} $ coupling ($H = g\, XX$). For instance, the former is common in flux-tunable superconducting qubits~\cite{krantz2019quantum,arute2019quantum}; the latter is the dominant coupling type for trapped ions~\cite{yale2025realization}.
The latter type of coupling can also appear in capacitively coupled flux-tunable transmons by working in the lab frame instead of the drive frame~\cite{krantz2019quantum}:
\begin{align}
    H' = -\frac{\omega_1}{2} ZI - \frac{\omega_2}{2} IZ + gXX + V_1(t) XI + V_2(t)IX,    
\end{align}
where $\omega_i$ are the frequencies of the qubits, $g$ is the coupling strength, and $V_i(t)$ are the local drives. 
We can simply tune different qubit frequencies to control the $ZI, IZ$ terms while local drives can be square pulses with variable amplitudes instead of sinusoidal pulses to control the $XI, IX$ terms. This Hamiltonian is not supported by previous work~\cite{chen2024one}, yet can arise for certain hardware devices. Our gate scheme can handle this or any other coupling Hamiltonian.

Through \Cref{algo:isa}, we can derive the subschemes (ND, \EAplus, or \EAminus) and control parameters as shown in \Cref{fig:solving-schemes}(b-d). With respect to our gate scheme, there are some takeaways for ISA design. For example, gates requiring no or low amplitude drives can potentially achieve higher fidelities than those requiring stronger drives (e.g., $\CNOT$ under $\mathrm{XX}$ coupling vs. $\CNOT$ under $\mathrm{XY}$ coupling, $\iSWAP$ under $\mathrm{XY}$ coupling vs. $\iSWAP$ under $\mathrm{XX}$ coupling, respectively). This is because precise calibration of local drives with the coupling turned on is nontrivial in practice. Using our gate scheme, $\SWAP$ is far less expensive than the conventional three-$\CNOT$ decomposition, as shown for $\mathrm{XY}$ coupling in \Cref{fig:solving-schemes}(b). Even the $\CNOT$ implementation via our scheme is $1.41$x faster than the traditional control scheme on superconducting processors~\cite{krantz2019quantum}, that is, $\pi/2g$ vs. $\pi/\sqrt{2}g$. This speedup is critical for suppressing errors on platforms where decoherence is the dominant error source. Besides, our scheme does not introduce the additional $\lvert 11 \rangle \leftrightarrow \lvert 02\rangle$ transition, thereby minimizing leakage error and dynamic stray $\mathrm{ZZ}$ crosstalk. These advantages were demonstrated on transmons in~\cite{chen2025efficient}, where various 2Q gates are implemented in high fidelity (on average $ 99.37 $\%), with optimal gate duration. As our scheme represents a general approach to directly implement arbitrary $\SU(4)$ on physical platforms with arbitrary coupling Hamiltonians, it provides more opportunities for hardware-native ISA design and compiler optimization.



\subsection{Gate calibration}\label{sec:microarch-calibration}
Gate calibration can follow a similar procedure as outlined in~\cite{chen2025efficient}, where $\mathrm{XY}$ coupling is assumed. First, the $\iSWAP$-family component of the Hamiltonian (coupling term $H$) and the drive components ($H_1$ and $H_2$) are separately calibrated to determine how the model parameters $g$, $\Omega_{1,\,2}$, and $\delta$ depend on physical control parameters. This provides an initial estimate for physical control parameters. Then, the two parts of the Hamiltonian are simultaneously applied. Further optimization is achieved by first applying quantum process tomography~\cite{greenbaum2015introduction}. This is used to measure the Weyl coordinate of the physical gate realized. Control parameters are tuned to minimize the Euclidean distance from target coordinates. Further fine-tuning of the parameters can be achieved via cross-entropy benchmarking~\cite{arute2019quantum}.

Such a calibration method is readily generalizable. For example, instead of calibrating an $\mathrm{iSWAP}$-family gate, one would calibrate an $\mathrm{XX}(\theta)$ rotation for the case of $\mathrm{XX}$ coupling. To calibrate a continuous gate set, the idea in~\cite{chen2024one} could be utilized. In this case, we can calibrate the \emph{mapping} between the control parameters and model parameters. Such a mapping can be expressed as some parameterized mathematical function. We can then optimize the parameters of this function via fully randomized benchmarking (Haar-random 2Q gates)~\cite{kong2021framework} or other more general benchmarking schemes~\cite{chen2022randomized} with respect to different gate distributions. 








\section{ReQISC Compilation Framework}\label{sec:Compiler}

Using the ReQISC microarchitecture, we can natively realize the $\SU(4)$-based ISA with theoretically optimal performance. However, a dedicated compiler is still required to fully unlock its potential, as compiling generic programs into 2Q basis gates is non-trivial. In this section, we delve into the ReQISC compiler design, which draws upon novel insights regarding the analysis of hardware primitives, the characteristics of quantum programs, and hardware requirements.

\subsection{Local optimization: Hierarchical synthesis}\label{sec:compiler-hierarchical-synthesis}

We first attempt to exploit $\SU(4)$'s synthesis potential based on its resource-bound analysis and devise a \emph{hierarchical synthesis} structure to perform aggressive local optimization at the circuit level. 
We introduce a \dquote{compactness} metric across local subcircuits to guide co-optimization and devise a supporting \emph{DAG compacting} pass with the \emph{approximate commutation} rules in our findings.

\subsubsection{Resource analysis and approximate synthesis with $\mathbf{SU}(4)$}

The lower bound of 2Q basis gate count given a quantum ISA typically reflects its intrinsic synthesis power.  The bound for an $\SU(4)$-based ISA is $b_{\SU(4)}(n) = \lceil(4^n-3n-1)/9\rceil$, implying a theoretical 55.6\% reduction in \#2Q gates compared to a $\CNOT$-based ISA, whose bound is $b_{\CNOT}(n) =\lceil(4^n-3n-1)/4\rceil$. While no constructive algorithms can achieve this bound, the \emph{approximate synthesis} technique has demonstrated this capability~\cite{khatri2019quantum,madden2022best}. Through structural optimization of gate arrangements and numerical optimization of gate parameters, approximate synthesis can generate circuits with fewer gates that are equivalent to a target unitary within a permissible precision~\cite{davis2019heuristics}. This precision is typically set to be negligible ($10^{-15}$-$10^{-10}$, measured by infidelity $1 - \frac{1}{N} \lvert \mathrm{Tr}(U^\dagger V) \rvert$) compared to physical errors, making the synthesis effectively exact for practical purposes.

\begin{figure}[tbp]
    \centering
    \begin{subfigure}[t]{0.67\columnwidth}
        \centering
        \includegraphics[width=\linewidth]{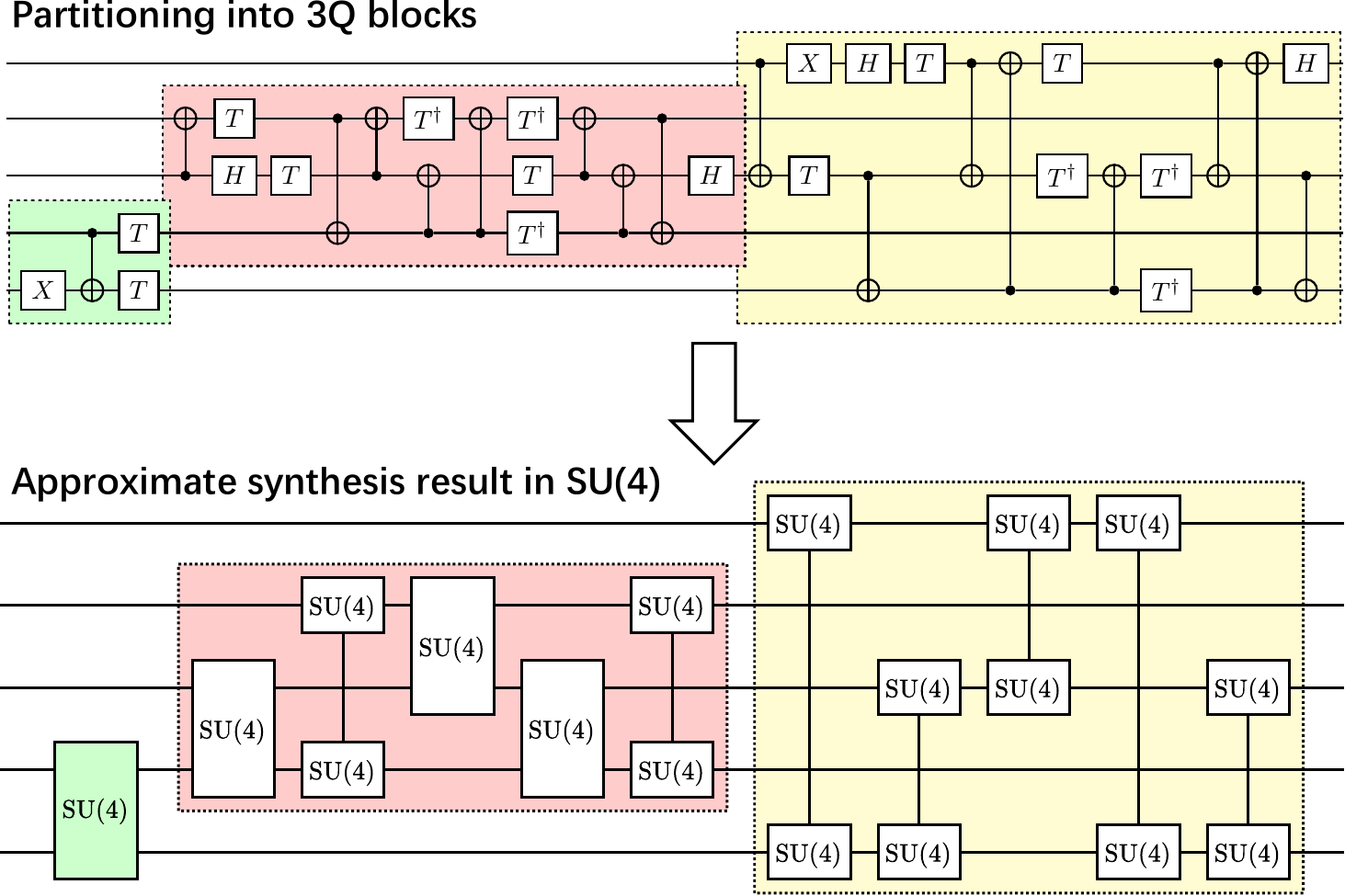}
        \caption{HS example for \code{alu-v2\_33} circuit}
        \label{fig:hierarchical_synthesis:a}
    \end{subfigure}
    \hfill
    \begin{subfigure}[t]{0.26\columnwidth}
        \centering
        \includegraphics[width=\linewidth]{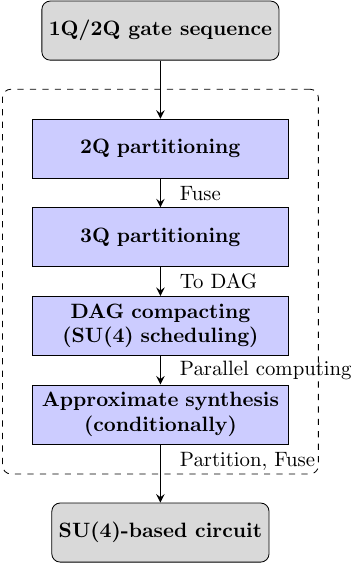}
        \caption{HS pipeline}
        \label{fig:hierarchical_synthesis:b}
    \end{subfigure}
    \caption{Example and pipeline of hierarchical synthesis (HS). (a) The original \code{alu-v2\_33} circuit is partitioned into 3 three-qubit blocks with \#2Q equal to 1, 8, and 8. After approximate synthesis on the latter two blocks, each has 5 $ \SU(4) $s. The overall \#2Q is reduced from 17 to 11. (b) HS involves two-tier partitioning, followed by DAG compacting and conditional approximate synthesis.}
    \label{fig:hierarchical_synthesis}
\end{figure}


\begin{figure}[tbp]
    \centering
    \includegraphics[width=\columnwidth]{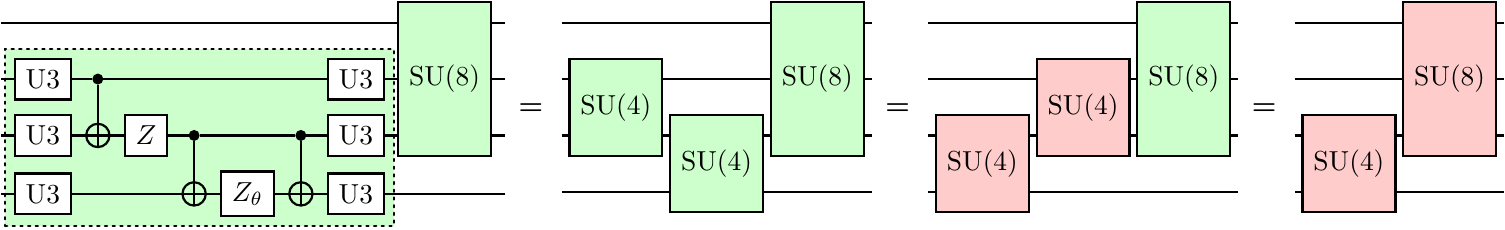}
    \caption{Exchanging \dquote{approximately commutative} $\SU(4)$s improves compactness. This pattern occurs in \code{grover\_5\_0}. Different colors indicate that parameters of $\SU(4)$ or $\SU(8)$ have changed.}
    \label{fig:approximate-commutation}
\end{figure}

\subsubsection{Hierarchical synthesis}\label{sec:compiler-hierarchical-synthesis-hyperparameter}

To leverage the theoretical synthesis potential of $\SU(4)$ and apply approximate synthesis to large-scale real-world programs, we employ a two-tier hierarchical synthesis approach. It first partitions the original circuit into 2Q blocks, each of which is fused into an $\SU(4)$ gate. Then, it partitions the circuit, now consisting solely of $\SU(4)$ gates, into $w$-qubit blocks to determine if each block can be synthesized with fewer $\SU(4)$ gates through approximate synthesis. Specifically, our compiler conditionally performs approximate synthesis on blocks whose $\SU(4)$ count is greater than a threshold $m_{th}$. Thus, two hyperparameters need to be carefully considered to strike a trade-off between computational overhead and 2Q gate count reduction effect via approximate synthesis: partitioning granularity $w$ and 2Q gate count threshold $m_{th}$. We determine optimal default values through empirical analysis: \ding{172} \textbf{Partition granularity ($w=3$):} Approximate synthesis is computationally expensive, scaling exponentially with circuit width. We found that 3Q partitioning offers an optimal balance, as 3Q subcircuits frequently contain more gates than the theoretical minimum ($b_{\SU(4)}(3) = 6$), providing ample optimization opportunities. In contrast, 4Q partitioning results in excessive computational overhead for the underlying approximate synthesis, and its potential for reduction is less frequently realized in practice, as $ b_{\SU(4)}(4) = 27 $ is too large. \ding{173} \textbf{Synthesis threshold ($m_{th}=4$):} In principle, $m_{th}$ can be set to the \#2Q lower bound, i.e., $ 6 $. However, empirical observations indicate that as few as $ 5 $ or even $ 4 $ gates are often sufficient to approximate 3Q blocks within the permissible precision, suggesting that a more aggressive $m_{th}$ can yield greater gate count reductions. As shown in \Cref{fig:hierarchical_synthesis}, this method reduces the \#2Q for the \code{alu-v2\_33} circuit from $ 17 $ to $11$.

\subsubsection{Compactness and DAG compacting}
 
To assess the circuit partitioning effects, we introduce the metric of \emph{compactness} to guide co-optimization of circuit partitioning and subsequent approximate synthesis: an ideal partition creates highly \dquote{unbalanced} blocks, concentrating many gates into a few blocks ripe for synthesis while leaving others sparse. Specifically, we desire that in the second-stage partitioned result, $w$-qubit blocks with $\mathrm{\#2Q} > m_{th}$ concentrate more 2Q gates, while those with $\mathrm{\#2Q} \leq m_{th}$ include fewer. 

Under the guidance of the compactness metric and the gate scheduling idea, we propose the \emph{DAG compacting} pass to conduct aggressive optimization for partitioned $\SU(4)$-based subcircuits that comprise a nested DAG. The DAG compacting pass exploits opportunities of exchanging \dquote{commutative} $\SU(4)$s to enhance compactness. For example, a gate sequence like $[\SU(4)_{1,2},\, \SU(4)_{2,3}]$ can often be exchanged into $[\SU(4)'_{2,3},\, \SU(4)'_{1,2}]$ with negligible error (\Cref{fig:approximate-commutation}). By performing this commutation, $\SU(4)'_{1,2}$ can merge with the subsequent 3Q block $\SU(8)_{0,1,2}$, thereby increasing compactness. The pass iteratively applies these feasible exchanges to maximize the compactness-increasing opportunities, enabling more aggressive optimization.



\subsection{Global optimization: Program-aware synthesis}

The hierarchical synthesis works at the circuit level and will induce diverse parameterized quantum gates, leading to uncontrollable calibration overhead. It is impractical for large-scale program compilation on its own. Thus, an approach to globally optimizing quantum programs and managing the calibration overhead is necessary.

\subsubsection{Real-world program patterns}\label{sec:compiler-revisiting-programs}

In general, real-world quantum algorithms can be categorized into two types: (1) those solving classical problems, i.e., quantum versions of digital logics, constructed through binary to qubit encoding, and (2) those solving more quantum or optimization problems, typically constructed through Hamiltonian simulation. For the latter type of programs, there are already highly-effective global compilation techniques that can be seamlessly applied to the $ \SU(4) $ ISA, especially for those ISA-independent dedicated compilers such as Rustiq~\cite{debrugière2024faster}, QuCLEAR~\cite{liu2025quclear}, and PHOENIX~\cite{yang2025phoenix}. Therefore, we focus on how to globally optimize the first type of quantum programs.

Specifically, these programs are composed of $\mathrm{CX}$/$\mathrm{CCX}$/$\mathrm{MCX}$ gates (sometimes with 1Q rotations and Paulis), as exemplified in \Cref{fig:program_example}. This feature is ideal for a \emph{template-based synthesis} strategy---first find the optimal synthesis schemes for these high-level semantics and then unroll them within circuits. We refine 3Q blocks as the IRs and construct their optimal synthesis templates again via approximate synthesis, since intricate components such as the Peres gate~\cite{thapliyal2009design} and MAJ/UMA~\cite{cuccaro2004new} are usually represented as 3Q circuits. Moreover, 3Q represents the perfect granularity for efficient yet powerful synthesis as discussed in \Cref{sec:compiler-hierarchical-synthesis-hyperparameter}. For $\mathrm{MCX}$, it can be first recursively decomposed into $\mathrm{CCX}$ gates~\cite{barenco1995elementary}. This approach simultaneously enables effective optimization and benefits calibration efficiency by reusing a finite set of highly optimized building blocks.

\begin{figure}[tbp]
    \centering
    \includegraphics[width=\columnwidth]{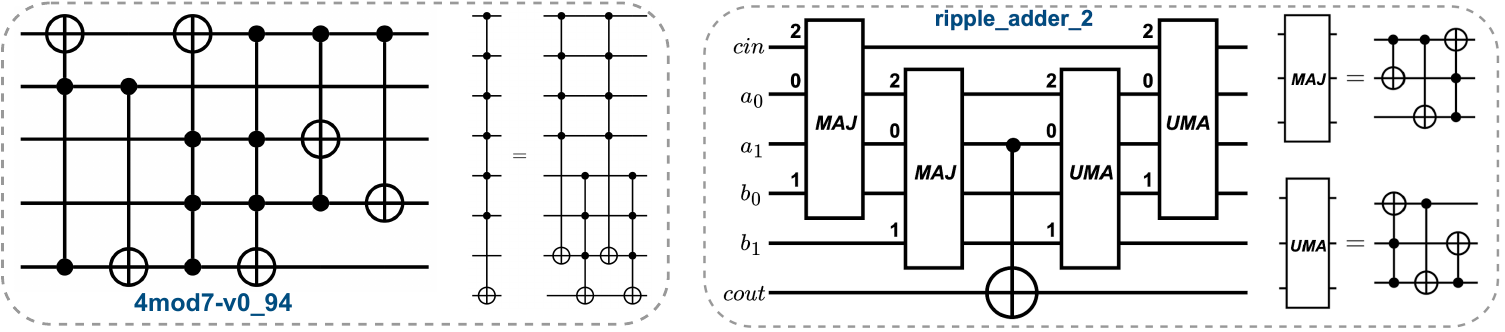}
    \caption{Examples of high-level program IRs.}
    \label{fig:program_example}
\end{figure}

\begin{figure}[tbp]
    \centering
    \includegraphics[width=\columnwidth]{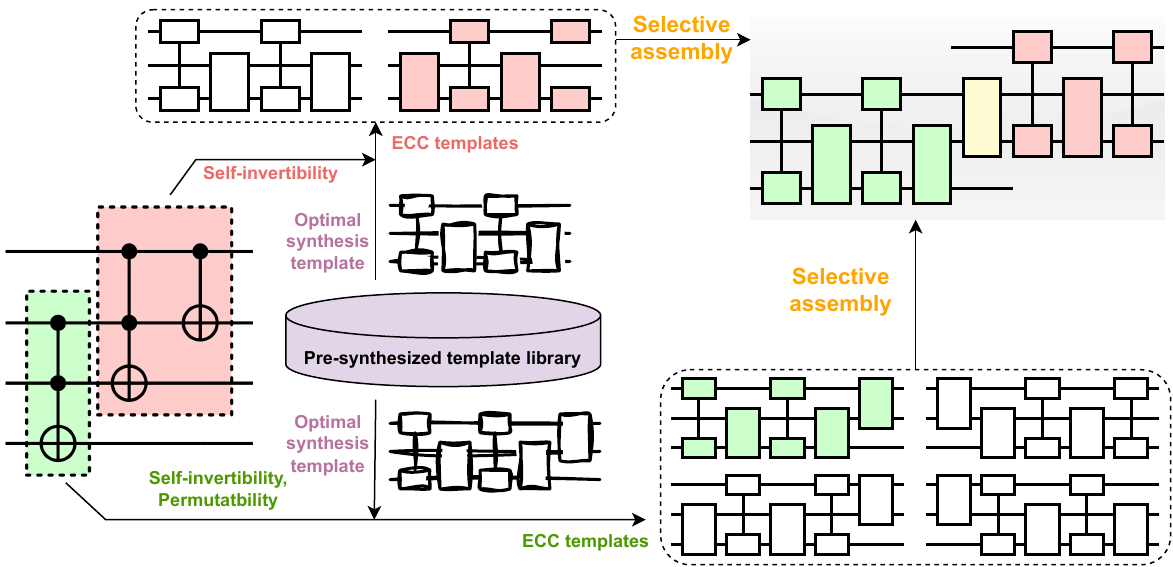}
    \caption{Template-based synthesis building on a pre-synthesized template library and selective assembly.}
    \label{fig:ecc_assembling}
\end{figure}

\subsubsection{Template-based synthesis}
The ReQISC compiler employs a \dquote{pre-synthesis followed by assembly} strategy for compiling programs in their high-level IRs. In the \emph{pre-synthesis} stage, all of the circuits are transformed into $\mathrm{CCX}$-based circuits via $\mathrm{MCX}$ decomposition~\cite{barenco1995elementary}.
The compiler extracts all distinct 3Q IRs from the benchmark suite, ignoring specific qubit indices, and determines their minimal-\#2Q $\SU(4)$ synthesis templates through multiple rounds of approximate synthesis. In the \emph{assembly} stage, the compiler unrolls 3Q IRs in an input $\mathrm{CCX}$-based circuit into their optimal synthesis schemes building on the pre-synthesized template library. The \emph{equivalent circuit classes (ECC)} of each IR and its $\SU(4)$ synthesis templates are derived, according to the \emph{self-invertibility} and \emph{control-bit permutability} of the 3Q IR. This enables further $\SU(4)$ gate count reduction when selectively assembling near-neighbor IRs from their alternative ECC templates, as two $\SU(4)$s acting on the same pair of qubits can be fused into one $\SU(4)$. \Cref{fig:ecc_assembling} illustrates this synthesis procedure of a circuit snippet with consecutive Toffoli and Peres gates. Since the distinct 3Q IRs in real-world programs are finite, this template-based approach is computationally efficient and calibration-friendly.




\subsection{Pushing toward practicality on hardware}\label{sec:compiler-calibration}

\subsubsection{Calibration overhead}\label{sec:compiler-practicality-calibration}
Herein we discuss how the ReQISC compiler overcomes the calibration issue. First, the number of distinct $\SU(4)$s introduced by template-based synthesis is negligible, as the distinct 3Q IR patterns are finite in real-world applications~\cite{wille2008revlib}. 
It is the hierarchical synthesis pass that incurs appreciable calibration overhead through exhaustive local optimization, in which approximate synthesis inevitably induces more diverse $\SU(4)$s' parameters. Therefore, employing the hierarchical synthesis pass means sacrificing calibration efficiency for greater \#2Q reduction.


Moreover, na\"ively applying our scheme to variational quantum programs (e.g., QAOA~\cite{farhi2014quantum}, UCCSD~\cite{barkoutsos2018quantum}) which necessitate multiple runs with variational gate parameters is unrealistic as it requires continual calibration of variational $ \SU(4) $s in experiments. Instead, the workload of reconfiguring variational $\SU(4)$ gates can be shifted to reconfiguring 1Q gates, by decomposing $\SU(4)$s into a gate set with fixed 2Q gates (e.g., $\SQiSW$, $\B$) and parametrized 1Q gates. These 1Q parameters can be easily calibrated with constant experimental overhead by means of the PMW protocol~\cite{chen2023compiling,gong2023robust,wei2024native,chen2025efficient}, which allows for the implementation of any continuous 1Q gate without explicit calibration by simply tuning the phase shift of microwave pulses. While this compromise may result in a slightly higher 2Q gate count, it represents a necessary trade-off for executing variational algorithms.

\subsubsection{Qubit routing} 
Even with sophisticated mapping algorithms, significant \#2Q overhead is induced during qubit routing by the insertion of $\SWAP$ gates~\cite{li2019tackling,liu2022not,zhang2021time}. Leveraging the expressiveness of $\SU(4)$, we develop a dedicated mapping algorithm, \emph{mirroring-SABRE}, by incorporating the $\SU(4)$-aware $\SWAP$ search strategy into the SABRE~\cite{li2019tackling}.

\begin{figure}[tbp]
    \centering
    \includegraphics[width=\columnwidth]{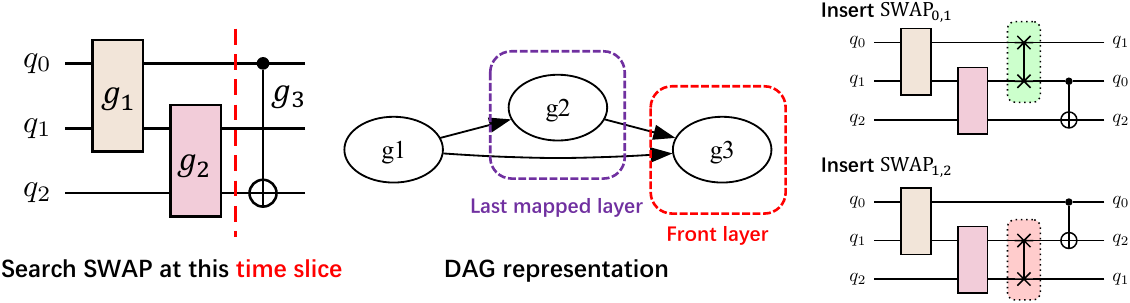}
    \caption{The flexibility of $\SWAP$ insertions is exploited to mitigate \#2Q overhead for qubit mapping on a 1D chain topology ($q_0 \leftrightarrow q_1 \leftrightarrow q_2$). Inserting $\SWAP_{1,2}$ is preferred over $\SWAP_{0,1}$, as it does not increase \#$\SU(4)$.}
    \label{fig:mirroring_sabre}
\end{figure}

Mirroring-SABRE inherits the basic thought of SABRE~\cite{li2019tackling}. SABRE attempts to map 2Q gates layer by layer via extracting the \dquote{front layer} $F$, peeling executable gates and searching $\SWAP$ gates to minimize the heuristic cost $H$ of the unresolved front layer. $H$ involving the current front layer and a $\SWAP$ candidate is designed to minimize the topological distance between upcoming qubits and promote parallelism. In our proposed mirroring-SABRE, we additionally define the \dquote{last mapped layer} $L$ as the set of 2Q gates that has no succeeding ones within the DAG constructed by already mapped 2Q gates, such that the $\SWAP$ search process prioritizes $\SWAP$ gates that $L$ can absorb, as illustrated in \Cref{fig:mirroring_sabre}. We also define an initial heuristic cost function $H_0(F, \mathrm{DAG}, \pi, D)$ before each $\SWAP$ search epoch with a similar calculation method
\begin{align*}
    H_0 := \frac{1}{\left|{F}\right|} \sum\nolimits_{g \in F} D_{\pi[g.q_1], \pi[g.q_2]} + \frac{W}{\left|{E}\right|} \sum\nolimits_{g \in E} D_{\pi[g.{q_1}],\pi[g.{q_2}]},
\end{align*}
relative to the original cost function $H(F, \mathrm{DAG}, \pi, D, \SWAP)$ from SABRE. 
The $\SWAP$ mirroring mechanism prioritizes $ \SWAP $ candidates that can be absorbed by $L$ while simultaneously reducing the heuristic cost ($H < H_0$), thus inducing no \#2Q overhead. If no such candidate meets these criteria, the $\SWAP$ search proceeds according to the $\mathrm{SABRE}$ heuristic.


Besides the flexibility to modify the $\SWAP$ search process, we choose SABRE as the backbone due to its \dquote{lookahead} mechanism. As SABRE evaluates qubit distances in upcoming layers apart from the front layer $ F $, physical qubits involved in the last mapped layer and those within $F$ tend to aggregate after multiple search periods. This emergent aggregation creates more opportunities to fuse a required $\SWAP$ with a preceding 2Q gate, thus reducing routing overhead.



\subsection{Overall framework and implementation}

\subsubsection{End-to-end compilation workflow}\label{sec:compiler-framework-worflow}

The ReQISC compiler pipeline consists of program-aware template-based synthesis, hardware-agnostic hierarchical synthesis, and finally hardware-aware Mirroring-SABRE routing. We perform the hardware-agnostic optimization before routing to maximize gate count reduction free from topological constraints. This approach allows us to approach theoretical \#$\SU(4)$ lower bound or less for each subcircuit. We observe that reversing this order provides no benefit while hindering flexibility and scalability.

\subsubsection{Implementation}

The ReQISC compiler is developed in Python to be a self-contained quantum programming {SDK}. It also operationalizes the ReQISC microarchitecture, enabling the generation of executable $\SU(4)$-based circuits with control parameters applicable to specific hardwares. In case of no available hardware, the output circuits are expressed in $\{ \CanGate{x}{y}{z}, \mathrm{U3}(\theta,\phi,\lambda) \}$. The approximate synthesis functionality leverages modules of BQSKit~\cite{bqskit}. A key feature is its configurable approach to the calibration-performance trade-off. We provide two practical compilation schemes: \dquote{\CompilerFull}, which employs all optimization passes including the hierarchical synthesis for aggressive gate count reduction, and \dquote{\CompilerEff}, which bypasses this pass for minimal calibration overhead.

\begin{table}[t]
    \centering
    \caption{\note{Benchmark suite characteristics. Metrics per category include qubit count range (\#Qubit), 2Q gate count (\numTwoQubit), 2Q depth (\depthTwoQubit), and original circuit duration ($T$). Duration is reported in units of $g^{-1}$, assuming a baseline $ \CNOT $ gate duration of $\pi/\sqrt{2}g$.}}

    \label{tab:benchmarks}
    \begin{footnotesize}
    \begin{tabular}{|l|r|r|r|r|}
        \hline
        \textbf{Category (\#)} & \textbf{\#Qubit} & \textbf{\numTwoQubit} & \textbf{\depthTwoQubit} & \textbf{Duration $T$} \\
        \hline
        alu (12) & 5-6 & 15-192 & 12-166 & 26.7-368.8 \\
        \hline
        bit\_adder (13) & 4-15 & 14-5.1k & 13-3.8k & 28.9-8.5k \\
        \hline
        comparator (19) & 5-6 & 13-155 & 12-134 & 26.7-297.7 \\
        \hline
        encoding (9) & 3-15 & 10-3.3k & 10-2.5k & 22.2-5.6k \\
        \hline
        grover (1) & 9-9 & 288-288 & 228-228 & 506.5-506.5 \\
        \hline
        hwb (12) & 4-170 & 26-4.4k & 25-3.5k & 55.5-7.8k \\
        \hline
        modulo (8) & 5-7 & 9-62 & 8-56 & 17.8-124.4 \\
        \hline
        mult (3) & 12-48 & 99-1.6k & 60-293 & 133.3-650.9 \\
        \hline
        pf (9) & 10-30 & 60-1.8k & 60-1.8k & 133.3-4.0k \\
        \hline
        qaoa (9) & 8-24 & 192-13.2k & 60-1.2k & 133.3-2.7k \\
        \hline
        qft (3) & 8-32 & 56-612 & 26-122 & 57.8-271 \\
        \hline
        ripple\_add (4) & 12-62 & 81-481 & 68-393 & 151.1-873 \\
        \hline
        square (3) & 10-13 & 745-3.8k & 547-3.0k & 1.2k-6.6k \\
        \hline
        sym (6) & 7-14 & 107-29.3k & 63-22.1k & 140-49.1k \\
        \hline
        tof (4) & 5-19 & 18-102 & 15-78 & 33.3-173.3 \\
        \hline
        uccsd (14) & 8-14 & 630-14.4k & 615-13.6k & 1.4k-30.2k \\
        \hline
        urf (3) & 8-9 & 9.0k-23.6k & 7.3k-18.6k & 16.2k-41.3k \\
        \hline
        \textbf{Overall} (132) & 3-170 & 9-3.3k & 8-2.2k & 17.8-49.1k \\
        \hline
    \end{tabular}
    \end{footnotesize}
\end{table}

\section{Evaluation}\label{sec:Evaluation}

\begin{table*}[t]
    \centering
    \caption{\note{Logical-level compilation comparisons in terms of the reduction rates of \#2Q, Depth2Q, and pulse duration. \textsc{Eff.}/\textsc{Full.} denote ReQISC-Eff/Full. For near-identity $\SU(4)$ gates (in \code{qft}, \code{pf}, \code{qaoa} and \code{uccsd}), the gate mirroring technique introduced in \Cref{sec:gate_mirroring} is applied. Durations are evaluated under the $XY$-coupled Hamiltonian, with baseline $\CNOT$ duration $\tau = \pi / \sqrt{2}g$.}}
    
    
    \label{tab:result}
    \begin{footnotesize}

    \begin{tabular}{|l||r|r|r|r|r||r|r|r|r|r||r|r|r|r|r|}
        \hline
        \textbf{Benchmarks}
        & \multicolumn{5}{c||}{\textbf{Average reduction of \numTwoQubit\ (\%)}}
        & \multicolumn{5}{c||}{\textbf{Average reduction of \depthTwoQubit\ (\%)}}
        & \multicolumn{5}{c|}{\textbf{Average reduction of Duration (\%)}} \\
        \hline
        {Category (\#)} & {Qiskit} & {TKet} & {BQSKit} & {\textsc{Eff.}} & {\textsc{Full.}} & {Qiskit} & {TKet} & {BQSKit} & {\textsc{Eff.}} & {\textsc{Full.}} & {Qiskit} & {TKet} & {BQSKit} & {\textsc{Eff.}} & {\textsc{Full.}} \\
        \hline
        alu (12) & 1.54\cellcolor{SkyBlue!15} & 1.72\cellcolor{SkyBlue!30} & 8.12\cellcolor{SkyBlue!45} & 29.83\cellcolor{SkyBlue!60} & 32.01\cellcolor{SkyBlue!70} & 1.85\cellcolor{SkyBlue!15} & 2.20\cellcolor{SkyBlue!30} & 9.20\cellcolor{SkyBlue!45} & 26.41\cellcolor{SkyBlue!60} & 28.40\cellcolor{SkyBlue!70} & 1.85\cellcolor{SkyBlue!15} & 2.20\cellcolor{SkyBlue!30} & 9.20\cellcolor{SkyBlue!45} & 52.41\cellcolor{SkyBlue!60} & 54.48\cellcolor{SkyBlue!70} \\
        \hline
        bit\_adder (13) & 9.43\cellcolor{SkyBlue!15} & 9.70\cellcolor{SkyBlue!30} & 13.81\cellcolor{SkyBlue!45} & 33.01\cellcolor{SkyBlue!60} & 40.38\cellcolor{SkyBlue!70} & 8.21\cellcolor{SkyBlue!15} & 8.36\cellcolor{SkyBlue!30} & 11.33\cellcolor{SkyBlue!45} & 22.75\cellcolor{SkyBlue!60} & 30.45\cellcolor{SkyBlue!70} & 8.21\cellcolor{SkyBlue!15} & 8.36\cellcolor{SkyBlue!30} & 11.33\cellcolor{SkyBlue!45} & 52.57\cellcolor{SkyBlue!60} & 56.14\cellcolor{SkyBlue!70} \\
        \hline
        comparator (19) & 1.13\cellcolor{SkyBlue!15} & 6.66\cellcolor{SkyBlue!45} & 6.15\cellcolor{SkyBlue!30} & 27.56\cellcolor{SkyBlue!60} & 32.50\cellcolor{SkyBlue!70} & 1.29\cellcolor{SkyBlue!15} & 7.11\cellcolor{SkyBlue!45} & 3.86\cellcolor{SkyBlue!30} & 21.91\cellcolor{SkyBlue!60} & 26.25\cellcolor{SkyBlue!70} & 1.29\cellcolor{SkyBlue!15} & 7.11\cellcolor{SkyBlue!45} & 3.86\cellcolor{SkyBlue!30} & 47.17\cellcolor{SkyBlue!60} & 50.44\cellcolor{SkyBlue!70} \\
        \hline
        encoding (9) & 5.66\cellcolor{SkyBlue!15} & 7.78\cellcolor{SkyBlue!30} & 13.20\cellcolor{SkyBlue!45} & 36.49\cellcolor{SkyBlue!60} & 38.88\cellcolor{SkyBlue!70} & 5.66\cellcolor{SkyBlue!15} & 7.86\cellcolor{SkyBlue!30} & 12.40\cellcolor{SkyBlue!45} & 33.94\cellcolor{SkyBlue!60} & 36.46\cellcolor{SkyBlue!70} & 5.66\cellcolor{SkyBlue!15} & 7.86\cellcolor{SkyBlue!30} & 12.40\cellcolor{SkyBlue!45} & 57.15\cellcolor{SkyBlue!60} & 59.14\cellcolor{SkyBlue!70} \\
        \hline
        grover (1) & 0.00\cellcolor{SkyBlue!15} & 0.00\cellcolor{SkyBlue!30} & 17.36\cellcolor{SkyBlue!45} & 44.44\cellcolor{SkyBlue!60} & 44.44\cellcolor{SkyBlue!70} & 0.00\cellcolor{SkyBlue!15} & 0.00\cellcolor{SkyBlue!30} & 20.61\cellcolor{SkyBlue!45} & 30.70\cellcolor{SkyBlue!60} & 30.70\cellcolor{SkyBlue!70} & 0.00\cellcolor{SkyBlue!15} & 0.00\cellcolor{SkyBlue!30} & 20.61\cellcolor{SkyBlue!45} & 53.69\cellcolor{SkyBlue!60} & 53.69\cellcolor{SkyBlue!70} \\
        \hline
        hwb (12) & 1.28\cellcolor{SkyBlue!15} & 3.62\cellcolor{SkyBlue!30} & 9.95\cellcolor{SkyBlue!45} & 25.89\cellcolor{SkyBlue!60} & 31.80\cellcolor{SkyBlue!70} & 1.20\cellcolor{SkyBlue!15} & 4.09\cellcolor{SkyBlue!30} & 10.69\cellcolor{SkyBlue!45} & 27.18\cellcolor{SkyBlue!60} & 34.17\cellcolor{SkyBlue!70} & 1.20\cellcolor{SkyBlue!15} & 4.09\cellcolor{SkyBlue!30} & 10.69\cellcolor{SkyBlue!45} & 51.52\cellcolor{SkyBlue!60} & 58.11\cellcolor{SkyBlue!70} \\
        \hline
        modulo (8) & 9.64\cellcolor{SkyBlue!15} & 9.84\cellcolor{SkyBlue!30} & 14.92\cellcolor{SkyBlue!45} & 34.18\cellcolor{SkyBlue!60} & 38.90\cellcolor{SkyBlue!70} & 10.67\cellcolor{SkyBlue!15} & 10.67\cellcolor{SkyBlue!30} & 15.51\cellcolor{SkyBlue!45} & 32.49\cellcolor{SkyBlue!60} & 38.80\cellcolor{SkyBlue!70} & 10.67\cellcolor{SkyBlue!15} & 10.67\cellcolor{SkyBlue!30} & 15.51\cellcolor{SkyBlue!45} & 58.94\cellcolor{SkyBlue!60} & 62.74\cellcolor{SkyBlue!70} \\
        \hline
        mult (3) & 0.00\cellcolor{SkyBlue!15} & 0.25\cellcolor{SkyBlue!45} & 0.00\cellcolor{SkyBlue!30} & 16.05\cellcolor{SkyBlue!60} & 16.05\cellcolor{SkyBlue!70} & 0.00\cellcolor{SkyBlue!15} & 0.24\cellcolor{SkyBlue!45} & 0.00\cellcolor{SkyBlue!30} & 11.53\cellcolor{SkyBlue!60} & 11.53\cellcolor{SkyBlue!70} & 0.00\cellcolor{SkyBlue!15} & 0.24\cellcolor{SkyBlue!45} & 0.00\cellcolor{SkyBlue!30} & 41.00\cellcolor{SkyBlue!60} & 41.00\cellcolor{SkyBlue!70} \\
        \hline
        pf (9) & 0.96\cellcolor{SkyBlue!15} & 27.41\cellcolor{SkyBlue!45} & 2.20\cellcolor{SkyBlue!30} & 87.59\cellcolor{SkyBlue!60} & 87.59\cellcolor{SkyBlue!70} & 0.96\cellcolor{SkyBlue!15} & 71.85\cellcolor{SkyBlue!45} & 3.46\cellcolor{SkyBlue!30} & 98.63\cellcolor{SkyBlue!60} & 98.63\cellcolor{SkyBlue!70} & 0.96\cellcolor{SkyBlue!15} & 71.85\cellcolor{SkyBlue!45} & 3.46\cellcolor{SkyBlue!30} & 98.70\cellcolor{SkyBlue!60} & 98.70\cellcolor{SkyBlue!70} \\
        \hline
        qaoa (9) & 0.00\cellcolor{SkyBlue!15} & 0.00\cellcolor{SkyBlue!30} & 0.00\cellcolor{SkyBlue!45} & 58.47\cellcolor{SkyBlue!60} & 58.47\cellcolor{SkyBlue!70} & 0.00\cellcolor{SkyBlue!15} & 0.00\cellcolor{SkyBlue!30} & 0.00\cellcolor{SkyBlue!45} & 70.96\cellcolor{SkyBlue!60} & 70.96\cellcolor{SkyBlue!70} & 0.00\cellcolor{SkyBlue!15} & 0.00\cellcolor{SkyBlue!30} & 0.00\cellcolor{SkyBlue!45} & 80.78\cellcolor{SkyBlue!60} & 80.78\cellcolor{SkyBlue!70} \\
        \hline
        qft (3) & 0.00\cellcolor{SkyBlue!15} & 0.00\cellcolor{SkyBlue!30} & 0.00\cellcolor{SkyBlue!45} & 50.00\cellcolor{SkyBlue!60} & 50.00\cellcolor{SkyBlue!70} & 0.00\cellcolor{SkyBlue!15} & 0.00\cellcolor{SkyBlue!30} & 0.00\cellcolor{SkyBlue!45} & 50.00\cellcolor{SkyBlue!60} & 50.00\cellcolor{SkyBlue!70} & 0.00\cellcolor{SkyBlue!15} & 0.00\cellcolor{SkyBlue!30} & 0.00\cellcolor{SkyBlue!45} & 50.77\cellcolor{SkyBlue!60} & 50.77\cellcolor{SkyBlue!70} \\
        \hline
        ripple\_add (4) & 0.00\cellcolor{SkyBlue!15} & 6.21\cellcolor{SkyBlue!30} & 12.37\cellcolor{SkyBlue!45} & 31.06\cellcolor{SkyBlue!60} & 49.57\cellcolor{SkyBlue!70} & 0.00\cellcolor{SkyBlue!15} & 0.72\cellcolor{SkyBlue!30} & 11.86\cellcolor{SkyBlue!45} & 30.82\cellcolor{SkyBlue!60} & 39.07\cellcolor{SkyBlue!70} & 0.00\cellcolor{SkyBlue!15} & 0.72\cellcolor{SkyBlue!30} & 11.86\cellcolor{SkyBlue!45} & 53.75\cellcolor{SkyBlue!60} & 58.69\cellcolor{SkyBlue!70} \\
        \hline
        square (3) & 0.02\cellcolor{SkyBlue!15} & 0.20\cellcolor{SkyBlue!30} & 9.49\cellcolor{SkyBlue!45} & 31.42\cellcolor{SkyBlue!60} & 38.16\cellcolor{SkyBlue!70} & 0.00\cellcolor{SkyBlue!15} & 0.26\cellcolor{SkyBlue!30} & 5.59\cellcolor{SkyBlue!45} & 19.75\cellcolor{SkyBlue!60} & 24.25\cellcolor{SkyBlue!70} & 0.00\cellcolor{SkyBlue!15} & 0.26\cellcolor{SkyBlue!30} & 5.59\cellcolor{SkyBlue!45} & 45.98\cellcolor{SkyBlue!60} & 50.95\cellcolor{SkyBlue!70} \\
        \hline
        sym (6) & 3.42\cellcolor{SkyBlue!15} & 3.86\cellcolor{SkyBlue!30} & 9.67\cellcolor{SkyBlue!45} & 30.16\cellcolor{SkyBlue!60} & 35.42\cellcolor{SkyBlue!70} & 0.62\cellcolor{SkyBlue!15} & 1.31\cellcolor{SkyBlue!30} & 4.67\cellcolor{SkyBlue!45} & 18.93\cellcolor{SkyBlue!60} & 23.81\cellcolor{SkyBlue!70} & 0.62\cellcolor{SkyBlue!15} & 1.31\cellcolor{SkyBlue!30} & 4.67\cellcolor{SkyBlue!45} & 45.92\cellcolor{SkyBlue!60} & 50.75\cellcolor{SkyBlue!70} \\
        \hline
        tof (4) & 0.00\cellcolor{SkyBlue!15} & 4.46\cellcolor{SkyBlue!45} & 0.00\cellcolor{SkyBlue!30} & 30.29\cellcolor{SkyBlue!60} & 30.29\cellcolor{SkyBlue!70} & 0.00\cellcolor{SkyBlue!15} & 3.51\cellcolor{SkyBlue!45} & 0.00\cellcolor{SkyBlue!30} & 12.38\cellcolor{SkyBlue!60} & 12.38\cellcolor{SkyBlue!70} & 0.00\cellcolor{SkyBlue!15} & 3.51\cellcolor{SkyBlue!45} & 0.00\cellcolor{SkyBlue!30} & 39.39\cellcolor{SkyBlue!60} & 39.39\cellcolor{SkyBlue!70} \\
        \hline
        uccsd (14) & 23.46\cellcolor{SkyBlue!30} & 64.03\cellcolor{SkyBlue!45} & 6.71\cellcolor{SkyBlue!15} & 76.50\cellcolor{SkyBlue!60} & 84.11\cellcolor{SkyBlue!70} & 23.49\cellcolor{SkyBlue!30} & 67.53\cellcolor{SkyBlue!45} & 7.48\cellcolor{SkyBlue!15} & 78.45\cellcolor{SkyBlue!60} & 85.63\cellcolor{SkyBlue!70} & 23.49\cellcolor{SkyBlue!30} & 67.53\cellcolor{SkyBlue!45} & 7.48\cellcolor{SkyBlue!15} & 84.02\cellcolor{SkyBlue!60} & 89.25\cellcolor{SkyBlue!70} \\
        \hline
        urf (3) & 0.91\cellcolor{SkyBlue!15} & 4.47\cellcolor{SkyBlue!30} & 6.81\cellcolor{SkyBlue!45} & 20.61\cellcolor{SkyBlue!60} & 25.90\cellcolor{SkyBlue!70} & 0.83\cellcolor{SkyBlue!15} & 4.66\cellcolor{SkyBlue!45} & 3.30\cellcolor{SkyBlue!30} & 14.34\cellcolor{SkyBlue!60} & 20.77\cellcolor{SkyBlue!70} & 0.83\cellcolor{SkyBlue!15} & 4.66\cellcolor{SkyBlue!45} & 3.30\cellcolor{SkyBlue!30} & 42.71\cellcolor{SkyBlue!60} & 48.55\cellcolor{SkyBlue!70} \\
        \hline
        \textbf{Overall} (132) & \emph{5.34}\cellcolor{SkyBlue!15} & \emph{15.91}\cellcolor{SkyBlue!45} & \emph{7.99}\cellcolor{SkyBlue!30} & \emph{46.95}\cellcolor{SkyBlue!60} & \emph{51.89}\cellcolor{SkyBlue!70} 
            & \emph{5.2}\cellcolor{SkyBlue!15} & \emph{21.83}\cellcolor{SkyBlue!30} & \emph{7.34}\cellcolor{SkyBlue!45} & \emph{53.43}\cellcolor{SkyBlue!60} & \emph{57.5}\cellcolor{SkyBlue!70} 
            & \emph{5.2}\cellcolor{SkyBlue!15} & \emph{21.83}\cellcolor{SkyBlue!30} & \emph{7.34}\cellcolor{SkyBlue!45} & \emph{68.03}\cellcolor{SkyBlue!60} & \emph{71.0}\cellcolor{SkyBlue!70} \\
        \hline
    \end{tabular}
    \end{footnotesize}

\end{table*}


We evaluate ReQISC against state-of-the-art (SOTA) counterparts on a comprehensive benchmark suite, demonstrating its superiority across multiple hardware types and showing clear advantages in its microarchitecture, as well as in end-to-end logical-level (topology-agnostic) and topology-aware compilation. All experiments were run on an Apple M3 Max laptop.




\subsection{Experimental setup}

\subsubsection{Metrics}
We focus on \numTwoQubit\ (2Q gate count), \depthTwoQubit\ (depth of the circuit involving only 2Q gates), {pulse duration, and program fidelity} as metrics.
The whole circuit duration corresponds to the critical path with the longest pulse duration. Program fidelity is measured by the Hellinger fidelity between noisy simulation and ideal simulation.


\subsubsection{Baselines}\label{sec:evaluation-baseline}
\dquote{Qiskit}, \dquote{TKet}, and \dquote{BQSKit} are three primary baselines for comparison. In Qiskit, the standard O3 \code{transpile} pass is used. In Tket, the standard compilation procedure is constructed by sequentially employing \code{PauliSimp} and \code{FullPeepholeOptimise} passes, {in which \code{PauliSimp} is the built-in Hamiltonian simulation program optimization engine~\cite{cowtan2019phase}}. To demonstrate that ReQISC's benefits stem from its co-designed compilation strategy, we also compare against {three} variants: \dquote{Qiskit-$\SU(4)$} and \dquote{TKet-$\SU(4)$}, with a pass that partitions and fuses 2Q gates into $ \SU(4) $s appended to the standard Qiskit/TKet compilation procedure, and \dquote{BQSKit-$\SU(4)$}, which employs its \code{compile} pass by designating a customized $\{\Can, \mathrm{U3}\}$ gate set.




\subsubsection{Benchmarks}\label{sec:benchmarks}
The benchmark suite comprises $ 132 $ programs from $ 17 $ categories (\code{alu}, \code{encoding}, etc.) shown in \Cref{tab:benchmarks}. Program sizes indicated by \#2Q in $\CNOT$-based circuit representation range from tens to $10^4$. Most programs are selected from published benchmark suites~\cite{wille2008revlib,sivarajah2020t}.

Notations of most benchmarks align with those used in published benchmark suites for which we refer readers to \href{https://www.revlib.org/}{RevLib} and \href{https://github.com/CQCL/tket_benchmarking/}{TKet's benchmarking repo}. For benchmarks whose high-level algorithmic description is provided in \code{.real} format, featuring $\mathrm{MCX}$ gates as basic subroutines, we convert them into $\mathrm{CCX}$-based circuits as the input for the ReQISC compiler and further decompose them into $\CNOT$-based circuits as the input for baselines. For variational quantum programs, Pauli strings with coefficients are compiled into $ \SU(4) $ gate sequences by the high-level ISA-independent PHOENIX compiler~\cite{yang2025phoenix} before being used as the input for ReQISC, while $ \CNOT $-based circuits are fed to baselines.


\begin{table}[tbp]
    \centering
    \caption{Synthesis cost comparison in terms of gate duration $\tau$ ($g^{-1}$) achieved by ReQISC. Haar-random average costs (number of 2Q basis gates required for arbitrary $\SU(4)$ synthesis) for $ \CNOT $, $ \iSWAP $, $ \SQiSW $, $ \B $ gates are 3, 3, 2.21, and 2, respectively.}
    
    \label{tab:gate-duration}
    \setlength{\tabcolsep}{4pt}

    \begin{footnotesize}
        \begin{tabular}{|l|c|c|c|c|c|c|}
            \hline
            \textbf{Coupling} & \multicolumn{2}{c|}{\textbf{$ \mathrm{XY} $ coupling}} & \multicolumn{2}{c|}{\textbf{$ \mathrm{XX} $ coupling}} & \multicolumn{2}{c|}{\textbf{Random}} \\
            \hline
            \textbf{Basis gate} & $\tau$ (Sgl.) & $\tau$ (Avg.) & $\tau$ (Sgl.) & $\tau$ (Avg.) & $\tau$ (Sgl.) & $\tau$ (Avg.) \\
            \hline
            CNOT & 2.221 & 6.664 & -- & -- & -- & -- \\
            \hline
            SU(4) & -- & \textbf{1.341} & -- & \textbf{1.178} & -- & \textbf{1.321} \\
            \hline
            CNOT & 1.571 & 4.712 & 0.785 & 2.356 & 1.228 & 3.684 \\
            \hline
            iSWAP & 1.571 & 4.712 & 1.571 & 4.712 & 1.898 & 5.693 \\
            \hline
            SQiSW & 0.785 & 1.736 & 0.785 & 1.736 & 0.949 & 2.097 \\
            \hline
            B & 1.571 & 4.712 & 1.178 & 2.356 & 1.435 & 2.869 \\
            \hline
        \end{tabular}
    \end{footnotesize}

\end{table}

\subsubsection{Hardware assumptions}
The ReQISC microarchitecture is evaluated under representative coupling Hamiltonians, including $ \mathrm{XY} $, $ \mathrm{XX} $, and random couplings. Since $\mathrm{XY}$ coupling aligns with the mainstream flux-tunable superconducting platforms (e.g., Google's Sycamore~\cite{arute2019quantum}), it is the default assumption for pulse-level benchmarking and fidelity experiments. Baselines are compared using their conventional, optimized pulse implementations for $\CNOT$, $\iSWAP$, and $\SQiSW$ gates under $ \mathrm{XY} $ coupling~\cite{krantz2019quantum,arute2019quantum,huang2023quantum}.


\subsection{Performance of ReQISC microarchitecture}\label{sec:eval-microarch}



We benchmark the ReQISC microarchitecture by computing the average pulse duration to synthesize $10^5$ Haar-random $\SU(4)$s (\Cref{tab:gate-duration}). For mainstream $ \mathrm{XY} $-coupled platforms, ReQISC slashes the average synthesis cost to just $1.341\,g^{-1}$---a nearly five-fold improvement over $6.664\,g^{-1}$ ($3\times \pi/\sqrt{2}\approx 6.664  $~\cite{krantz2019quantum}) required by conventional $\CNOT$ realization. This efficacy is consistent across hardware interactions, with average durations of $1.178\,g^{-1}$ for $ \mathrm{XX} $ coupling and $1.321\,g^{-1}$ for arbitrary random couplings. These results demonstrate that a full $\SU(4)$ ISA, when powered by our microarchitecture, fundamentally outperforms conventional approaches. Our scheme is so efficient that even when used to realize a fixed gate like $\SQiSW$, its performance is nearly on par (within a factor of $1.6$) with that of the fully arbitrary $\SU(4)$.




\subsection{Logical-level compilation}





As detailed in \Cref{tab:result}, ReQISC delivers substantial improvements over all baselines across every key metric. Particularly, ReQISC consistently slashes the total pulse-level circuit duration by $40$\%--$90$\% across all benchmarks, with an overall average of $ 68\% $ ($ 71\% $) for \CompilerEff\ (\CompilerFull). The difference between the results for \CompilerEff\ versus \CompilerFull\ is much more pronounced for non-variational, arithmetic-logic programs, suggesting that they benefit more from local optimization via hierarchical synthesis. That aligns with our expectation since they exhibit versatile subcircuit patterns.


\begin{figure}[tbp]
    \centering

    \includegraphics[width=\columnwidth]{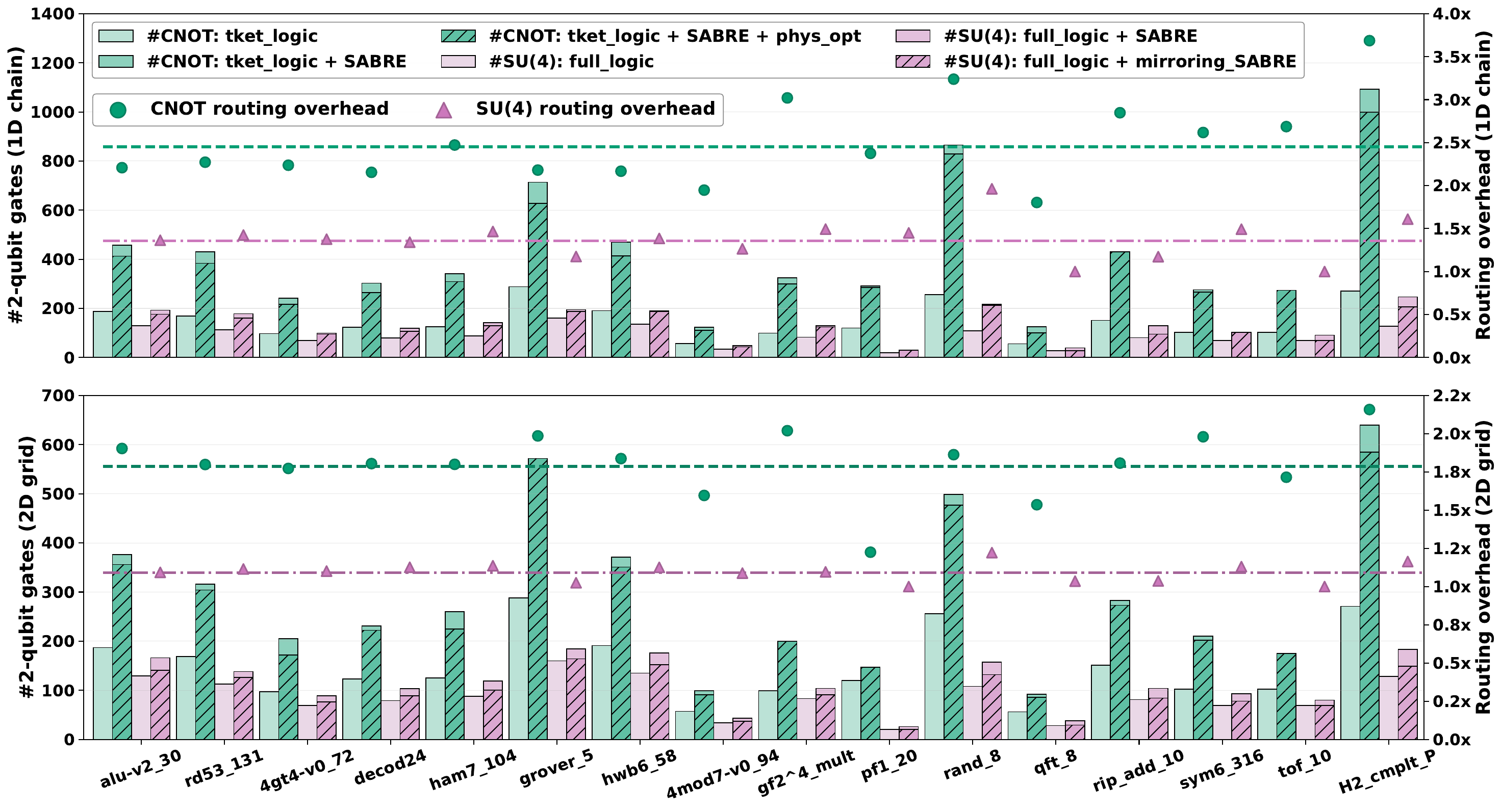}

    \caption{topology-aware benchmarking for 1D-chain (upper) and 2D-grid (bottom) mapping. \code{xxx\_logic}: logical circuits (pre-mapping); \code{phys\_opt}: physical circuits with Qiskit O3 topology-preserving optimization. Green dots and blue triangles indicate the multiples of \#2Q after optimized qubit mapping (hatched-green or hatched-blue) relative to logical circuits (light-green or light-blue). Dashed horizontal lines refer to the geometric-mean overheads.}
    \label{fig:topology-aware}
\end{figure}

\begin{figure}[tbp]
    \centering
    \includegraphics[width=\columnwidth]{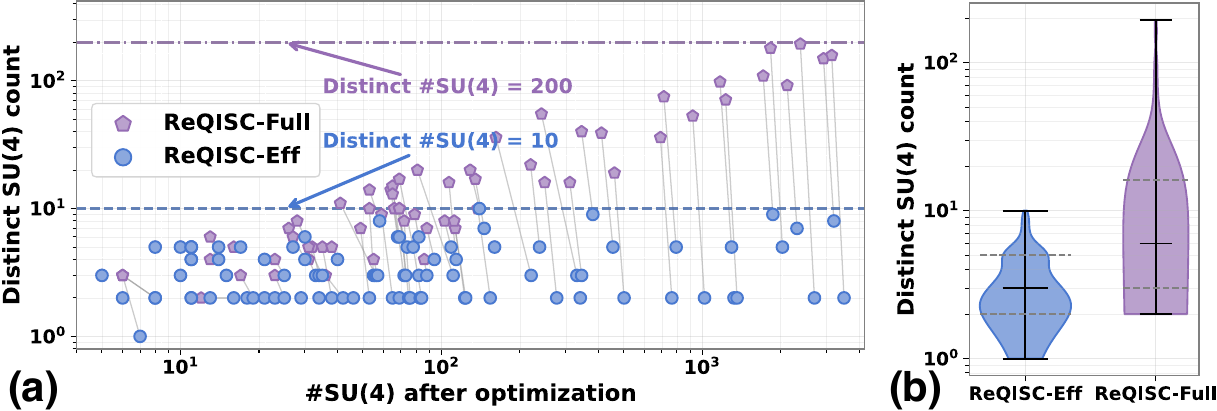}
    \caption{\note{Calibration efficiency and trade-off with \#2Q reduction (\CompilerEff\ vs. \CompilerFull). (a) Each connected pair of points corresponds to a benchmark program. Scaling trends remain consistent for larger circuits (not shown due to fewer data points). (b) Distinct $ \SU(4) $ count distributions.}}
    \label{fig:tradeoff}
\end{figure}

\subsection{Topology-aware compilation}

\Cref{fig:topology-aware} illustrates ReQISC's performance with qubit mapping/routing, across medium-scale benchmarks per category. Results show that mirroring-SABRE outperforms conventional SABRE paired with ReQISC, with \#2Q reduction {up to $28.2$\% ($23.7$\%) and on average $11.0$\% ($15.7$\%)} for 1D-chain (2D-grid) mapping. This effect is also program-dependent---for instance, mirroring-SABRE incurs no additional \#2Q overhead on \code{qft\_8} and \code{tof\_10}. The advantage is even more pronounced when comparing the end-to-end ReQISC compilation against the SOTA $\CNOT$-based baseline (hatched-blue vs. hatched-green). Specifically, the geometric mean of \#$\SU(4)$ overhead is {$1.36$x ($1.09$x)} for 1D-chain (2D-grid) mapping; for \#$\CNOT$, it is $2.45$x ($1.79$x). These results underscore the ReQISC's superiority in topology-aware compilation.

\subsection{Calibration efficiency and tradeoff discussion}\label{sec:tradeoff}




We evaluate the trade-off between calibration overhead and 2Q gate count reduction by comparing the number of distinct $\SU(4)$ gates in circuits compiled by \CompilerEff\ and \CompilerFull\ for benchmarks with \#2Q up to $ 5000 $. As shown in \Cref{fig:tradeoff}, \CompilerEff\ has negligible calibration overhead as expected, with less than 10 distinct $\SU(4)$s. For \CompilerFull, the number of distinct SU(4)s is below $200$ (\Cref{fig:tradeoff}(a)), \note{while more than three quarters of \CompilerFull-compiled programs exhibits less than $ 20 $ distinct 2Q gates (\Cref{fig:tradeoff}(b))}. This demonstrates a practical and controllable trade-off for users seeking the highest performance, allowing for a balance between gate fidelity with experimental complexity.




\note{
\subsubsection{System-level calibration scalability}\label{sec:tradeoff:calibration_scalability}

A critical concern for adopting continuous ISAs is whether the calibration cost scales prohibitively with the number of user programs.
We argue that ReQISC avoids this potential \dquote{calibration explosion} through several synergistic mechanisms.

First, the structural regularity of quantum algorithms ensures a \dquote{bounded template library}. Implementing arbitrary 2Q gates does not require supporting every possible unitary in practice; instead, most programs require only a few distinct $\SU(4)$ operations. 
This implies that a system provider only needs to maintain a calibrated library of a few dozen \dquote{standard gates} to serve a vast range of applications, keeping the marginal cost of supporting new programs negligible.

Second, our framework utilizes \dquote{model-based parameter generation}. Since the ReQISC microarchitecture is physically derived from the coupling Hamiltonian, once the underlying hardware model (e.g., $g$ and drive strengths) is characterized---a standard periodic maintenance task---control parameters for any new $\SU(4)$ can be analytically generated or interpolated with high initial fidelity~\cite{chen2024one,chen2025efficient}. Recent experimental work by \citet{kakkar2025noneed} further validates this approach, demonstrating that a continuous gate family can be implemented with minimal to zero extra calibration cost by characterizing only a small subset of the parameter space.

Furthermore, we emphasize that ReQISC's \dquote{reconfigurability} aligns with existing experimental standards. High-fidelity computation typically necessitates recalibration before running any nontrivial program, even on fixed-ISA systems. The perception of high cost for rich ISAs is largely a byproduct of them not yet being mainstream. In practice, the total calibration cost scales linearly with distinct $\SU(4)$ count. It is experimentally feasible, as evidenced by the successful calibration of six distinct gates in \citet{chen2025efficient}.


While the number of distinct gates affects calibration overhead, a larger gate set may also introduce more complex parallel crosstalk calibration. Fortunately, our unified control framework for all $\SU(4)$ operations significantly simplifies this process. This is demonstrated by the successful calibration of six distinct gates in \citet{chen2025efficient}, as previously noted, and could be further streamlined by the techniques proposed in \citet{kakkar2025noneed}.


}

\subsection{Ablation study and Breakdown analysis}\label{sec:ablation-study}

\begin{figure}[tbp]
    \centering
    \includegraphics[width=\columnwidth]{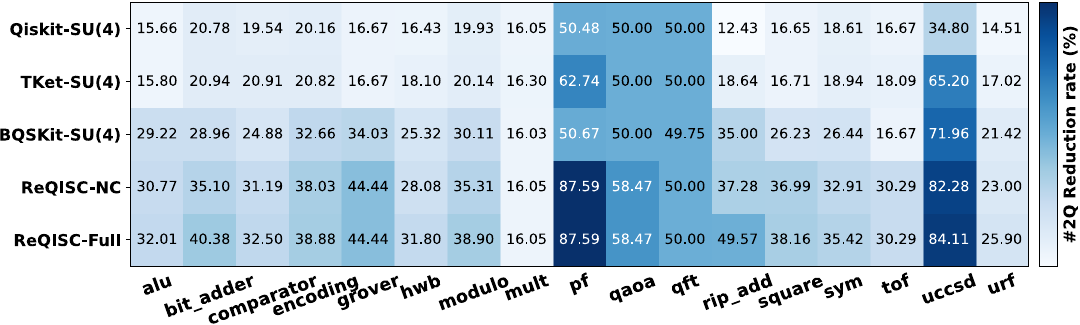}
    \caption{Ablation study benchmarking results.} 
    \label{fig:ablation_study}
\end{figure}

\subsubsection{Efficacy of specialized passes}

To isolate the benefits of our co-designed compilation strategy, we conduct the ablation study for \CompilerFull\ against Qiskit-$\SU(4)$, TKet-$\SU(4)$, and BQSKit-$\SU(4)$. As shown in \Cref{fig:ablation_study}, \CompilerFull\ still significantly outperforms these variants, confirming that its global program-aware and hierarchical synthesis passes are critical for effective $\SU(4)$ compilation. Notably, while BQSKit-$\SU(4)$ achieves a reasonable gate count reduction, it is impractical due to an explosion of distinct $\SU(4)$ gates, which implies an uncontrollable calibration overhead. In contrast, \CompilerFull\ maintains a manageable, sub-linearly scaling set of distinct gates---an order of magnitude fewer in our tests (\Cref{fig:tradeoff}).

\subsubsection{Impact of DAG Compacting}
Our breakdown analysis confirms the importance of the DAG compacting pass. We first observed that different circuit partitioning strategies had a negligible impact on performance (less than $2\%$ variation). The DAG compacting pass, however, is crucial. Comparing \CompilerFull\ against a version without this pass (\dquote{ReQISC-NC}) (\dquote{NC} denotes no-compacting) reveals substantial gains. As shown in \Cref{fig:ablation_study}, this single pass improves the \numTwoQubit\ reduction rate by up to $33\%$ (e.g., for \texttt{rip\_add} benchmarks), demonstrating its effectiveness in exploiting approximate commutation rules.


\subsection{Fidelity experiment}\label{sec:fidelity-experiment}

\begin{figure}[tbp]
    \centering
    \includegraphics[width=\columnwidth]{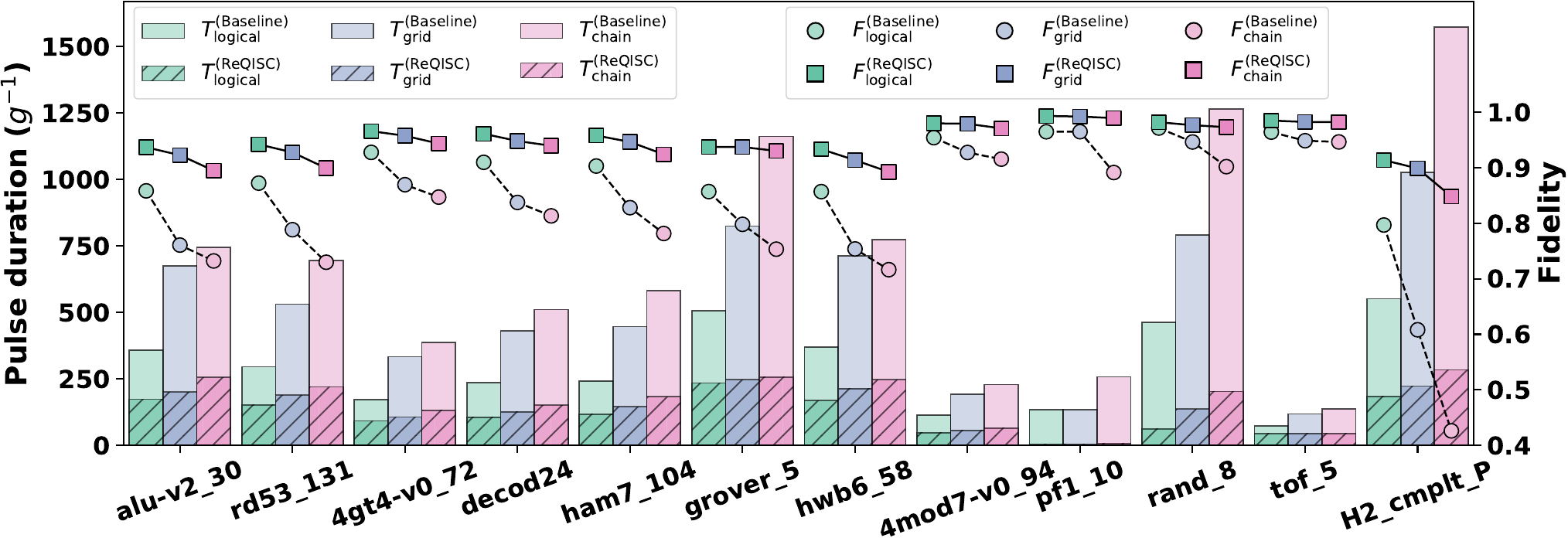}
    \caption{Program fidelity $F$ and pulse duration $T$ comparison through noisy simulation.}
    \label{fig:fidelity-experiment}
\end{figure}

We performed noisy simulations using QSim~\cite{quantum_ai_team_and_collaborators_2020_4023103} on twelve representative benchmarks, comparing ReQISC against a SOTA $\CNOT$-based workflow (TKet with SABRE mapping). Our model assumed a $ \mathrm{XY} $-coupled device with the depolarizing channel appended to 2Q gates, where error rates were scaled proportionally to the gate duration ($\tau$) of each $ \CNOT $ ($ \tau_0 = \pi/\sqrt{2}\,g^{-1};\, p_0 = 0.001 $) or $\SU(4)$ operation ($\tau;\, p=p_0 \tau/\tau_0 $). As shown in \Cref{fig:fidelity-experiment}, ReQISC delivers substantially higher fidelity and faster execution across the board. At the logical level (all-to-all topology), it achieves an average $2.36$x error reduction and a $3.06$x execution speedup. These advantages become even more pronounced when mapping to topology-constrained devices, with error reductions of $3.18$x (2D grid) and $3.34$x (1D chain), and corresponding speedups of $4.30$x and $4.55$x, respectively.

\subsection{Reliability and scalability}\label{sec:reliability-and-scalability}

\begin{figure}[tbp]
    \centering

    \includegraphics[width=\columnwidth]{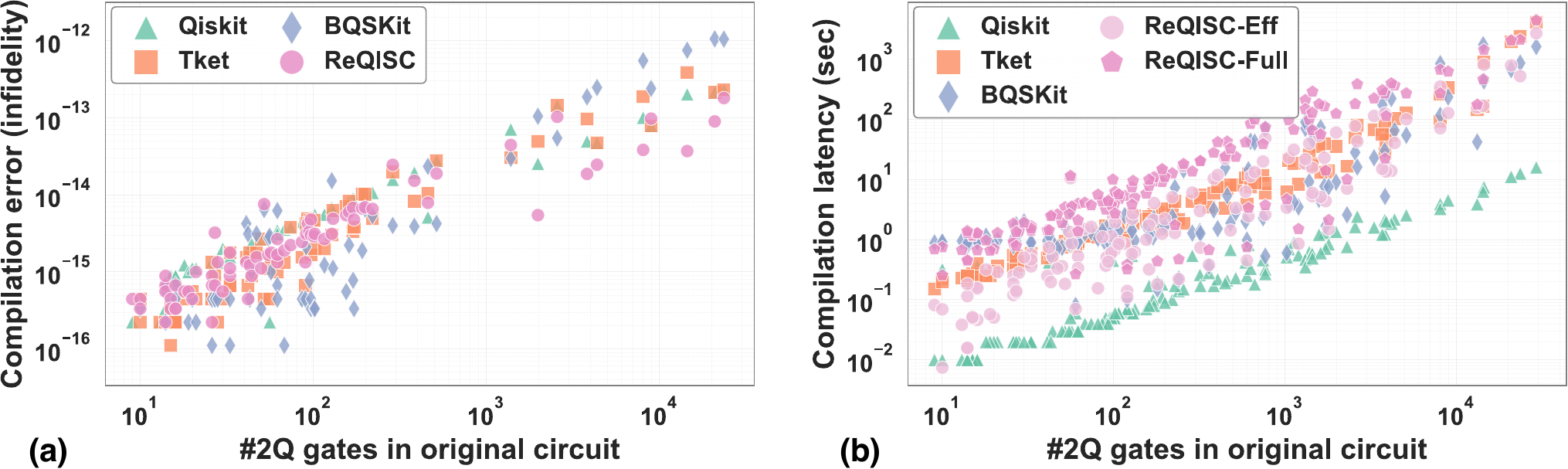}
    
    \caption{{Compilation errors (a) and latency (b).}} 
    \label{fig:reliability-and-scalability}
\end{figure}


For benchmarks comprising up to eleven qubits, we verify the circuits generated by baselines and the ReQISC compiler by computing the compilation error quantified by circuit infidelity. \Cref{fig:reliability-and-scalability}(a) shows that all compilers exhibit a similar level of compilation errors. This is expected, as the approximate synthesis is de facto \dquote{numerically exact} for every subcircuit synthesis, whose correctness is bounded by the IEEE754 machine precision $\approx 2.2\times 10^{-16}$. The extensive peephole optimizations (e.g., block consolidation, KAK decomposition) within Qiskit/TKet can only be at the level of machine precision as well. Note that both $x$ and $y$ axes use logarithmic scaling---for large circuits, ReQISC/Qiskit/TKet's compilation errors are bounded much better than BQSKit.

The ReQISC compiler's computational complexity is polynomial in program size, as its key passes have linear or polynomial complexity. \Cref{fig:reliability-and-scalability}(b) confirms that both \CompilerEff\ and \CompilerFull\ scale well in their runtime, despite being implemented in Python. \CompilerEff\ is consistently more efficient than TKet (C++ backend) and BQSKit (Rust backend). \CompilerFull, which performs heavy optimizations, offers performance competitive with BQSKit, demonstrating that its substantial gate count reduction is achieved with a practical and acceptable increase in compilation time.





\section{Conclusion}\label{sec:Conclusion}

Traditional quantum hardware relies predominantly on $\CNOT$-based ISAs. Although some continuous gate families have been proposed and experimentally demonstrated, they are generally impractical due to suboptimal or non-native implementation, excessive calibration overhead, and limited compilation strategies. ReQISC addresses these limitations with a hardware-native $\SU(4)$ ISA, optimally realized by the underlying microarchitecture across various hardware. For the first time, our approach enables the straightforward implementation of any equivalent 2Q gate under arbitrary coupling Hamiltonians, with simple pulse control and minimal gate time. Supported by the end-to-end ReQISC compiler, we can strike an excellent balance between performance and calibration overhead, thereby challenging the traditional notion that performance and calibration overhead are fundamentally in contradiction.

The implications of this work are significant for quantum computing both in the near and long term. For the fault-tolerant regime~\cite{preskill2025beyond,acharya2024quantum}, our approach offers multifaceted advantages: it provides a pathway to faster and higher-fidelity logical qubits (e.g., via a $\sqrt{2}$x faster $\CNOT$ on XY-coupled hardware) and enables native support for Clifford operations like $\iSWAP$ and $ \SWAP $  that proves critical for SOTA QEC schemes~\cite{mcewen2023relaxing,eickbusch2024demonstrating,zhou2024halma,wu2022synthesis}. By providing the architecture and compiler to leverage a richer, more expressive native gate set, ReQISC opens new avenues for designing performant and scalable quantum computers.

\begin{acks}
  We thank all anonymous reviewers for their thoughtful comments.
  This research was partially conducted by AI Chip Center for Emerging Smart Systems (ACCESS), supported by the InnoHK initiative of the Innovation and Technology Commission of the Hong Kong Special Administrative Region Government. It was also supported partially by Research Grants Council of Hong Kong SAR (\#16213824 \& \#16212825).
  Portions of this work were conducted while Z. Y. was attending UC Santa Barbara and visiting Tsinghua University, and while D. D. was at Tsinghua University.
  D. D. would like to thank God for all of His provisions.
\end{acks}

\appendix
\section*{Artifact Appendix}\label{appendix:artifact_evaluation}

\subsection*{Abstract}

ReQISC couples the Regulus SU(4)-native compiler with the genAshN microarchitecture (time-optimal SU(4) gate scheme). The artifact bundles: (1) the compiler implementation in \texttt{regulus/}; (2) the genAshN gate-duration model in \texttt{microarch/genAshN}; (3) the benchmark suite in \texttt{benchmarks/}; and (4) automation in \texttt{artifact/} (Makefile + scripts + notebooks) to regenerate the paper's tables and figures. The Makefile executes the full compilation pipeline, aggregates metrics, and emits LaTeX tables and PDF figures from the produced CSVs. 

\subsection*{Artifact check-list (meta-information)}
{\small
\begin{itemize}
  \item \textbf{Algorithm:} SU(4)-native template-based synthesis, hierarchical approximate synthesis, mirroring-SABRE qubit mapping; genAshN gate scheme solving
  \item \textbf{Program:} Python 3.8+ (Regulus compiler and scripts)
  \item \textbf{Transformations:} template-based synthesis, hierarchical SU(4) synthesis, mirroring, decomposition, circuit concatenation, circuit-DAG conversion
  \item \textbf{Data set:} \texttt{benchmarks/Type-I}, \texttt{benchmarks/Type-II} QASM/JSON quantum programs with high-level subroutines
  \item \textbf{Run-time environment:} Linux/macOS, Python virtualenv
  \item \textbf{Hardware:} x86\_64 with $>$16\,GB RAM recommended; CPU-bound
  \item \textbf{Run-time state:} deterministic; no RNG required
  \item \textbf{Execution:} Makefile targets wrapping Python scripts
  \item \textbf{Metrics:} two-qubit gate count, two-qubit depth, pulse duration
  \item \textbf{Output:} CSVs in \texttt{artifact/results}, QASM in \texttt{artifact/output}, LaTeX in \texttt{artifact/notebooks/eval\_compiler\_main}
  \item \textbf{Experiments:} compiler benchmarking vs. Qiskit, TKET, BQSKit; pulse-duration comparison
  \item \textbf{Prep time:} $<$10 minutes for environment setup
  \item \textbf{Experiment time:} Demo $<1$ minutes; Full \texttt{make results} several hours (Regulus-Eff/Full + Baselines (Qiskit/TKet/BQSKit))
  \item \textbf{Publicly available?:} Yes
  \item \textbf{Code licenses?:} Apache 2.0
  \item \textbf{Data licenses?:} Same as code (benchmarks included)
  \item \textbf{Workflow automation?:} GNU Make
  \item \textbf{Archived?:} \url{https://zenodo.org/records/18163249} with DOI 10.5281/zenodo.18163249
\end{itemize}
}

\subsection*{Description}
\subsubsection*{How to access}
Artifact is accessible via the Zenodo archive \url{https://zenodo.org/records/18163249}. 

Main entry points:
\begin{itemize}
  \item \texttt{regulus/}: Regulus compiler implementation.
  \item \texttt{microarch/genAshN/}: Microarchitecture implementation (gate scheme)
  \item \texttt{benchmarks/}: Benchmark circuits with metadata.
  \item \texttt{artifact/scripts/}: Command-line interface scripts.
  \item \texttt{artifact/notebooks/}: Jupyter notebooks for figures and tables.
  \item \texttt{artifact/Makefile}: Automation of experiment workflow.
\end{itemize}

\subsubsection*{Hardware dependencies}
Any modern x86\_64 CPU or Apple Silicon; 8 cores and $>$16\,GB RAM recommended. No GPU/accelerator required.

\subsubsection*{Software dependencies}
Python 3.8+, pip, virtualenv. Packages listed in \texttt{requirements.txt} (qiskit, bqskit, tket, rustworkx, tqdm, etc.). Dockerfile provided for containerized runs.

\subsubsection*{Data sets}
Benchmarks, with metadata, data preparation and pre-processing scripts in \texttt{benchmarks/}. No external downloads.

\subsubsection*{Models}
The ReQISC microarchitecture design protocols (\Cref{algo:isa}) implemented in \texttt{microarch/genAshN} provides duration computation model for pulse-level comparisons; no training required.

\subsection*{Installation}

\begin{enumerate}
  \item \textbf{Create env:}
  {\small
    \begin{verbatim}
    cd artifact
    make env
    source .venv/bin/activate
    \end{verbatim}
  }
  \item \textbf{(Optional) Docker:} \texttt{docker build -t regulus .} then run a container and execute the same Make targets inside.
\end{enumerate}

\subsection*{Experiment workflow}

\begin{enumerate}
  \item \textbf{Sanity run (minutes):} \texttt{make demo} (Regulus reduced + Qiskit on category \texttt{alu}).
  \item \textbf{Full regeneration (hours):} \texttt{make results}. Runs Regulus (reduced and full), Qiskit, TKET, and BQSKit on all categories; writes
  {\small
    \begin{itemize}
        \item QASM to \texttt{artifact/output/<compiler>/<category>/}
        \item result to \texttt{artifact/results/<compiler>/<category>/}
        \item aggregated to \texttt{artifact/results/result\_<compiler>.csv}
    \end{itemize}
  }
  \item \textbf{Tables only:} \texttt{make tables} regenerates from the stored CSVs to \texttt{compilation\_result.tex}.
  \item \textbf{Figures:} Run the scripts in \texttt{artifact/notebooks/} (each folder maps to a figure suite).
\end{enumerate}

\subsection*{Evaluation and expected results}

\begin{itemize}
  \item \textbf{Circuit-level metrics:} Two-qubit gate/depth reductions, contained in \texttt{artifact/results/result\_*.csv} contain two-qubit gate/depth reductions matching the paper. Success: regenerated, non-empty CSVs.
  \item \textbf{Pulse metrics:} \texttt{pulse\_comparison.py} under \texttt{artifact/} summarizes duration reductions; \texttt{pulse\_comparison.tex} formats the table.
  \item \textbf{LaTeX tables:} \texttt{compilation\_result.tex} matches manuscript tables.
\end{itemize}

\subsection*{Experiment customization}

\begin{itemize}
  \item Limit categories: edit \texttt{CATEGORIES} in \texttt{artifact/Makefile} or invoke \texttt{scripts/bench.py} on a specific directory.
  \item Runtime vs. quality: set \texttt{TRIALS=<int>} when invoking \texttt{make} to control Regulus hierarchical synthesis trials.
  \item Topology-aware runs: pass \texttt{--device chain} or \texttt{--device grid} to \texttt{bench.py} if desired.
\end{itemize}

\subsection*{Notes}

\begin{itemize}
  \item If TKET or BQSKit are unavailable on your platform, regenerate tables from provided CSVs via \texttt{make tables}.
  \item Delete \texttt{artifact/output/} between runs by executing \texttt{make clean-output} to avoid mixing fresh and archived QASM.
\end{itemize}

\bibliographystyle{ACM-Reference-Format}
\balance
\bibliography{refs}

\appendix
\section{Proof of the \Cref{algo:isa} (genAshN gate scheme)}\label{appendix:proof}

\begin{takeaways}[title={Takeaways}] 
    Since the presented scheme could also be regarded as a \dquote{generalized} version of the AshN scheme and their proof and solving thoughts are sort of similar, we also call the ReQISC microarchitecture as \dquote{genAshN} scheme throughout this appendix. Source code implementation, test files, and usage examples are available in the Zenodo archive \url{https://zenodo.org/records/18163249}.
    

\end{takeaways}


\subsection{Problem Statement \& Canonicalization}
\label{sec:ps}
\subsubsection{Hamiltonian Canonicalization}
Define
\begin{align}
H[a,b,c]:=a\cdot \sigma_X\otimes \sigma_X + b \cdot \sigma_Y\otimes \sigma_Y + c\cdot \sigma_Z \otimes \sigma_Z.
\end{align}
We follow the canonicalization treatment in ~\cite{dur2001entanglement} and only consider two-qubit Hamiltonians of the form $H[a,b,c]$ where $a\geq b\geq |c|,\, a>0$. Furthermore, we can rescale any Hamiltonian as long as the corresponding gate time is rescaled accordingly. We thus identify canonical Hamiltonian coefficients $H[a,b,c]\sim H[ka, kb,kc]$ for all $k>0$.

\subsubsection{Two-qubit gate Canonicalization}
Through the well-known canonical decomposition (or KAK decomposition)~\cite{zhang2003geometric}, any two-qubit unitary can be determined by a unique Weyl coordinate $ (x,y,z) $ such that
\begin{align}
    U &= g \cdot (V_1 \otimes V_2) e^{i (x\,XX + y\,YY + z\,ZZ)} (V_3 \otimes V_4), \nonumber\\
    &\quad g \in \{1, i\},\, \pi/4\geq x\geq y\geq |z|
\end{align}
The canonical coordinate $(x,y,z)$ which determines a class of two-qubit gates locally equivalent to the so-defined \emph{canonical gate}
\begin{align}
    \mathrm{Can}(x,y,z) = e^{i\Vec{\eta}\cdot\Vec{\Sigma}} = e^{i (x\,XX + y\,YY + z\,ZZ)}
\end{align}
is confined to a tetrahedron dubbed the Weyl chamber\footnote{Note that for the convenience of proof, herein we select the definition of Canonical gate $(x,y,z)\sim e^{i \Vec{\eta}\cdot\Vec{\Sigma}}$ that is different from the one $ (x,y,z)\sim e^{- i \Vec{\eta}\cdot\Vec{\Sigma}}$ in the our paper submission.}.

\subsubsection{Problem Statement}
Given $H[a,b,c]$, the minimum evolution time $\tau_\text{opt}$~\cite{hammerer2002characterization} to achieve the Weyl chamber coordinate $(x,y,z)$, which indicates a class of two-qubit gates locally equivalent to the canonical gate $\mathrm{Can}(x,y,z)$, by alternately applying $H[a,b,c]$ and single-qubit Hamiltonians $H_0 \otimes H_1$ satisfies $\tau_\text{opt} = \min \{\tau_1,\tau_2\}$, where
\begin{align}
    a\tau_1 &\geq x, \nonumber\\
    (a+b+c)\tau_1 &\geq x+y+z, \nonumber\\
    (a+b-c)\tau_1 &\geq x+y-z,
\end{align}
and
\begin{align}
    a \tau_2 & \geq \frac \pi 2 - x, \nonumber\\
    (a+b+c) \tau_2 & \geq \frac \pi 2 - x + y -z, \nonumber\\
    (a+b-c) \tau_2 & \geq \frac \pi 2 - x + y +z.
\end{align}

For convenience, we extend the Weyl chamber by combining it with its mirroring copy with the plane $x=\pi/4$. This gives us a larger tetrahedron
\begin{align}
W_{ext} := \big\{ & (x,y,z) \in \mathbb{R}^3 \mid 0 \leq x \leq \pi/2, \nonumber\\
& 0 \leq y \leq \min\{x, \pi/2-x\}, -y \leq z \leq y \big\}.
\end{align}

\begin{figure}[tbp]
    \centering
    \begin{tikzpicture}[
        block/.style={draw, rectangle, minimum height=1.5em, minimum width=2.6em, text width=2.4cm, align=center, font=\small},
        line/.style={draw, -latex}
    ]
    \node[block] (a) {Problem Statement,\\ Canonicalization (\Cref{sec:ps})};  
    \node[block,below left=0.6cm and 0.3cm of a] (b) {Block Diagonalization\\(\Cref{sec:nd_canon})};   
    \node[block,below right=0.6cm and 0.3cm of a] (c) {Reparametrization, \\ Block Diagonalization\\ (\Cref{sec:ea_canon})};  
    \node[block,below=0.6cm of b] (d) {genAshN-ND scheme (\Cref{sec:nd_proof})}; 
    \node[block, below=0.6cm of c] (e) {$(\alpha,\beta)\Leftrightarrow T\Leftrightarrow (x, y, z)$ (\Cref{sec:maps})};   
    \node[block, below=0.6cm of e] (f) {genAshN-EA scheme (\Cref{sec:ea_proof})};  
    \draw[line] (a)-- (b) node[midway, left=0.2cm, font=\scriptsize] {genAshN-ND};  
    \draw[line] (a)-- (c) node[midway, right=0.2cm, font=\scriptsize] {genAshN-EA};  
    \draw[line] (b)-- (d);  
    \draw[line] (c)-- (e); 
    \draw[line] (e)-- (f); 

    \end{tikzpicture}  
    \caption{Overview of the proof.}
    \label{fig:overview}
\end{figure}
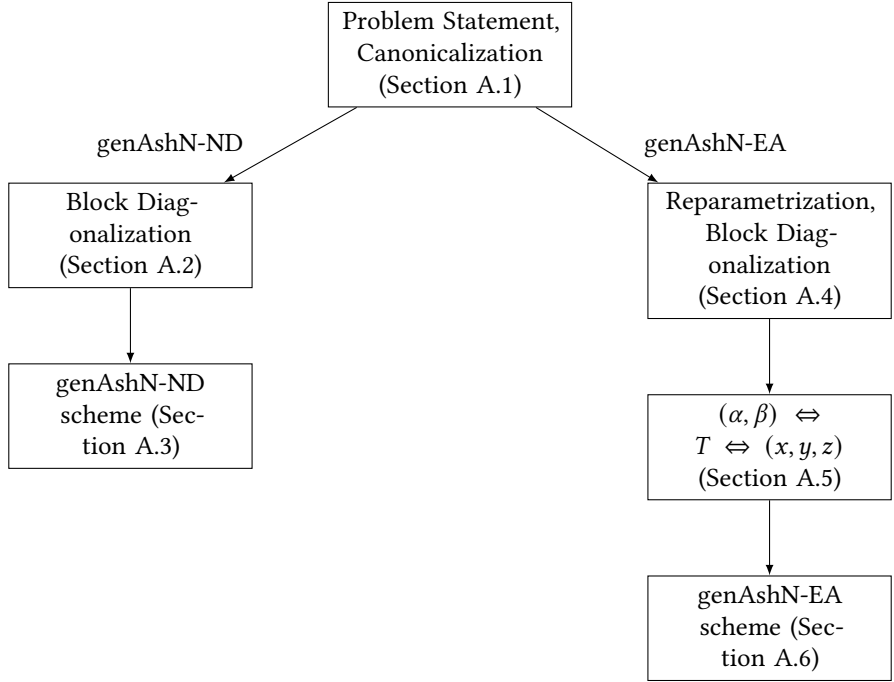

It can be verified that $W_{ext}$ is the union of the collections of the points $(x,y,z)$ and $(\pi/2-x,y,-z)$ where $(x,y,z)$ are canonical Weyl chamber coordinates satisfying $0\leq |z|\leq y\leq x\leq \pi/4$. Note that $(x,y,z)\sim (\pi/2-x,y,-z)$ under local unitaries.

For any canonical Hamiltonian coefficients $(a,b,c)$ and time $\tau\geq 0$, we define the \emph{frontier} $A_\tau$ to be the union of  three polygons, namely 
\begin{align}
    A_\tau := ND \cup EA_+ \cup EA_-,
\end{align}
where we abbreviate $ND:=ND(a,b,c;\tau)$, $EA_\pm:=EA_\pm(a,b,c;\tau)$, and
\begin{align}
    ND &:=\{(x,y,z)\mid x=a\tau,\, y\pm z\in [0, (b\pm c)\tau ]\}, \\
    EA_+ &:=\{(x,y,z)\mid x+y+z = (a+b+c)\tau, \nonumber\\
    &\quad a\tau\geq x\geq y\geq|z|,\, z\geq c\tau \}, \\
    EA_- &:=\{(x,y,z)\mid x+y-z = (a+b-c)\tau, \nonumber\\
    &\quad a\tau\geq x\geq y\geq|z|,\, z\leq c\tau \}.
\end{align}

For a given point $(x,y,z)\in \mathbb{R}^3$, we say that the frontier hits $(x,y,z)$ at time $\tau$ if 
\begin{align}
    \tau = \min\left\{t \mid(x,y,z) \in \bigcup_{t'=0}^t A_t\right\}.
\end{align}
It can be verified that this is equivalent to $\tau$ being the minimum number that simultaneously satisfies
\begin{align}
    \begin{cases}
    a\tau \geq x, \\
    (a+b+c)\tau \geq x+y+z, \\
    (a+b-c)\tau \geq x+y-z,
    \end{cases}
\end{align}
for any coordinate $0\leq |z|\leq y \leq x$. This helps us formulate the optimal gate time as follows:

The optimal gate time to achieve a Weyl chamber coordinate $(x,y,z)$ is equal to the time when the frontier hits either $(x,y,z)$ or $(\pi/2-x,y,-z)$, whichever is earlier.
To prove that a gate scheme achieves optimal time, it then suffices to prove that every (extended) Weyl chamber coordinate $(x,y,z)\in W_{ext}$ can be realized by an appropriately chosen parameter at the time the frontier hits $(x,y,z)$, or equivalently, every point on either of $ND(a,b,c;\tau)\cap W_{ext}, EA_+(a,b,c;\tau)\cap W_{ext}$ and $EA_-(a,b,c;\tau)\cap W_{ext}$ can be realized with $H[a,b,c]$ and local Hamiltonians within time $\tau$.

We prove that this is indeed the case:


\begin{thm}
For each Weyl chamber coordinate $(x,y,z) \in W$ and canonical Hamiltonian coefficients $a \geq b \geq \vert c \vert$, let
\begin{align}
  \tau_\text{opt} &:= \min\{\tau_1, \tau_2\}, \text{ where} \nonumber\\
  \tau_1 &:= \max\left\{ \frac{x}{a}, \frac{x+y+z}{a+b+c}, \frac{x+y-z}{a+b-c} \right\}, \\
  \tau_2 &:= \max\left\{ \frac{\frac{\pi}{2} - x}{a}, \frac{\frac{\pi}{2} - x+y-z}{a+b+c}, \frac{\frac{\pi}{2} - x+y+z}{a+b-c} \right\} \nonumber
\end{align}
be the optimal time as defined above. Then, there exists $\Omega_1, \Omega_2, \delta \in \mathbb{R}$ such that
\begin{align}
    &\exp \{ -i \tau_\text{opt} (H[a,b,c] + (\Omega_1+\Omega_2) X \otimes I \nonumber\\
    &\quad + (\Omega_1 - \Omega_2) I \otimes X + \delta (Z \otimes I + I \otimes Z))\}
\end{align}
has Weyl chamber coordinates $(x,y,z)$. Moreover, at least one of $\Omega_1, \Omega_2, \delta$ is equal to 0.
\end{thm}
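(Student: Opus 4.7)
The plan is to split the argument into the three subschemes ND, \EAplus, \EAminus\ that partition the extended Weyl chamber $W_{ext}$, and to reduce the $\tau_2 < \tau_1$ situation to the $\tau_1$ case by replacing $(x,y,z)$ with its locally equivalent mirror $(\pi/2 - x, y, -z)$. Within the $\tau_1$ case, I would further split by which of $\tau_0, \tau_+, \tau_-$ achieves the max, which geometrically selects a face of the frontier $A_\tau$. The plan is then to show that on the $x = a\tau$ face one can realize the coordinate with $\delta = 0$ (ND); on the $x+y+z = (a+b+c)\tau$ face one can realize it with $\Omega_1 = 0$ (\EAplus); and on the $x+y-z = (a+b-c)\tau$ face one can realize it with $\Omega_2 = 0$ (\EAminus). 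This immediately gives the final claim that at least one of $\Omega_1, \Omega_2, \delta$ vanishes.

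For the ND case, my approach would be to exploit that when $\delta = 0$, the total generator $H[a,b,c] + (\Omega_1 + \Omega_2) XI + (\Omega_1 - \Omega_2) IX$ decomposes in the single-qubit $X$-eigenbasis into two $2 \times 2$ blocks, one controlled by $\Omega_1$ coupled to $b - c$ and one by $\Omega_2$ coupled to $b + c$. Direct matrix exponentiation of each block yields the Weyl coordinates in closed form, and the condition on the $y \pm z$ component reduces to the sinc equations $\sin(y \pm z) = (b \pm c)\tau \cdot \mathrm{sinc}(S_{1,2}\tau)$ used in lines \ref{line:sinc-S1}--\ref{line:sinc-S2} of \Cref{algo:isa}. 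The step I would verify carefully is that, given $\tau = \tau_0$, the constraints $|y \pm z| \le (b \pm c)\tau$ ensure these equations admit real preimages with $S_{1,2} \ge b \pm c$, and that $\Omega_1, \Omega_2$ extracted via $\Omega = \tfrac12 \sqrt{S^2 - (b\mp c)^2}$ are consequently real.

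For the \EAplus\ and \EAminus\ cases, which are symmetric under $c \mapsto -c$, I would first conjugate by a convenient local Clifford so that the drive pattern $\Omega (XI \mp IX) + \delta(ZI + IZ)$ manifestly commutes with (or anticommutes with) the relevant component of $H[a,b,c]$, producing a block decomposition in a Bell-type basis. Introducing the auxiliary reparametrization $(\alpha, \beta)$ from \Cref{algo:isa}, the explicit evolution in each block can be written as a sum of three complex exponentials, yielding the identity $\mathrm{lhs}(\alpha,\beta,\eta,t) = \mathrm{rhs}(x',y',z')$ after shifting coordinates by $\pm c\tau$. The technical obstacle is solvability: one must show that for every $(x,y,z)$ on the \EAplus/\EAminus\ face the transcendental system admits $(\alpha, \beta) \in [0,1] \times [0,\infty)$ with $\alpha + \beta \ge \eta$. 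I would handle this by a topological/degree-theoretic argument---verifying that the boundary of the parameter region maps onto the boundary of the frontier face (where $x = a\tau$ or $y \pm z = 0$) with nonzero winding, so that surjectivity onto the interior follows.

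The final step is to reconstruct the $\SU(4)$ evolution $e^{-i\tau(H + H''_1 + H''_2)}$ from the solved $(\Omega_1, \Omega_2, \delta, \tau)$ and compute its canonical decomposition explicitly, as in the last lines of \Cref{algo:isa}, then dress it with the 1Q corrections $A_{1,2}, B_{1,2}$ to obtain $U$ exactly. Time-optimality then follows automatically, since the constructive scheme hits the frontier $A_\tau$ at exactly the time $\tau_\text{opt}$ guaranteed by the lower bound of~\cite{hammerer2002characterization}. The main difficulty I anticipate is the solvability analysis of the \EAplus/\EAminus\ transcendental equation: unlike the ND case, there is no clean sinc-inversion and the surjectivity of the $(\alpha,\beta) \mapsto (x,y,z)$ map must be established indirectly; everything else---block diagonalization, canonical decomposition of the solved evolution, and assembly of local corrections---is routine linear algebra.
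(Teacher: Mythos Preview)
Your proposal is correct and follows essentially the paper's approach: the three-subscheme split on the frontier $A_\tau$, the mirror reduction for $\tau_2<\tau_1$, block-diagonalization plus sinc-inversion for ND, the $(\alpha,\beta)$ reparametrization with a topological surjectivity argument for EA, and the $c\mapsto -c$ symmetry reducing \EAminus\ to \EAplus. The one execution refinement the paper uses that you do not mention is that, because the total generator is real symmetric, the Weyl coordinates are read off directly from the spectrum of $U\cdot(\sigma_Y\otimes\sigma_Y)$ via the magic basis, and the EA surjectivity argument is then factored through the scalar trace $T\in\mathbb{C}$ (so the winding/boundary analysis runs in the complex plane rather than on the two-dimensional \EAplus\ face).
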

\noindent We can prove this via a series of lemmas. 

\begin{lem}[genAshN-ND]
    For any canonical Hamiltonian coefficients $a\geq b\geq |c|$, any $\tau \in(0, \frac\pi{2a}]$, for any Weyl chamber coordinates $(x,y,z)\in ND(a,b,c;\tau)\cap W_{ext}$, there exist $r_1,r_2\geq0$ such that the interaction coefficients for
    \begin{align}
        U:=\exp\{-i\cdot(H[a,b,c] + r_1 \sigma_X\otimes I + r_2 I\otimes \sigma_X)\cdot \tau\}
    \end{align}
    are $(x,y,z)$.
    \label{lem:nd}
\end{lem}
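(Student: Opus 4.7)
My plan is to exploit a hidden symmetry of the drive-augmented Hamiltonian $H := H[a,b,c] + r_1 \sigma_X\otimes I + r_2 I\otimes \sigma_X$: every term commutes with $\sigma_X\otimes\sigma_X$ by the standard Pauli algebra, so $H$ is block-diagonal in the eigenspaces of $XX$. Working in the reordered Bell basis $(|\Phi^+\rangle, |\Psi^+\rangle, |\Phi^-\rangle, |\Psi^-\rangle)$ and reparametrizing $\Omega_1 := (r_1+r_2)/2,\ \Omega_2 := (r_1-r_2)/2$, a direct computation yields the two blocks
\begin{align*}
    H_+ = aI + (c-b)\sigma_z + 2\Omega_1 \sigma_x, \qquad H_- = -aI + (b+c)\sigma_z - 2\Omega_2 \sigma_x,
\end{align*}
each a two-level Hamiltonian with spectra $\{a\pm S_1\}$ and $\{-a\pm S_2\}$, where $S_1 := \sqrt{(b-c)^2 + 4\Omega_1^2}$ and $S_2 := \sqrt{(b+c)^2 + 4\Omega_2^2}$. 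The standard rotation formula then gives $U = e^{-iH\tau} = U_+ \oplus U_-$ explicitly in the same basis.

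Next I would translate this block form into Weyl coordinates via Makhlin invariants. The magic basis differs from my reordered Bell basis only by the diagonal phase $D = \mathrm{diag}(1,i,i,1)$, which preserves the $XX$-block structure, so the matrix $m := U_M^T U_M$ itself decomposes into two $2\times 2$ blocks computable directly from $U_\pm$. For any canonical gate $\mathrm{Can}(x,y,z)$ the spectrum of $m$ is $\{e^{-2i(\pm x \pm y \pm z)}\}$ with an even number of minus signs, and since this spectrum is a local-equivalence invariant, matching the trace and determinant of each of my $2\times 2$ blocks to the corresponding pair of canonical eigenvalues, followed by standard trigonometric simplification, yields
\begin{align*}
    x = a\tau, \qquad \sin(y-z) = (b-c)\tau\,\mathrm{sinc}(S_1\tau), \qquad \sin(y+z) = (b+c)\tau\,\mathrm{sinc}(S_2\tau),
\end{align*}
which are precisely the defining equations of the ND subscheme in \Cref{algo:isa}.

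It then remains to invert these relations to recover the control parameters. For any $(x,y,z)\in ND(a,b,c;\tau)\cap W_{ext}$ the first identity holds automatically, and $y\pm z \in [0,(b\pm c)\tau]\subseteq[0,\pi/2]$. The map $S\mapsto (b\pm c)\tau\,\mathrm{sinc}(S\tau)$ is continuous on $[|b\mp c|,\infty)$, attains $\sin((b\pm c)\tau)\geq \sin(y\pm z)$ at the left endpoint, and decays to $0$ at infinity, so the intermediate value theorem produces the required $S_{1,2}\geq|b\mp c|$ and hence unique magnitudes $|\Omega_1|,|\Omega_2|$.

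The main obstacle I expect is enforcing $r_1,r_2\geq 0$ once the magnitudes are fixed. The Weyl coordinates depend on $\Omega_1,\Omega_2$ only through $S_1,S_2$, so the individual signs are free, but the reparametrization $r_{1,2} = \Omega_1 \pm \Omega_2$ mixes them, and it is not a priori clear that $|\Omega_1|\geq|\Omega_2|$. My strategy is to leverage two local symmetries: conjugation by $Z\otimes Z$, which simultaneously flips both $r_1,r_2$, and a within-block $\sigma_z$-absorption argument that flips the sign of a single $\Omega_i$ while preserving local equivalence of the full $U$. Case analysis on whether $|\Omega_1|$ is larger or smaller than $|\Omega_2|$, together with these symmetries and the freedom to permute the two drive qubits at compile time, should cover all configurations and produce $r_1,r_2\geq 0$. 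The detailed sign-chasing through the magic-basis transformation is the most tedious part but should go through without conceptual difficulty.
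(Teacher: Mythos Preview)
Your core strategy---block-diagonalize in the $XX$ eigenbasis, read off the Weyl coordinates from a spectral invariant, then invert via the intermediate value theorem on $\mathrm{sinc}$---is exactly the paper's proof. The paper conjugates by $H\otimes H$ instead of passing to the Bell basis, and computes the spectrum of $U\cdot(\sigma_Y\otimes\sigma_Y)$ rather than $U_M^TU_M$, but these are the same calculation in different coordinates (the paper itself explains the equivalence of the two spectral tricks). Two points deserve correction, however.

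First, your range bound $y\pm z\in[0,(b\pm c)\tau]\subseteq[0,\pi/2]$ is not justified: from $b\pm c\leq 2a$ and $\tau\leq\pi/(2a)$ you only get $(b\pm c)\tau\leq\pi$, not $\pi/2$. The paper fills this gap using the $W_{ext}$ constraint $y\leq \pi/2-x=\pi/2-a\tau$, which gives $y+z\leq 2y\leq \pi-2a\tau\leq\pi-(b+c)\tau$, whence $\sin(y+z)\leq\sin((b+c)\tau)$ regardless of whether $(b+c)\tau$ exceeds $\pi/2$. (Also, the domain $[|b\mp c|,\infty)$ in your IVT step should read $[b\pm c,\infty)$.)

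Second, your plan for enforcing $r_1,r_2\geq 0$ does not work. The symmetries you list---independent sign flips of $\Omega_1,\Omega_2$ (since Weyl coordinates depend only on $S_1,S_2$), the simultaneous flip via $Z\otimes Z$, and the qubit swap---together generate only the orbit $\{(\pm\Omega_1,\pm\Omega_2)\}$ with independent signs, so $r_{1,2}=\Omega_1\pm\Omega_2$ are both nonnegative iff $|\Omega_1|\geq|\Omega_2|$, which nothing in the construction guarantees (take $y-z=(b-c)\tau$ forcing $\Omega_1=0$ while $y+z<(b+c)\tau$ forces $\Omega_2\neq 0$). That said, the paper's own proof does not address the sign constraint either: it stops after exhibiting $S_1,S_2$, which only yields $r_1,r_2\in\mathbb{R}$. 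Since the main theorem asks merely for $\Omega_1,\Omega_2,\delta\in\mathbb{R}$, the nonnegativity clause in the lemma statement is inessential, and you can safely drop that final paragraph.
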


\begin{lem}[genAshN-EA+]
    For any canonical Hamiltonian coefficients $a\geq b\geq |c|$, $\tau \geq0$, for any Weyl chamber coordinates $(x,y,z)\in EA_+(a,b,c;\tau)\cap W_{ext}$, there exist $\Omega,\delta\geq0$ such that the interaction coefficients for 
    \begin{align}
        U := \exp \big\{ -i \cdot \big( & H[a,b,c] + \Omega(\sigma_X\otimes I + I\otimes \sigma_X) \nonumber \\
        & + \delta(\sigma_Z\otimes I + I\otimes \sigma_Z) \big) \cdot \tau \big\}
    \end{align}
    are $(x,y,z)$.
    \label{lem:ea}
\end{lem}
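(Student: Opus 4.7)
My plan is to exploit the $\mathrm{SWAP}$-symmetry of the driven Hamiltonian to reduce a $4\times 4$ unitary-matching problem to a $3\times 3$ spectral problem, and then show that the resulting map from the two-dimensional control space $(\Omega,\delta)$ onto the triangular frontier $EA_+(a,b,c;\tau)\cap W_{ext}$ is surjective.

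First I would block-diagonalize. The total Hamiltonian $H_{\mathrm{tot}} := H[a,b,c] + \Omega(XI+IX) + \delta(ZI+IZ)$ commutes with the qubit-swap, so $U := e^{-i\tau H_{\mathrm{tot}}}$ preserves the singlet $\mathrm{span}\{|\Psi^-\rangle\}$ and the triplet $\mathrm{span}\{|\Phi^+\rangle,|\Psi^+\rangle,|\Phi^-\rangle\}$. On the singlet both drive terms vanish, so $U$ acts there as $e^{i\tau(a+b+c)}$; this is the phase consistent with the magic-basis reading of a canonical gate whose Weyl coordinates obey the frontier identity $x+y+z=(a+b+c)\tau$ defining $EA_+$. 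On the triplet, in the ordered basis $(|\Phi^+\rangle, |\Psi^+\rangle, |\Phi^-\rangle)$, $H_{\mathrm{tot}}$ has the explicit symmetric form
\begin{equation*}
H_{\mathrm{trp}}=\begin{pmatrix} a-b+c & 2\Omega & 2\delta \\ 2\Omega & a+b-c & 0 \\ 2\delta & 0 & -a+b+c \end{pmatrix},
\end{equation*}
so matching Weyl coordinates of $U$ to a target $(x,y,z)$ reduces to matching the characteristic polynomial of $H_{\mathrm{trp}}$ to the one prescribed by $(x,y,z)$ modulo the common-trace shift.

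Second, following \Cref{algo:isa} I would reparametrize: set $x' := x+c\tau$, $y' := y+c\tau$, $z' := c\tau-z$, $\eta := (a-b)/(a+c)$, $t := (a+c)\tau$, and encode the drives by $(\alpha,\beta)\in[0,1]\times[0,\infty)$ with $\alpha+\beta\geq\eta$ via the explicit radicals in the algorithm. A direct calculation of the characteristic polynomial of $H_{\mathrm{trp}}$ then shows that, after cancelling the common-trace shift and invoking the frontier identity, the spectral identity required for $U$ to realize $(x,y,z)$ becomes exactly the complex transcendental equation $\mathrm{lhs}(\alpha,\beta,\eta,t)=\mathrm{rhs}(x',y',z')$ appearing in the algorithm. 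This step is algebraic bookkeeping: the reparametrization is engineered precisely to put the eigenvalue problem into this clean form.

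The heart of the proof is surjectivity: given $(x,y,z)\in EA_+(a,b,c;\tau)\cap W_{ext}$, the transcendental equation must have a solution $(\alpha,\beta)$ in the prescribed domain with $\Omega,\delta\in\mathbb{R}$. My approach would be a boundary-plus-degree argument: analyze the map on each edge of the $(\alpha,\beta)$-domain and match it with an edge of the target triangle, with the edges $\alpha=0$ or $\beta=0$ recovering the $ND$ subscheme of \Cref{lem:nd} at the seam $\tau=\tau_0$; verify a nondegenerate Jacobian in the interior; and conclude by a degree-theoretic or continuity argument that every interior target has a preimage, with $\Omega,\delta\in\mathbb{R}$ guaranteed by the branch choice $\alpha+\beta\geq\eta$. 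The $EA_-$ case then follows from $EA_+$ by the involution $(c,z)\mapsto(-c,-z)$. The main obstacle I anticipate is exactly this surjectivity step: the transcendental equation mixes three exponentials at incommensurate frequencies and admits no closed-form inverse, so existence must be established topologically, with special care at the corners of $EA_+\cap W_{ext}$ (where the triangle touches $ND$ and the faces of $W_{ext}$) and at the degenerate parameter regimes $a+c=0$ or $a=b$ where the reparametrization breaks down and the argument must be completed by a limiting procedure or by falling back on an adjacent subscheme.
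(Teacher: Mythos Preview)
Your plan is structurally the same as the paper's: exploit the SWAP symmetry to split off the singlet (the paper phrases this as identifying the common eigenvector $(0,1,-1,0)$), reparametrize the remaining $3\times 3$ spectrum by $(\alpha,\beta)$ exactly as in \Cref{algo:isa}, and then argue surjectivity topologically. Two technical devices in the paper's execution are worth adopting. First, rather than matching the full spectrum, the paper passes through the scalar trace $T=e^{ia\tau}\bigl(\mathrm{Tr}[U\cdot\sigma_Y\otimes\sigma_Y]+e^{-i\tau(1+\eta-3a)}\bigr)\in\mathbb{C}$; this factors surjectivity into two maps $(\alpha,\beta)\to T$ and $T\to(x,y,z)$, each with explicitly computable boundary images (line segments and a circular arc), so that the contour argument becomes elementary and no Jacobian is needed. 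Second, because $\beta\in[0,\infty)$ your boundary-plus-degree argument runs on an unbounded domain; the paper compactifies via $\gamma:=1/(1+\beta)\in[0,1]$ and checks continuity of the trace at $\gamma=0$, which is the step that lets a genuine closed-contour argument go through. On the degeneracies you flag: the paper rescales to $a-c=1$ rather than working with $(a+c)^{-1}$, which fails only on the line $a=b=c$, where $EA_+(a,a,a;\tau)$ degenerates to a single point realized by $\Omega=\delta=0$; the case $a=b$ then just gives $\eta=0$ and is not singular.
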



The gate scheme for the genAshN-EA- sector can be derived from genAshN-EA+ according to the following corollary.

\begin{cor}[genAshN-EA-]
    For any canonical Hamiltonian coefficients $a\geq b\geq |c|$, $\tau \geq 0$, for any Weyl chamber coordinates $(x,y,z)\in EA_-(a,b,c;\tau) \cap W_{ext}$, there exist $\Omega,\delta\geq0$ such that the interaction coefficients for 
    \begin{align}
        U:=\exp\big\{-i\cdot \bigl( & H[a,b,c] + \Omega(\sigma_X\otimes I -  I\otimes \sigma_X) \nonumber \\
        &+ \delta(\sigma_Z\otimes I + I \otimes \sigma_Z)\bigr)\cdot \tau \big\}
    \end{align}
    are $(x,y,z)$.
    \label{cor:ea}
\end{cor}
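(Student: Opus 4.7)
My plan is to reduce the construction to Lemma \ref{lem:ea} (genAshN-EA$_+$) by exploiting the reflection $z \mapsto -z$ that relates the $EA_-$ and $EA_+$ frontiers. Given $(x,y,z) \in EA_-(a,b,c;\tau) \cap W_{ext}$, I would first verify that the reflected point $(x,y,-z)$ lies in $EA_+(a,b,-c;\tau) \cap W_{ext}$: the defining equation $x+y-z = (a+b-c)\tau$ reads $x+y+(-z) = (a+b+(-c))\tau$; the inequality $z \leq c\tau$ becomes $-z \geq (-c)\tau$; the canonicalization $a \geq b \geq |{-c}| = |c|$ still holds; and the $W_{ext}$ condition $-y \leq z \leq y$ is invariant under $z \mapsto -z$. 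Applying Lemma \ref{lem:ea} to $H[a,b,-c]$ at this reflected coordinate then yields $\Omega,\delta \geq 0$ such that
\[
\tilde U := \exp\{-i\tau\,(H[a,b,-c] + \Omega(\sigma_X\otimes I + I\otimes \sigma_X) + \delta(\sigma_Z\otimes I + I\otimes \sigma_Z))\}
\]
has Weyl coordinate $(x,y,-z)$.

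\textbf{Transformation chain.} I would then propagate $\tilde U$ through a composition of three operations. Conjugation by $V_1 = \sigma_Z \otimes I$ uses $\sigma_Z X \sigma_Z = -X$ and $\sigma_Z Z \sigma_Z = Z$ to send $H[a,b,-c] \mapsto H[-a,-b,-c] = -H[a,b,c]$, $\Omega(\sigma_X\otimes I + I\otimes\sigma_X) \mapsto -\Omega(\sigma_X\otimes I - I\otimes\sigma_X)$, and leaves $\sigma_Z\otimes I + I\otimes\sigma_Z$ invariant. Taking the Hermitian conjugate of the result negates the entire Hamiltonian inside the exponent, producing the bracket $H[a,b,c] + \Omega(\sigma_X\otimes I - I\otimes\sigma_X) - \delta(\sigma_Z\otimes I + I\otimes\sigma_Z)$. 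Finally, conjugation by $V_2 = \sigma_X \otimes \sigma_X$ preserves $H[a,b,c]$ and $\sigma_X\otimes I - I\otimes\sigma_X$ but flips $\sigma_Z\otimes I + I\otimes\sigma_Z \mapsto -(\sigma_Z\otimes I + I\otimes\sigma_Z)$, restoring the positive sign of $\delta$. The composite operator is thus exactly $\exp\{-i\tau\,(H[a,b,c] + \Omega(\sigma_X\otimes I - I\otimes\sigma_X) + \delta(\sigma_Z\otimes I + I\otimes\sigma_Z))\}$ with the same nonnegative $\Omega,\delta$ supplied by Lemma \ref{lem:ea}.

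\textbf{Weyl tracking and main obstacle.} The last piece is to confirm that these three operations move the Weyl coordinate from $(x,y,-z)$ to $(x,y,z)$. Since $V_1$ and $V_2$ are local unitaries, their conjugations preserve the local equivalence class and hence fix the Weyl coordinate. Hermitian conjugation, on the other hand, sends $\mathrm{Can}(x',y',z')$ to $\mathrm{Can}(-x',-y',-z')$, which is locally equivalent to $\mathrm{Can}(x',y',-z')$ via conjugation by $\sigma_Z \otimes I$ (flipping the signs of $XX$ and $YY$ while leaving $ZZ$ fixed); applied to the intermediate Weyl $(x,y,-z)$ this delivers $(x,y,z)$, which lives in $W_{ext}$ by the same reflection argument as above. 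The main technical obstacle is purely bookkeeping: orchestrating the three sign flips so that both $\Omega$ and $\delta$ emerge nonnegative and that the drives and coupling appear in the exact form demanded by the corollary, while ensuring the single non-local step (Hermitian conjugation) does not alter the evolution time $\tau$ or perturb the parameters returned by Lemma \ref{lem:ea}. Beyond this bookkeeping, every step reduces to a short Pauli-commutation calculation.
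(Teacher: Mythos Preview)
Your proof is correct and follows essentially the same route as the paper: reduce to \Cref{lem:ea} by passing to $(x,y,-z)$ with coupling $H[a,b,-c]$, then return via conjugation by $\sigma_Z\otimes I$ and Hermitian conjugation. Your extra conjugation by $\sigma_X\otimes\sigma_X$ to restore $\delta\geq 0$ is a harmless addition; the paper stops one step earlier and simply records $\delta=-\delta'$ without insisting on the sign.
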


\begin{proof}
    For every $(x,y,z)\in EA_-(a,b,c;\tau)$, we want to prove that there exist $\Omega,\delta>0$ such that the interaction coefficients for
    \begin{align}
        U := \exp \big\{ -i \cdot \big( & H[a,b,c] + \Omega(\sigma_X\otimes I - I\otimes \sigma_X) \nonumber \\
        & + \delta(\sigma_Z\otimes I + I \otimes \sigma_Z) \big) \cdot \tau \big\}
    \end{align}
    are $(x,y,z)$. We consider a similar dual problem of generating the Weyl chamber coordinates $(x,y,-z)$ with single-qubit time-invariant Hamiltonians using $H[a,b,-c]$. It can be verified that $(x,y,-z)\in EA_+(a,b,-c;\tau)$; therefore there exist $\Omega'$ and $\delta'$ such that 
    \begin{align}
        U' := \exp \big\{ -i \cdot \big( & H[a,b,-c] + \Omega'(\sigma_X\otimes I+I\otimes \sigma_X) \nonumber \\
        & + \delta'(\sigma_Z\otimes I + I\otimes \sigma_Z) \big) \cdot \tau \big\}
    \end{align}
    has Weyl chamber coordinates $(x,y,-z)$. As single-qubit gates before and after does not change the Weyl chamber coordinate, the gate
    \begin{align}
        &(\sigma_Z\otimes I)U'(\sigma_Z\otimes I) \nonumber\\
        &=\exp\{-i(H[-a,-b,-c]+\Omega'(-\sigma_X\otimes I+I\otimes \sigma_X) \nonumber\\
        &\quad + \delta'(\sigma_Z\otimes I + I\otimes \sigma_Z)) \tau\} \nonumber\\
        &=\exp\{-i(H[a,b,c]+\Omega'(\sigma_X\otimes I-I\otimes \sigma_X) \nonumber\\
        &\quad - \delta'(\sigma_Z\otimes I + I\otimes \sigma_Z)) \tau\}^{\dag}
    \end{align}
    also has Weyl chamber coordinates $(x,y,-z)$. The gate 
    \begin{align}
    \exp \big\{ -i \cdot \big( & H[a,b,c] + \Omega'(\sigma_X\otimes I - I\otimes \sigma_X) \nonumber \\
    & - \delta'(\sigma_Z\otimes I + I\otimes \sigma_Z) \big) \cdot \tau \big\}
    \end{align}
    then has Weyl chamber coordinates $(x,y,z)$, indicating that $\Omega=\Omega'$ and $\delta=-\delta'$ suffices for the genAshN-EA- scheme.
\end{proof}

\subsubsection{Transforming Weyl chamber coordinates to spectrum}

The rest of the report focuses on proof of~\Cref{lem:nd} and~\Cref{lem:ea}. Before diving into the proofs, we make a few observations that help simplify the problem to be solved.

In the case of genAshN gates, the problem of determining Weyl chamber coordinates can be mapped to determining the spectrum of a locally equivalent gate, or more specifically, gates of the form $\exp\{-iHt\}\cdot (\sigma_Y\otimes \sigma_Y)$.

We first state the intuition for this treatment. For all the genAshN gates, the Hamiltonians involved come from the linear subspace spanned by $\{\sigma_X\otimes \sigma_X, \sigma_Y\otimes \sigma_Y, \sigma_Z\otimes \sigma_Z,\sigma_X\otimes I, I\otimes \sigma_X, \sigma_Z\otimes I + I\otimes\sigma_Z\}.$ It is easy to check that every element in this linear subspace is real and thus symmetric, meaning that all the genAshN gates obtained by exponentiating the genAshN Hamiltonians must be symmetric as well. For such a symmetric unitary $U$, its KAK decomposition should look like $(A_1\otimes A_2) \exp\{i\vec{\eta}\cdot \vec{\Sigma}\}(A_1\otimes A_2)^T$, where $A_1,A_2\in SU(2)$. For a single qubit gate $A\in SU(2)$, one can check that $\sigma_Y A^T \sigma_Y = A^\dag$. Therefore 
\begin{align}
    U\cdot (\sigma_Y\otimes \sigma_Y)&=(A_1\otimes A_2) (\exp\{i\vec{\eta}\cdot \vec{\Sigma}\} \cdot\, \sigma_Y\otimes \sigma_Y)(A_1\otimes A_2)^\dag \nonumber\\
    &\sim \exp\{i\vec{\eta}\cdot \vec{\Sigma}\}\cdot \sigma_Y\otimes \sigma_Y,
\end{align} 
the spectrum of which uniquely determines the Weyl chamber coordinates $\vec{\eta}$.

To make this rigorous, we recall the \emph{magic basis}
\begin{align}
\mathcal{M}:=\frac{1}{\sqrt{2}}\begin{bmatrix}1&0&0&i\\0&i&1&0\\0&i&-1&0\\1&0&0&-i\end{bmatrix}.
\end{align}
The KAK decomposition can be understood under the magic basis as finding $O_1,O_2\in SO(4)$ such that $\mathcal{M}^\dag U \mathcal{M}=O_1DO_2$ where $D$ is a diagonal matrix containing information of the Weyl Chamber coordinates.

Going to the magic basis, for $U=U^T$ we have
\begin{align}
(\mathcal{M}^\dag U \mathcal{M})^T = \mathcal{M}^T U \bar{\mathcal{M}}.
\end{align}
Let 
\begin{align}
D' = \begin{bmatrix}1&&&\\&-1&&\\&&1&\\&&&-1\end{bmatrix}.
\end{align}
One can verify that $\mathcal{M} D'=\bar{\mathcal{M}}$; therefore $\mathcal{M}^\dag U \mathcal{M}D'=\mathcal{M}^\dag U \bar{\mathcal{M}}$ is a symmetric unitary, meaning that it can be diagonalized by an orthogonal matrix:
\begin{align}
\mathcal{M}^\dag U \mathcal{M}D'=O D O^T\sim D.
\end{align}
Translating this back to the computational basis, we have $\mathcal{M}D'=(\sigma_Y\otimes \sigma_Y)\mathcal{M}$. Thus $U \cdot (\sigma_Y\otimes \sigma_Y)\sim D$.

For the rest of the note, we study the two-qubit gate
\begin{align}
    &V(a,b,c;\tau;\Omega_1,\Omega_2, \delta) \nonumber\\
    &:=\exp\{-i\tau(H[a,b,c] + \Omega_1(\sigma_X\otimes I + I\otimes \sigma_X) \nonumber\\
    &\quad + \Omega_2(\sigma_X\otimes I - I\otimes \sigma_X) \nonumber\\
    &\quad + \delta(\sigma_Z\otimes I + I\otimes \sigma_Z))\}\cdot (\sigma_Y\otimes \sigma_Y).
\end{align}

\subsection{Canonicalization for genAshN-ND}
\label{sec:nd_canon}
The Hamiltonian corresponding to the genAshN-ND scheme can be explicitly exponentiated. To see this more clearly, we conjugate the X- and Z-bases:
\begin{align}
    H_{ND}&=(H\otimes H) (H[a,b,c]+\Omega_1(\sigma_X\otimes I +  I\otimes \sigma_X) \nonumber\\
    &\quad +\Omega_2(\sigma_X\otimes I -  I\otimes \sigma_X))(H\otimes H) \nonumber\\
    &=\begin{pmatrix}a + 2\Omega_1 & && c-b\\&-a+2\Omega_2 & b+c&\\&b+c&-a-2\Omega_2&\\c-b&&&a-2\Omega_1 \end{pmatrix}.
\end{align}

This is a $2+2$ block-diagonal matrix which can be explicitly diagonalized. We then have
\begin{align}
    &V(a,b,c;\tau;\Omega_1,\Omega_2,0) \nonumber\\
    &\sim(H\otimes H)V(a,b,c;\tau;\Omega_1,\Omega_2,0)(H\otimes H) \nonumber\\
    &=\exp\{-i H_{ND} \tau\}(\sigma_Y\otimes \sigma_Y).
\end{align}
Define $S_1:=\sqrt{4\Omega_1^2+(b-c)^2}$, $S_2:=\sqrt{4\Omega_2^2+(b+c)^2}$. This can be written in block form (with $V_{ij}$ denoting entries):
\begin{align}
    V_{11} &= \frac{ie^{-i a\tau}(c-b)\sin S_1\tau}{S_1}, \nonumber\\
    V_{14} &= \frac{e^{-ia\tau}(-S_1\cos S_1\tau+i\Omega_1\sin S_1\tau)}{S_1}, \nonumber\\
    V_{22} &= \frac{-ie^{ia\tau}(b+c)\sin S_2\tau}{S_2}, \\
    V_{23} &= \frac{e^{ia\tau}(S_2\cos S_2\tau-i\Omega_2\sin S_2\tau)}{S_2}, \nonumber
\end{align}
with $V_{41}=\bar{V}_{14}$, $V_{44}=V_{11}$, $V_{32}=\bar{V}_{23}$, $V_{33}=V_{22}$. The spectrum of $V$ is $\{ie^{-i(at\pm \theta_1)}, -ie^{i(at\pm \theta_2)}\}$, where $\cos\theta_1 := \frac{(c-b)\sin(S_1t)}{S_1}$ and $\cos\theta_2 := \frac{(b+c)\sin S_2t}{S_2}$. It can be verified that such a gate corresponds to Weyl chamber coordinates $(at, y, z)$, where
\begin{align*}
\sin(y+z)=\frac{(b+c)\sin S_2t}{S_2}, \, \sin(y-z)=\frac{(b-c)\sin S_1t}{S_1}.
\end{align*}

It suffices to prove that the polygon $ND(t)$ can be spanned by choosing appropriate $S_1\geq b-c$ and $S_2\geq b+c$.

\subsection{Proof of \Cref{lem:nd}}
\label{sec:nd_proof}
\begin{proof}
Letting $u = y+z$, $v = y-z$. We claim that $u\in[0, \min\{(b+c)\tau, \pi-(b+c)\tau\}]$. This is because
\begin{align}
    u=y+z\leq 2y\leq \pi - 2x=\pi - 2a\tau\leq \pi - (b+c)\tau.
\end{align}
Similarly, $v\in[0, \min\{(b-c)\tau, \pi-(b-c)\tau\}]$. 
We then only need to prove that the range of $(b-c)\sin(S_1\tau)/S_1$ for $S_1\ge b-c$ covers $[0, \sin((b-c)\tau)]$ and the range of $(b+c)\sin(S_2\tau)/S_2 $ for $S_2\ge b+c$ covers $[0, \sin((b+c)\tau)]$, both of which are obvious.
\end{proof}

\subsection{Canonicalization \& Reparameterization for genAshN-EA}
\label{sec:ea_canon}

For the genAshN-EA scheme, we first rescale the Hamiltonian such that $c = a-1$ and denote $\eta:=a-b\in[0,1]$. The only case where such a canonicalization cannot be done is on the line $a=b=c$, but a closer inspection shows that $EA_+(a,a,a;\tau)=\{(a\tau,a\tau,a\tau)\}$ only consists of one point, in which case \Cref{lem:ea} automatically holds for $\Omega, \delta = 0$. Note also that this rescaling implies $a \geq -c = 1-a$, so $ a \geq \frac 1 2$. 
Furthermore, one can show that with such a rescaling, the highest value $\tau_\text{opt}$ can take over the Weyl chamber is $\pi$:
\begin{proof}
    Suppose $\tau_{opt}=\min\{\tau_1,\tau_2\}>\pi$, i.e. $\tau_1,\tau_2>\pi$. For this to happen it must be that $(a+b+c)\tau_1=x+y+z$ and $(a+b+c)\tau_2=\pi/2-x+y-z$, as other inequalities cannot be saturated  with $a \geq 1/2$ and $(x,y,z)\in W_{ext}$. However this is also impossible as it leads to 
    \begin{align}
        \pi < (a+b+c)(\tau_1+\tau_2)=\pi/2+2y\leq \pi.
    \end{align}
\end{proof}

The Hamiltonian reads
\begin{align}
    H_{EA}&:=a(\sigma_X\otimes \sigma_X + \sigma_Y\otimes \sigma_Y + \sigma_Z\otimes \sigma_Z) \nonumber\\
    &\quad -\eta\cdot \sigma_Y\otimes \sigma_Y-\sigma_Z\otimes \sigma_Z \nonumber\\
    &\quad + \Omega(\sigma_X\otimes I + I\otimes \sigma_X) \nonumber\\
    &\quad + \delta(\sigma_Z\otimes I + I\otimes \sigma_Z).
\end{align}

We start by observing that all terms have a common eigenvector $(0,1,-1,0)$, and it is consequently an eigenvector of $H_{EA}$ with eigenvalue $1+\eta-3a$. The other three eigenvalues are difficult to solve for explicitly, making it difficult to directly exponentiate the Hamiltonian.\footnote{Technically such an explicit exponentiation can be done as the remaining characteristic polynomial is only cubic. We proceed with another approach that will prove to be simpler.} We instead look at the remaining characteristic polynomial:
\begin{align}
    P(\lambda):=\frac{\det(H_{EA}-\lambda I)}{\lambda + 3a-1-\eta}.
\end{align}

$P$ is a cubic polynomial with leading coefficient $+1$; moreover it can be verified that $P(a+1-\eta)=-8\Omega^2\leq 0$, and $P(a-1-\eta)=8\delta^2\geq 0$. This indicates that the three real roots of $P$ lies in $(-\infty,a-1-\eta],[a-1-\eta,a+1-\eta]$ and $[a+1-\eta,+\infty)$ respectively. Moreover, the three roots should sum up to $3a-1-\eta$ as $H_{EA}$ is traceless. We can uniquely reparameterize the eigenvalues by $a+\eta-1-2(\alpha+\beta),a-1-\eta+2\alpha, a+1-\eta+2\beta$, where $\alpha\in[0,1],\beta\geq 0,\alpha+\beta\geq \eta$. We denote the set $Q_\eta:=\{(\alpha,\beta)\in\mathbb{R}^2|\alpha\in[0,1],\beta\geq 0,\alpha+\beta\geq \eta\}$. The last constraint is to ensure that $a+\eta-1-2(\alpha+\beta)\leq a-1-\eta$. Inverting the parameterization by looking at the coefficients of $P(\lambda)$ gives $\Omega= \sqrt{(1-\alpha)\beta(1-\eta+\alpha+\beta)}$ and $\delta= \sqrt{\alpha(1+\beta)(\beta+\alpha-\eta)}$. Since $\Omega, \delta$ can take any non-negative value, this parameterization constitutes a bijection.

\subsection{Correspondence between $(\alpha,\beta)$, $T$ and $(x,y,z)$}
\label{sec:maps}

Given the aforementioned reparameterization, the eigenspectrum of $H_{EA}$ can be determined explicitly. This allows us to derive the trace of $V(a, a-\eta, a-1; \tau; \Omega, 0, \delta)$ as a function of the parameters, where $\Omega = \sqrt{(1-\alpha)\beta(1-\eta+\alpha+\beta)}$ and $\delta = \sqrt{\alpha(1+\beta)(\beta+\alpha-\eta)}$:
\begin{align}
    T(a,\eta;\tau;\alpha,\beta)&:=e^{ia\tau}(\mathrm{Tr}[V]+\exp\{-i\tau(1+\eta-3a)\}).
    \label{eqn:deft}
\end{align}
We claim that showing the range of $T$ covers a certain set $S \subseteq \mathbb{C}^2$ is sufficient to prove~\cref{lem:ea}. To see this, we investigate the correspondence between the eigenvalue parameters $(\alpha,\beta)$, trace quantity $T$ and the Weyl chamber coordinates $(x,y,z)$.
Let us take a step back to revisit what we want to prove: We want to show that the Weyl chamber coordinates of $V(a,a-\eta,a-1;\tau;\Omega,0,\delta)$ maps the pair $(\Omega,\delta)$ surjectively onto $EA_+(a,a-\eta,a-1;\tau)$. To prove this, we want to find a series of three surjective maps that compose to this particular map:
\begin{align*}
(\Omega,\delta)\in\mathbb{R}^2_{\geq 0} \xrightarrow{\phi_1} (\alpha,\beta)\in Q_\eta\xrightarrow{\phi_2} T\in S \xrightarrow{\phi_3} (x,y,z)\in EA_+.
\end{align*}
We have already worked out $\phi_1$, showing that it is a bijection. The rest of the proof focuses on finding $\phi_2,\phi_3$ as well as the set $S\subseteq\mathbb{C}^2$ such that surjectivity holds.

The map $\phi_2$ is directly defined by~\Cref{eqn:deft}, but the map $\phi_3$ is less obvious. To see why $T$ determines the Weyl chamber coordinate $(x,y,z)$ of $V$, it suffices to show that the spectrum of $V$ can be recovered from $T$. Since $V\in SU(4)$, the four eigenvalues have the form $-\exp\{-i\tau(1+\eta-3a)\}, x_1,x_2,x_3$, where
\begin{align}
\begin{cases}x_i^{-1}=\bar{x}_i, i=1,2,3,\\x_1x_2x_3 = -\exp\{i\tau(1+\eta-3a)\},\\ x_1+x_2+x_3 = e^{-ia\tau}T.\end{cases}
\end{align}

This indicates that $x_1,x_2,x_3$ are the three roots of the cubic polynomial
\begin{align}
    &(x-x_1)(x-x_2)(x-x_3) \nonumber\\
    &=x^3 - (x_1+x_2+x_3)x^2 \nonumber\\
    &\quad + (x_1x_2+x_1x_3+x_2x_3)x - x_1x_2x_3 \nonumber\\
    &=x^3 - e^{-ia\tau}T x^2 -\exp\{i\tau(1+\eta-3a)\} \nonumber\\
    &\quad \times ((x_3^{-1}+x_2^{-1}+x_1^{-1})x-1) \nonumber\\
    &=x^3 - e^{-ia\tau}T x^2 -\exp\{i\tau(1+\eta-3a)\} \nonumber\\
    &\quad \times (e^{ia\tau}\bar{T}x-1),
\end{align}
thus establishing $\phi_3$.

We prove the surjectivity of $\phi_2$ and $\phi_3$ using the following respective approaches:

\begin{itemize}
    \item For $\phi_2$, we prove surjectivity using continuity: Given that the map $\phi_2$ is continuous in a sense we will define, we take a contour $C\subseteq Q_\eta$ in the domain and inspect its image $\phi_2(C)$. As the map is continuous, the image of the region enclosed by $C$ must contain the region enclosed by $\phi_2(C)$, which we show contains $S$.
    \item For $\phi_3$, we look at the fibers $\phi_3^{-1}(x,y,z)$. Each $(x,y,z)$ gives rise to a finite number of choices of the spectrum of $V$ due to different possible canonicalizations, which in turn gives rise to a finite number of $T$. We fix one particular canonicalization 
    \begin{align}
    (x,y,z) \mapsto \bigl( & -e^{i(x+y+z)}, e^{i(x-y-z)}, \nonumber \\
    & -e^{i(-x+y-z)}, e^{i(-x-y+z)} \bigr)
    \end{align}
    such that $\phi_3$ can be made invertible. Surjectivity is proven by verifying that $\phi_3^{-1}(EA_+)\subseteq S$.
\end{itemize}

\subsection{Proof of \Cref{lem:ea}}
\label{sec:ea_proof}
\subsubsection{Continuity of $\phi_2$}
Let $\gamma:=1/(1+\beta)$ and define $TG(\alpha,\gamma):=\phi_2(\alpha,\beta)$. We show a stronger notion of continuity for $\phi_2$, that is,
{\small
\begin{align}
TG &= \frac{\gamma  (\alpha  \gamma  (\eta -1)+\alpha -\gamma  \eta ^2+\gamma +\eta -1)}{(\gamma  (\alpha -\eta -1)+2) (\gamma  (2 \alpha -\eta -1)+1)} \nonumber \\
& \quad \times e^{-i \tau  (-2 \alpha -2/\gamma +\eta +1)} \nonumber \\
& \,+ \frac{\gamma  (\alpha  \gamma  (\alpha -\eta -1)+\alpha +\eta -1)}{(\alpha  \gamma -1) (\gamma  (\alpha -\eta -1)+2)} \nonumber \\
& \quad \times e^{i \tau  (\gamma  \eta +\gamma -2)/\gamma} \nonumber \\
& \,+ \frac{(\gamma  (\alpha  (\gamma  (-\eta )+\gamma -1)+\eta +1)-1)}{(\alpha  \gamma -1) (\gamma  (2 \alpha -\eta -1)+1)} \nonumber \\
& \quad \times e^{i \tau  (-2 \alpha +\eta +1)} \label{eq:TG}
\end{align}
}
is continuous in the region $U_\eta$ defined by
\begin{align}
  0 \leq \alpha, \gamma \leq 1  \label{eq:cond_1}\\
  \gamma(\alpha-1)+1\geq \eta\gamma \label{eq:cond_2}.
\end{align}

This shows that $\phi_2$ is not only continuous in the interior of $Q_\eta$, it is continuous at the boundary as well, especially when $\beta\rightarrow\infty$. This allows us to take a contour with $\beta\rightarrow\infty$. 

Now according to the expression for $TG$ in~\cref{eq:TG}, we can only have problems with continuity if one or more of the following equations hold:
\begin{align}
    \gamma(\alpha-\eta-1)+2&=0, \nonumber\\
    \gamma(2\alpha-\eta-1)+1&=0, \\
    \alpha\gamma-1 &= 0. \nonumber
\end{align}
When $\gamma = 0$, none of these equations can be satisfied. When $\gamma > 0$, the equations can be translated to 
\begin{align}
    \alpha + 2/\gamma -1 &= \eta, \nonumber\\
    2\alpha+1/\gamma -1 &= \eta, \\
    \alpha &= 1/\gamma. \nonumber
\end{align}
Since $1/\gamma > 0$, the first equality is never satisfied because of~\cref{eq:cond_2}. Assume $0 < \alpha < 1$. Then, the second equality is never satisfied due to~\cref{eq:cond_2}. Furthermore, the third equality is never satisfied since $1/\gamma \in [1, \infty)$. The only two limits we need to compute are therefore
\begin{align}
    \alpha \to 0,\, \gamma \to 1/(1+\eta) \quad \text{and} \quad \alpha \to 1,\, \gamma \to 1
\end{align}
given $\gamma >0$. We compute the first limit. Let $\epsilon >0$ and $h\in \mathbb{R}$ be small. We can show 
\begin{align}
    &TG \vert_{\alpha = \epsilon, \gamma = 1/(1+\eta) + h} \nonumber\\
    &=  e^{i\tau(1+\eta)} \frac{2 (\eta +1) (\eta  (h-1)+h)}{(\eta +(1+\eta) h (-\eta +\epsilon -1)+\epsilon +1)} \nonumber\\
    &\quad \times \frac{1}{(-\eta +\epsilon  (\eta  h+h+1)-1)} \nonumber\\
    &\quad + e^{-i\tau(1+\eta)} O(1) + O(\epsilon +h).
\end{align}
Next, we compute the second limit. Let $\epsilon, \epsilon' > 0$. We can show
\begin{align}
    &TG \vert_{\alpha = 1-\epsilon, \gamma = 1-\epsilon'} \nonumber\\
    &= -e^{-i\tau(-4+\eta+1)} \frac{(\epsilon '-1) ((\eta -1) (\eta +\epsilon ) \epsilon '-\eta  (\eta +\epsilon -2))}{(-\eta +(\eta +\epsilon ) \epsilon '-\epsilon +2)} \nonumber\\
    &\quad \times \frac{1}{(-\eta +(\eta +2 \epsilon -1) \epsilon '-2 \epsilon +2)} \nonumber\\
    &\quad + e^{-i\tau(1-\eta)} \frac{2 (-\eta +(\eta +\epsilon -1) \epsilon '-\epsilon +2) }{(-\eta +(\eta +\epsilon ) \epsilon '-\epsilon +2)} \nonumber\\
    &\quad \times \frac{(-\eta +(\eta +\epsilon ) \epsilon '-\epsilon +1)}{(-\eta +(\eta +2 \epsilon -1) \epsilon '-2 \epsilon +2)} + O(\epsilon+h).
\end{align}
Since $\eta \in [0,1]$, the denominators for both of these expressions do not vanish as $\epsilon, h, \epsilon' \to 0$. Thus, $TG$ converges to its corresponding value throughout $U_\eta$ and is thus continuous on $U_\eta$.


\subsubsection{Surjectivity of $\phi_2$}
\label{subsec:phi2_surj}
When $\alpha,\gamma$ traverses along the boundary of $U_\eta$, we claim $TG$ encloses $S$, a region of the complex plane enclosed by the arc around the unit circle from $D:= e^{i (\eta-1) \tau }$ to $C := e^{i (\eta+1) \tau }$ by increasing the complex argument by $2\tau$, denoted $\arc{DC}$, and the two line segments $\overline{AC}$ and $\overline{AD}$, where $A:= e^{i\tau(\eta-1)} + e^{i\tau(\eta+1)} - e^{i\tau(1-\eta)}$. This defines the boundary of $S$. For clarity we will 
explicitly define the interior of $S$. We can do this by splitting into cases: 

\begin{enumerate}
    \item If $2\tau = 0$, $A = C = D$ and $S := \{1\}$.
    \item If $0< 2\tau < 2\pi$, we can show that each line segment can only intersect $\arc{DC}$ once: For any point $E$, the line segment $\overline{CE}$ intersects with $\arc{DC}$ only if 
    the angle of $\overline{EC}$ lies strictly within the angle of $\overline{DC}$ and the angle of the tangent line at $C$, that is, $[\eta\tau-\pi/2, (\eta + 1)\tau - \pi/2]$. We have $\overline{CA} = -2i\sin((1-\eta)\tau)$ and the angle is $-\pi/2$, which is not strictly in this range. A similar argument applies to the point $D$, where the range is $[(\eta-1)\tau+\pi/2, \eta\tau+\pi/2]$ and the angle of $\overline{DA}$ is $\tau + \pi/2$. Thus, the intersection of $\triangle ADC$ with the segment of the circle corresponding to $\arc{DC}$ is exactly $\overline{DC}$. We then define $S$ as the region enclosed by $\triangle ADC$ but with $\overline{DC}$ replaced by $\arc{DC}$.
    \label{it:small_angle}
    \item If $2\tau = 2\pi$, $C = D$ and we define $S = \overline{CA} \cup \overline{D^1}$, where $\overline{ D^1} $ is the closed unit disk.
\end{enumerate}

Now, we compute the value of $TG$ at the edges:
\begin{align*}    
    TG\vert_{\alpha=0} &= \frac{e^{A_1}\bigl[\gamma(\eta-1)(1-e^{-2A_1}) + (\gamma\eta+\gamma-2)e^{A_2}\bigr]}{\gamma\eta+\gamma-2}, \\
    TG\vert_{\alpha=1} &= \frac{e^{-A_3}\bigl[\gamma\eta e^{A_4} + (\gamma\eta-2)e^{A_5} - \gamma\eta\bigr]}{\gamma\eta-2}, \\
    TG\vert_{\gamma=1} &= -\frac{e^{i(\eta-1)\tau}\bigl[2\alpha + \eta\bigl(e^{2i\tau(\alpha-\eta+1)} - e^{-2i(\alpha-1)\tau} - 1\bigr)\bigr]}{\eta-2\alpha}, \\
    TG\vert_{\gamma=0} &= e^{i\tau(-2\alpha+\eta+1)}, \\
    TG\vert_{\alpha=-\frac{1}{\gamma}+1+\eta} &= \frac{e^{A_1}\bigl[\gamma(\eta-1)(1-e^{-2A_1}) + (\gamma\eta+\gamma-2)e^{A_2}\bigr]}{\gamma\eta+\gamma-2},
\end{align*}
where $A_1 = \frac{i\tau(\gamma\eta + \gamma - 2)}{\gamma}$, $A_2 = \frac{2i\tau}{\gamma}$, $A_3 = \frac{i\tau[\gamma(\eta-1)-2]}{\gamma}$, $A_4 = \frac{2i\tau(\gamma\eta - 2)}{\gamma}$, and $A_5 = \frac{2i\tau[\gamma(\eta-1)-1]}{\gamma}$.
We consider the contour $(\eta,1) - (0,1/(1+\eta)) - (0, 0) - (1,0) - (1,1) - (\eta,1)$ around $U_\eta$. We compute
\begin{align*}
    TG \vert_{(\alpha,\gamma) = (\eta,1)} &= e^{i\tau(\eta-1)} + e^{i\tau(\eta+1)} - e^{i\tau(1-\eta)} = A, \nonumber\\ 
    B:= TG \vert_{(\alpha,\gamma) = (0,1/(1+\eta))} &= e^{-i\tau(1+\eta)} \frac{1-\eta}{1+\eta} + e^{i\tau(1+\eta)} \frac{2\eta}{1+\eta}, \nonumber\\
    TG \vert_{(\alpha,\gamma) = (0,0)} &= e^{i\tau(\eta+1)}  = C, \\
    TG \vert_{(\alpha,\gamma) = (1,0)} &= e^{i\tau(\eta-1)} =D, \nonumber\\
    E:= TG \vert_{(\alpha,\gamma) = (1,1)} &= e^{i\tau(\eta-1)} \frac{2(1-\eta)}{2-\eta}  +e^{-i\tau(\eta-3)} \frac{\eta}{2-\eta}. \nonumber
\end{align*}
We claim the image of the contour contains the boundary of $S$. In particular, $A,B,C$ and $A,E,D$ each lie on a straight line. First, we expand the expression for $TG$ when $\alpha = 0$:
\begin{align}
    TG \vert_{\alpha =0} &= -e^{-i \tau  (\eta +1 -2/\gamma)} \frac{\gamma(\eta-1)}{\gamma(\eta+1)-2} + e^{i \tau  (\eta +1) } \nonumber\\
    &\quad + e^{i \tau  (\eta +1 -2/\gamma)}  \frac{\gamma(\eta-1)}{\gamma(\eta+1)-2} \nonumber\\
    &= e^{i \tau  (\eta +1) } + 2(\eta-1)\frac{\sin[\tau ( \eta +1 -2/\gamma)]}{\eta+1-2/\gamma}\times i,
\end{align}
where we take $\gamma >0$ to simplify the expression (We already know by continuity that it goes to the desired value when $\gamma \to 0$.). Hence, in this case $TG$ lies on a vertical line and contains $C$ and $B$ when we vary $\gamma \in [0, 1/(1+\eta)]$. Thus by continuity it must contain $\overline{BC}$. Furthermore, $TG \vert_{\alpha= -1/\gamma +1 +\eta}$ has the same expression, and thus it lies on a vertical line and contains $B$ and $A$ when we vary $\gamma \in [1/(1+\eta), 1]$. Thus, it contains $\overline{AB}$. Next, we expand the expression for $\gamma=1$:
\begin{align}
    TG \vert_{\gamma=1} = e^{i\tau(\eta-1)} +\frac{2\eta}{\eta-2\alpha}\sin[\tau(\eta-2\alpha)] \times i e^{i\tau}.
\end{align}
Thus, if we vary $\alpha \in [\eta,1]$, we lie on a line segment that contains $A$ and $E$ and thus by continuity contains the line segment $\overline{EA}$. We next consider $\alpha=1$:
\begin{align}
    TG \vert_{\alpha=1} =   e^{i\tau(\eta-1)} + \frac{2\gamma\eta}{\gamma\eta-2}\sin[\tau(\eta-2/\gamma)]\times i e^{i\tau},
\end{align}
where again we take $\gamma >0$. This is again a straight line segment and contains $E$ and $D$ as we vary $\gamma \in [0,1]$. Hence, it contains the line segment $\overline{DE}$. Furthermore, we see that for both $\gamma=1$ and $\alpha=1$, $TG$ takes the form
\begin{align}
    e^{i\tau(\eta-1)} + r ie^{i\tau},
\end{align}
for some variable $r \in \mathbb{R}$ and thus $\overline{EA}$ and $\overline{DE}$ lie on the same line. Finally, from the expression for $\gamma=0$, $TG$ moves from $D$ to $C$ spanning angle $2\tau$ as we range $\alpha$ from $1$ to $0$. The claim thus follows.

Since $\phi_2$ is continuous, it maps a compact set $U_\eta$ in $\mathbb R^2$ to a compact set in $\mathbb C$. We consider the following cases
\begin{enumerate}
    \item $2\tau =0$. Then $S$ is simply $\{1\}$ and $\gamma = 0$ is sufficient to achieve this point. 
    \item When $0 < 2\tau < 2\pi$, $\phi_2(\partial U_\eta)$ is the boundary of a triangle with one of edges replaced by an arc. Thus, it is clear 
    $S$ is the only compact set with boundary $\phi_2(\partial U_\eta)$. 
    \item When $2\tau = 2\pi$, $\phi_2(\partial U_\eta) = \overline{CA} \cup S^1$, where $S^1$ is the unit circle. It is clear the only compact set with this boundary is $S = \overline{CA} \cup \overline{D^1}$. 
\end{enumerate}
Hence $\phi_2$ is surjective.

\subsubsection{Surjectivity of $\phi_3$}
Now, we know $\tau \in [0,\pi]$. We claim that the image of the Weyl chamber region $W\supseteq EA_+$ defined as the polygon $PQRM$, where
\begin{align}
    P &:=(a\tau, (a-\eta)\tau,(a-1)\tau), \nonumber\\
    Q &:= (a \tau, (a-(1+\eta)/2)\tau, (a-(1+\eta)/2)\tau), \nonumber\\
    R &:= ((a-(1+\eta)/3)\tau, (a-(1+\eta)/3)\tau, \nonumber\\
    &\quad (a-(1+\eta)/3)\tau), \\
    M &:= ((a-\eta/2)\tau, (a-\eta/2)\tau, (a-1)\tau), \nonumber
\end{align}
via $\phi_3^{-1}$ (with respect to our fixed canonicalization) is exactly $S$. Now, $\phi_3^{-1}$ maps a coordinate $(x,y,z)$ in the Weyl chamber to the complex number
\begin{align}
\label{eq:weyl_trace}
    e^{i\tau a} (e^{i(x-y-z)} - e^{i(-x+y-z)} + e^{i(-x-y+z)} ).
\end{align}
Thus,
\begin{align}
   \phi_3^{-1}: P &\mapsto e^{i\tau(\eta+1)} - e^{i\tau(1-\eta)} + e^{i\tau(\eta-1)} = A, \nonumber\\
   Q &\mapsto e^{i\tau(\eta+1)} - 1 + 1 = e^{i\tau(\eta+1)} = C, \nonumber\\
   R &\mapsto e^{i\tau(\eta+1)/3}, \\
   M &\mapsto e^{i\tau} - e^{i\tau} + e^{i\tau(\eta-1)} = e^{i\tau(\eta-1)} = D. \nonumber
\end{align}
We consider the image of $\overline{PQ}$ parameterized by $(a\tau, (a-\eta+t(\eta-1)/2)\tau, (a-1-t(\eta-1)/2)\tau)$, $t\in[0,1]$:
\begin{align}
    & e^{i\tau(\eta-t(\eta-1)/2+1+t(\eta-1)/2)} \nonumber\\
    &\quad - e^{i\tau(-\eta+t(\eta-1)/2 +1+t(\eta-1)/2)} \nonumber\\
    &\quad + e^{i\tau(\eta-t(\eta-1)/2 -1 -t(\eta-1)/2)} \nonumber\\
    &= e^{i\tau(\eta+1)} - e^{i\tau(t-1)(\eta-1)} + e^{-i\tau(t-1)(\eta-1)} \nonumber\\
    &= e^{i\tau(\eta+1)} - 2i \sin[\tau(t-1)(\eta-1)].
\end{align}
This is clearly $\overline{AC}$ since $\tau \in [0, \pi]$ and so the second term cannot change sign. We next consider the image of $\overline{PM}$, parameterized by $((a-t\eta/2)\tau, (a-\eta+t\eta/2)\tau, (a-1)\tau)$, $t \in [0,1]$:
\begin{align}
    & e^{i\tau(-t\eta/2+\eta-t\eta/2+1)} - e^{i\tau(t\eta/2-\eta+t\eta/2+1)} \nonumber\\
    &\quad + e^{i\tau(t\eta/2+\eta-t\eta/2-1)} \nonumber\\
    &= e^{i\tau(\eta+1-t\eta)} - e^{i\tau(1-\eta+t\eta)} + e^{i\tau(\eta-1)} \nonumber\\
    &= e^{i\tau(\eta-1)} + 2\sin[\tau\eta(1-t)] \times i e^{i\tau}.
\end{align}
This is clearly $\overline{AD}$. We next look at the image of $\overline{QR}$, parameterized by $((a-t(1+\eta)/3)\tau, (a-(1+\eta)/2 + t(1+\eta)/6)\tau, (a-(1+\eta)/2 + t(1+\eta)/6)\tau)$, $t\in[0,1]$:
\begin{align}
    &e^{i\tau(-2t(1+\eta)/3 +(1+\eta))} - e^{i\tau(t(1+\eta)/3)} + e^{i\tau(t(1+\eta)/3)} \nonumber\\
    &= e^{i\tau(-2t(1+\eta)/3 +(1+\eta))}.
\end{align}
We also look at the image of $\overline{RM}$, parameterized by $((a-(1+\eta)/3+t(2-\eta)/6)\tau, (a-(1+\eta)/3+t(2-\eta)/6)\tau,(a-(1+\eta)/3-t(2-\eta)/3)\tau)$, $t\in[0,1]$:
\begin{align}
    &e^{i\tau((1+\eta)/3+t(2-\eta)/3)} - e^{i\tau((1+\eta)/3+t(2-\eta)/3)} \nonumber\\
    &\quad + e^{i\tau((1+\eta)/3-2t(2-\eta)/3)} \nonumber\\
    &= e^{i\tau((1+\eta)/3-2t(2-\eta)/3)}.
\end{align}
Hence, we can see explicitly that this traces out $\arc{DC}$. Therefore $\partial W \mapsto \partial S$. Now, $W$ is a compact set in $\mathbb{R}^3$. Since~\cref{eq:weyl_trace} is continuous, the image of $W$ is a compact set in $\mathbb{C}$. The rest of the argument is similar to that of~\cref{subsec:phi2_surj}. The claim thus follows.

\end{document}
\endinput